\def\a{\alpha} \def\b{\beta} \def\d{\delta} 
    \def\g{\gamma}
\def\G{\Gamma}  
\def\ep{\epsilon}
\newcommand{\Let}{\leftarrow}
\newcommand{\fail}{\mbox{\it fail}}
\newcommand{\notoken}{\emptyset}
\newcommand{\true}{\mbox{\it true}}
\newcommand{\false}{\mbox{\it false}}
\newcommand{\vote}{\mbox{\it vote}}
\newcommand{\doubled}{\mbox{\it doubled}}
\newcommand{\Time}{\mbox{\it time}}
\newcommand{\done}{\mbox{\it done}}
\newcommand{\phase}{\mbox{\it phase}}
\newcommand{\step}{\mbox{\it step}}
\newcommand{\Next}{\mbox{\it next}}
\newcommand{\consistent}{\mbox{\it Consistent}}
\newcommand{\polylog}{\mbox{\rm polylog}\,}
\newcommand{\Ph}{{P}}
\newcommand{\It}{{\tau}}
\def\E{\mathbf{E}}
\def\Pr{\mbox{{\bf Pr}}}
\def\whp{{w.h.p.}}
\def\Whp{{W.h.p.}}
\def\Exp{\E}
\newcommand{\beq}[1]{\begin{equation}\label{#1}}
\newcommand{\eeq}{\end{equation}}
\newenvironment{proof}{\trivlist\item[]\emph{Proof}.}%
{\unskip\nobreak\hskip 1em plus 1fil\nobreak$\Box$
\parfillskip=0pt%
\endtrivlist}
\newcommand{\ignore}[1]{}
\newtheorem{theorem}{Theorem}
\newtheorem{lemma}{Lemma}
\newtheorem{corollary}{Corollary}
\newcommand{\etal}{{\it et al.}}
\author{
Andreas Bilke\thanks{Department of Computer Sciences, University of Salzburg, Austria.
{\tt abilke@cosy.sbg.ac.at}}
\and Colin Cooper\thanks{Department of Informatics, King's College London, UK.
{\tt colin.cooper@kcl.ac.uk}}
\and Robert Els\"asser\thanks{Department of Computer Sciences, University of Salzburg, Austria.
{\tt elsa@cosy.sbg.ac.at}}
\and Tomasz Radzik\thanks{Department of Informatics, King's College London, UK.
{\tt tomasz.radzik@kcl.ac.uk}}
}
\title{Population protocols for leader election and exact majority with $O(\log^2 n)$ states
and $O(\log^2 n)$ convergence time\thanks{%
Work supported in part by 
EPSRC grant EP/M005038/1, ``Randomized algorithms for computer networks''.
}
}
\date{February 15, 2017}
\begin{document}

\maketitle

\begin{abstract}
We consider the model of population protocols, which can be viewed as a sequence of random pairwise interactions of $n$ agents (nodes).
During each interaction, two agents $v$ and $w$, selected uniformly at random, update their states on the basis of their current states, 
and the whole system should in long run converge towards a desired global final state.
The main question about any given population protocol is whether it converges to the final state and if so, what is the convergence rate.

In this paper we show population protocols for two problems: the leader election and the exact majority voting. 
The leader election starts with all agents in the same initial state and the goal is to converge to 
the (global) state when exactly one agent is in a distinct state $L$.
The exact majority voting starts with each agent in one of the two distinct states $A$ or $B$ and the goal is
to make all nodes know which of these two states was the initial majority state, even if that majority was 
just by a single vote.
 
Doty and Soloveichik [DISC 2015] showed that any population protocol for leader election 
requires expected linear (parallel) time, defined as the number of interactions divided by $n$,
if agents have only constant number of states.
Alistarh and Gelashvili [ICALP 2015]  showed a leader-election protocol which converges in $O(\log^3 n)$ time
\whp\ and in expectation
and needs $\Theta(\log^3 n)$ states per agent.
We present a protocol which elects the leader in $O(\log^2 n)$ time \whp\ and in expectation
and uses $\Theta(\log^2 n)$ states per agent.
For the exact majority voting, we show a population protocol with the same asymptotic performance: $O(\log^2 n)$ time 
and $\Theta(\log^2 n)$ states per agent.
The exact-majority protocol proposed by Alistarh~\etal~[PODC 2015] achieves expected $O(\log^2 n)$ time, 
but requires either the initial imbalance between $A$'s and $B$'s of $\Omega(n / polylog\, n)$
or $\Omega(n/\log n)$ states per agent.
(Their protocol can achieve also $O(\log n)$ time, but with $\Omega(n)$ states per agent.)
More recently, Alistarh~\etal~[SODA 2017]
showed $O(\log^2 n)$-state protocols for both problems,
with the exact majority protocol  converging
in time
$O(\log^3 n)$ \whp\ and in expectation,
and the leader election protocol converging in 
time $O(\log^{6.3} n)$ \whp\ and 
$O(\log^{5.3} n)$ in expectation.

Our leader election and exact majority protocols are based on the idea of agents counting their local interactions
and rely on the probabilistic fact that the uniform random selection would limit the divergence of the individual counts.  
\end{abstract}

%
%
%

\thispagestyle{empty}

\newpage

\setcounter{page}{1}

\section{Introduction}

We consider population protocols~\cite{DBLP:journals/dc/AngluinADFP06} 
for leader election and exact majority voting. 
A {\em population protocol\/} specifies how two {\em agents}, or {\em nodes},
change their states when they interact.
The computation of a protocol is a (perpetual) 
sequence of interactions between two nodes.
The system consists of $n$ nodes and a {\em scheduler} which keeps selecting pairs of nodes for interaction.
The objective is that the whole system eventually stabilizes in (converges to) a configuration which 
has some desired target property.
In the general case, the nodes can be connected according to a specified graph $G=(V,E)$ 
and two nodes can interact only if they are joined by an edge. 
Following the scenario considered in the majority of previous work 
on population protocols, we assume 
the complete communication graph and the random uniform
scheduler. That is, each pair of nodes has equal probability to be selected 
for interaction in the current step, independently of the previous interactions.

The {\em leader election\/} starts with all nodes 
in the same initial state and the goal is that the system converges to 
a configuration with exactly one agent in a state which indicates that this node is the leader.
The (two-opinion) {\em exact majority voting} 
starts with each node in one of two distinct states $q_A$ and $q_B$, representing 
two distinct opinions (or votes) $A$ and $B$.
Initially $a_0$ nodes hold opinion $A$  (are in the state $q_A$)
and $b_0$ nodes hold opinion $B$, and 
we assume that $a_0 \neq b_0$. 
We denote the initial imbalance between the two opinions by 
$\ep = |a_0-b_0|/n \ge 1/n$. 
The goal is that eventually all nodes
will have the opinion of the initial majority.
The {\em exact} majority voting should guarantee the correct answer even if the 
difference between $a_0$ and $b_0$ is only $1$.

Let $S$ denote the set of states of a population protocol.
While $|S|$ can grow with the size $n$ of the population,
keeping the number of states low  is one of the primary objectives 
in design of population protocols.
Let $q(v,t) \in S$ denote the state of a node $v\in V$ at step $t$
(that is, after $t$ individual interactions).
Nodes change their states in pairwise interactions according to a common deterministic transition function 
$\delta: S\times S \rightarrow \, S\times S$.  
A population protocol has also 
an {\em output function\/} $\g: S \rightarrow \, \G$,
which indicates the property to which the system should converge.
For the exact majority voting,
$\g: S \rightarrow \, \{A, B\}$, which means that 
a node $v \in V$ in state $q\in S$ assumes that 
$\g(q)$ is the majority opinion.
An exact-majority protocol should eventually reach a step $t$, such that 
for each $v\in V$, $\g(q(v,t))$ is the initial majority opinion, and maintain
this property in all subsequent steps. 
For leader election protocols, the output function is $\g: S \rightarrow \, \{L, F\}$.
A node $v$ in a state $q$ assumes that it is the leader, if $\g(q) = L$, or 
a 'follower,' if $\g(q) = F$.

We consider undirected individual communications, that is, 
the two interacting nodes are not designated as initiator and responder, so 
the  transition functions must be symmetric.
We follow the common convention of defining the (parallel) 
time as the number of steps (individual interactions)
divided by $n/2$, that is, as the average number of interactions per node. 

The {\em completion\/} 
({\em convergence}) {\em time\/}) of a protocol
is a  random variable $T_C$ denoting the (parallel) time 
when the system stabilizes on the desired property. 
We are interested in designing protocols for which $T_C$ is small \whp\footnote{%
\whp\ --  {\em with high probability} -- means with probability at least $1 - n^{-\a}$, 
where constant $\a>0$ can be made arbitrarily large 
by increasing the (constant) parameters of the process.
}
$\E(T_C)$ is finite and ideally also small.
A finite $\E(T_C)$ implies that a protocol cannot stabilize on an incorrect answer with any non-zero probability 
(even if it was exponentially small in the size of the population). 
We note that, for example, the pull voting~\cite{aldous-fill-2014,DBLP:journals/siamdm/CooperEOR13}
and the two-sample voting~\cite{DBLP:conf/wdag/CooperERRS15} 
do not have this property because of the positive probability that the minority opinion wins.
%
In addition to the analysis of  correctness of a given protocol, we also want to derive 
bounds on the completion time, \whp\ and/or in expectation.
The aim is to achieve fast completion time with small number of states.

\subsection{Previous work}

Draief and Vojnovi\'c~\cite{DBLP:conf/infocom/DraiefV10} and 
Mertzios \etal~\cite{Mertzios-etal-ICALP2014}
analysed two similar four-state exact-majority protocols.
Both protocols are based on the idea that the two opinions have 
``passive/weak'' versions $a$ and $b$ and
``dynamic/strong'' versions $A$ and~$B$. 
The strong versions of 
opinions can be viewed as tokens 
moving around the graph. 
Initially each node $v$ has a strong opinion $A$ or $B$, and during the computation
it has always one of the opinions $a$, $b$, $A$ or $B$ (so is in one of
these four states). 
The strong opinions have dual purpose. 
Firstly, two opposite strong opinions cancel each other and change into weak opinions, if they 
interact.
Such pairwise canceling ensures that the difference between the number of opinions $a/A$ and $b/B$ 
does not change throughout the computation, remaining equal to $a_0 - b_0$.
Secondly, the strong opinions which are still alive keep moving around the graph 
'converting' the passive opposite opinions.

\begin{sloppy}

Mertzios \etal~\cite{Mertzios-etal-ICALP2014} call their protocol the {\em 4-state ambassador protocol} 
(the strong opinions are ambassadors) and prove the expected convergence time 
$O(n^5)$ for any graph 
and $O((n \log n)/|a_0 - b_0|)$ for the complete graph.
Draief and Vojnovi\'c~\cite{DBLP:conf/infocom/DraiefV10} call their 4-state protocol
the {\em binary interval consensus}, viewing it as a special case of
the {\em interval consensus} protocol of
B{\'{e}}n{\'{e}}zit  \etal~\cite{DBLP:conf/icassp/BenezitTV09}, and 
analyse it in the continuous-time model.
For the complete graph
and uniform edge rates,
they show that the expected convergence time is at most
$2n(\log n + 1)/|a_0 - b_0|$.
They also derive completion time bounds for 
cycles, stars 
and Erd{\H{o}}s-R\'enyi graphs.

\end{sloppy}

The appealing aspect of the four-state exact-majority protocols is their simplicity and the constant-size local memory, but 
their convergence time is slow, if the initial imbalance is small. 
The convergence time decreases if the initial imbalance increases, so the performance would be improved,
if there was a way of boosting the initial imbalance. 
Alistarh \etal~\cite{DBLP:conf/podc/AlistarhGV15} achieved such boosting 
by 
multiplying all initial strong opinions by $r$,
where $r$ is a positive integer parameter.
The nodes keep the count of the number of strong opinions they currently hold.
When eventually all strong opinions of the initial minority are canceled, 
there are $|a_0 - b_0|r$ strong opinions of the initial majority.
This speeds up both the canceling of strong opinions and the converting of weak opinions of the initial minority.
%
%
For complete graphs, this protocol
converges in expected $O((n \log n) /(|a_0 - b_0|r))$ time and \whp\ in 
$O((n \log^2 n) /(|a_0 - b_0|r))$ time,
while the number of states is 
$r + O(\log n \log r)$.
Thus this protocol needs either $|a_0 - b_0| = \Omega(n /\polylog n)$ or 
$\Omega(n/\polylog n)$ states to achieve a $O(\polylog n)$ time.
More recently, Alistarh~\etal~\cite{DBLP:conf/soda/AlistarhAEGR17} expanded and modified this protocol,
reducing the number of states to $O(\log^2 n)$
and the converges time to $O(\log^3 n)$ \whp\ and in expectation.


A suite of polylogarithmic-time population protocols for various functions, including the exact majority,
was proposed by Angluin~\etal~\cite{AngluinAE2008fast}, but those protocols require a unique leader 
to synchronize the progress of the computation. 
Their exact-majority protocol 
runs in $O(\log^2 n)$ time \whp\ and requires only constant number of states,
but it may fail with some small probability (polynomially small, but positive) and,
more significantly, requires the leader. 
%
The protocols developed in~\cite{AngluinAE2008fast}
are based on the idea of alternating {\em cancellations\/} 
and {\em duplications}, which has the following interpretation 
within the framework of canceling strong opinions. 
The canceling stops after a round of some pre-defined number of interactions,
reducing the overall number of strong opinions to less than $cn$, for some small constant $c < 1$.
This is followed by a round of interactions which ensures, \whp, that
all remaning strong opinions duplicate. 
Each round takes $O(\log n)$ time and $O(\log n)$ repetitions of 
the double-round of cancellations and duplications increases the difference between 
the number of strong opinions $A$ and strong opinions $B$ to $\Theta(n)$.

Berenbrink~\etal~\cite{DBLP:journals/corr/BerenbrinkFKMW16} have recently 
considered population protocols for the {\em plurality consensus\/} problem, which generalizes 
the majority voting problem to $k \ge  2$ opinions. Using the methodology introduced earlier
for load balancing~\cite{DBLP:conf/focs/SauerwaldS12}, they generalized the majority protocol 
of Alistarh~\etal~\cite{DBLP:conf/podc/AlistarhGV15} in a number of ways: 
 $k\ge 2$ opinions, arbitrary graphs, and only $O(\log n)$ time \whp\ 
for complete graphs and $k = 2$.
Their protocol, however, requires a polynomial number of states and 
$\Omega(n/\polylog n)$ initial advantage of the most common opinion 
to achieve $O(\polylog n)$ time.

The most recent population protocols for leader election are due to
Alistarh and Gelashvili~\cite{DBLP:conf/icalp/AlistarhG15}, who 
showed a protocol 
with $O(\log^3 n)$ states and $O(\log^3 n)$ time (in expectation and \whp),
and Alistarh~\etal~\cite{DBLP:conf/soda/AlistarhAEGR17}, who
showed a protocol 
with $O(\log^2 n)$ states and $O(\log^{5.3} n)$ expected and $O(\log^{6.3} n)$ \whp\ convergence time.
Poly-logarithmic time requires the number of states growing with $n$.
Doty and Soloveichik~\cite{DotySoloveichik-DISC2015} showed that any leader election protocol with 
constant number of states has at least linear expected convergence time.
Subsequently, Alistarh~\etal~\cite{DBLP:conf/soda/AlistarhAEGR17} showed that any leader-election
or exact-majority protocol with $O(\log\log n)$ states has $\Omega(n/\mbox{polylog}\, n)$ expected time.
There is an obvious two-state asymmetric leader election protocol with expected $O(n)$ time:
when two leader candidates meet, one of them turns into a follower.
A symmetric polynomial-time 
protocol can be obtained by augmenting the state by one bit, which changes 
on each interaction and is used to break the symmetry when two leader candidates meet.

\subsection{Our contributions}

We present an exact-majority protocol for complete graphs which converges to the correct answer in 
$O(\log^2 n)$ time in expectation and \whp, and uses $O(\log^2 n)$ states, improving
the time achieved in~\cite{DBLP:conf/soda/AlistarhAEGR17}. Note that it is enough for 
the nodes to know a polynomial upper bound on $n$ in order to guarantee the same 
asymptotic performance of our algorithms. In the following, we assume for simplicity that 
$n$ is known to every node.
Our protocol is based on the idea of alternating cancellation and duplication of opinions
introduced in~\cite{AngluinAE2008fast}, but 
achieves all necessary synchronization without a leader.
The nodes keep track of their {\em local clocks} -- the counters of their own interactions,
which stay sufficiently closely synchronized due to the uniform random selection of interactions.
There are $O(\log n)$ cancellation/duplication phases and each phase takes 
$O(\log n)$ time, giving the total $O(\log^2 n)$ time.
The requirement for $O(\log^2 n)$ states comes from counting $O(\log^2 n)$ interactions. 
Special care has to be given to the possibility that the local clocks may diverge or that 
some nodes have reached incorrect decision regarding the majority opinion. 
In both cases
the nodes eventually realize that something has gone wrong and switch to 
the back-up four-state protocol, which guarantees that all nodes reach the correct answer
in expected polynomial time. This is similar as in~\cite{DBLP:conf/icalp/AlistarhG15}, and
since the back-up protocol is needed only with polynomially small probability, 
the overall time is still $O(\log^2 n)$, both in expectation and \whp

For the leader election problem, we present a population protocol which has the same $O(\log^2 n)$
asymptotic running time and the number of states as our exact-majority protocol.
Both our protocols are based on nodes counting their interactions. 
Availability of $O(\log^2 n)$ states means that each node can keep count of up to
$O(\log^2 n)$ interactions.
The random uniform scheduler ensures that the individual counts do not diverge too much.
More specifically, in a period of $C n \log n$ interactions, 
each node is selected on average for $2C \log n$ interactions and, by a simple application of
Chernoff's bounds, \whp\ each node is 
selected for at least $(2C - c) \log n$
and at most  $(2C + c) \log n$ interactions, where $0 < c \ll C$ are suitably large constants.
For example, for $c=9$ and 
$C=12$, the failure probability is $n^{-\alpha}$ for some $\alpha>1$.
Both our protocols use a $O(\log n)$-time asynchronous push-pull broadcast protocol.
Our leader election protocol (the more technically involved of the two) uses also 
a technique for simulating Bernoulli trails in a population protocol.
The nodes count the number of successes within a logarithmic number 
of trials, and it can be shown that with constant probability
the maximum number of successes occurs at a single node. Such a node will become 
the leader. Since with constant probability two nodes have 
the same maximum number of successes, we need a process of testing which of these
two cases has happened and restart the protocol, if the nodes realize that there 
at least two leader candidates.

The analysis of our protocols relies on the performance of the following asynchronous push-pull broadcast
in the complete graph. 
Initially one of the nodes possesses a piece of information. 
In each step, two nodes are selected uniformly at random, and if one has the message, then 
at the end of the step both of them will have it. 
\Whp\
in $O(\log n)$ (parallel) time all nodes have the message.
The proof  of this is given in Appendix and is very similar to the proof 
in~\cite{KSSV00}.

%
%

\section{Exact majority voting in $O(\log^2 n)$ time with $O(\log^2 n)$ states}


The state of each node $v$ consists of $v.\vote \in \{A, B, \emptyset\}$ --
the current opinion held by $v$,
the counter $v.\Time$ of the interactions
and
three boolean variables $v.\doubled$, $v.\done$ and $v.\fail$.
The time counter increases by 1 with each interaction of $v$ and is viewed 
as a pair of counters $v.\phase$ and $v.step$: the first one counts $\Ph = \log n + 1$ phases 
and the second one counts $\It = C\log n$ steps in the current phase,
where $C$ is a suitably large constant.\footnote{%
For simplicity of notation, we assume that $\log n$ and $C\log n$ are integers.
More generally, whenever an expression refers to an index (or a number) of phases or steps,
we assume that it has an integer value.} 
The variables $v.\doubled$, $v.\done$ and $v.\fail$, all initialized to $\false$, indicate, respectively,
whether the opinion held by $v$ has been already duplicated in the current phase,
whether node $v$ has decided that its current opinion $v.\vote$ is the majority opinion,
and whether $v$ has realized that something has gone wrong.
We say that $v$ is in a $\fail$ state, if $v.\fail = \true$, and in a $\done$ state, if $v.\done = true$ and $v.\fail = \false$. 
Otherwise $v$ is in a {\em normal state}.
%
The counter $v.\Time$ counts only to $C\log^2 n$, so the number of states is $O(\log^2 n)$.

We want the local times at the nodes to be well synchronized, so that \whp\
each node which is not at the very beginning or very end of its current phase interacts only 
with nodes which are in the same phase.
This would require the local clocks (counters) not to differ by more than $c\log n$,
for some constant $0 < c \ll C$.
However, since the number of interactions per node reaches $\Theta(\log^2 n)$, we should 
expect 
that eventually there will be nodes with clocks differing by $\omega(\log n)$.
To keep the clocks close together, if two interacting nodes are at different phases,
then the node at the lower phase jumps to the end of its phase.
The interaction of two nodes is summarised in Algorithm~\ref{Algo},
with the update of the phase and step in lines~\ref{PhaseUpdateFrom}--\ref{PhaseUpdateTo}.
This simple mechanism of adjusting the local clocks gives
the following invariant (proof in Appendix).

\begin{lemma}\label{lem:synchrinization}
With high probability,
for each $p = 0, 1, \ldots, \Ph-1$, 
there is a (global) time step $T$ such that each node is in phase $p$ 
and at step at most $c\log n$ within this phase.
\end{lemma}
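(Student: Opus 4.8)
The plan is to track, for each phase $p$, the global time window during which nodes occupy phase $p$, and to show by induction on $p$ that whenever the first node reaches phase $p$, every other node is either still in phase $p-1$ or already in phase $p$ but at a small step count. The key quantities are: $\sigma_p$, the first global step at which some node enters phase $p$; and $\tau_p$, the first global step at which \emph{every} node has left phase $p-1$ (equivalently, reached phase $p$, using the clock-adjustment rule that a node at a lower phase jumps to the end of its phase on meeting a higher-phase node). The statement to prove is that, \whp, at time $\tau_p$ every node is at step at most $c\log n$ within phase $p$; taking $T=\tau_p$ then gives the lemma, since between $\sigma_p$ and $\tau_p$ no node has yet advanced past phase $p$.

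First I would set up the Chernoff estimate already invoked in the introduction: over any window of $Cn\log n/2$ parallel-time-normalized interactions — i.e.\ $\Theta(n\log n)$ individual interactions, one full phase's worth — each fixed node is selected $\Theta(\log n)$ times, and \whp\ \emph{simultaneously for all $n$ nodes} the number of selections lies in $[(C-c')\log n,(C+c')\log n]$ for a suitable $c'\ll C$; a union bound over the $O(\log n)$ phases and $n$ nodes keeps the total failure probability $n^{-\Omega(1)}$. This controls the ``own-clock'' drift. Next I would bound $\tau_p-\sigma_p$. Once one node is in phase $p$, the clock-adjustment rule means that phase-$p$ membership spreads to every lagging node at least as fast as a push broadcast (a phase-$p$ node meeting a phase-$(p-1)$ node drags the latter up); by the $O(\log n)$-time push-pull broadcast bound stated in the excerpt (proved in the Appendix), \whp\ $\tau_p-\sigma_p = O(\log n)$ parallel time, i.e.\ $O(n\log n)$ interactions. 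Finally, the node that first entered phase $p$ at step $\sigma_p$ can, during the at most $O(n\log n)$ further interactions until $\tau_p$, be selected at most $c''\log n$ times \whp\ by the same Chernoff bound, so it is at step at most $c''\log n$ within phase $p$ at time $\tau_p$; and any node that entered phase $p$ later than $\sigma_p$ is at an even smaller step. Choosing the constants so that $c'' \le c$ and $c' , c'' \ll C$ (so that nobody runs out of the $C\log n$ steps of a phase during this window) completes the argument.

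The main obstacle is the subtlety that the clock-adjustment rule couples the clocks in a direction-dependent way: a node never jumps \emph{forward within} its own phase, only to the phase boundary when it meets a strictly higher phase, so ``lagging'' nodes are pulled up but ``leading'' nodes are never pulled back. I need to argue that no node can leap two phases ahead of the pack — that is, that $\sigma_{p+1} \ge \tau_p$ \whp, equivalently that the first node to finish phase $p$ does not do so before every node has entered phase $p$. This follows from combining the broadcast bound ($\tau_p - \sigma_p = O(\log n)$ time) with the own-clock lower bound (a node needs $\Omega(\log n)$ of \emph{its own} interactions, hence genuinely $\Theta(n\log n)$ global time with the constant governed by $C$, to traverse the $C\log n$ steps of phase $p$): choosing $C$ large relative to the broadcast constant forces $\sigma_{p+1} - \sigma_p > \tau_p - \sigma_p$. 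The remaining bookkeeping — converting ``parallel time'' bounds to interaction counts, threading the union bound through all phases, and verifying the constant inequalities $c'' \le c \ll C$ — is routine.
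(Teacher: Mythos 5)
Your proposal is correct and follows essentially the same route as the paper's proof: induction over phases, the clock-adjustment rule realizing a push-pull broadcast that drags lagging nodes into phase $p$ in $O(\log n)$ parallel time, and a Chernoff bound showing that each node is selected at most $c\log n$ times during that broadcast window, so no node advances past step $c\log n$ (or out of the phase) before everyone arrives. Your explicit treatment of the "no node leaps a phase ahead" issue via $\sigma_{p+1}\ge\tau_p$ is a slightly more careful rendering of what the paper handles implicitly with the same Chernoff estimate.
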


Assuming that 
all nodes are in the beginning part of the same phase $p$, that is,
for each node $v$, $v.\phase = p$ and $v.\step \in [0, c\log n)$,
the computation during this phase proceeds \whp\ in the following way.
A node $v$ is in the {\em canceling stage\/}  
when its steps are in $[c\log n, (C/2 - c)\log n)$, and 
in the {\em doubling stage\/} when the steps are in $[(C/2 + 2c)\log n, (C - 5c)\log n)$.
We refer to the interval $[(C/2 - c)\log n, (C/2 + 2c)\log n)$ as the {\em middle part\/} of the phase.
At the time when the first node $u$ reaches the next phase $p+1$, all other nodes are 
at the end of phase $p$, that is, their steps are in $[(C - 5c)\log n, C\log n)$.
The push-pull broadcast started by $u$ brings all nodes up to phase $p+1$ 
and 
at some point they all are in the beginning part of this phase. 
If there is already a node in state \done\ or \fail, then 
\whp\ each node will be in one of these two states within $c n\log n$ interactions, again
by the properties of the asynchronous push-pull broadcast.

If two interacting nodes are not in states $\done$ or $\fail$, then they 
participate in the canceling and doubling stages: lines ~\ref{CancelFro}--\ref{DoubleTo}.
If both nodes are in the canceling stage, then the transition from the four-state protocol applies:
if the nodes have opposite votes, they change to no-vote $\notoken$.
If both nodes are in the doubling stage, exactly one of them has a vote and it has not been duplicated 
yet in this phase, then this vote is duplicated now 
and shared between both nodes (lines~\ref{DoubleFrom}--\ref{DoubleTo}),
and the variables $\doubled$ are set in both nodes.
If a node has 
at the end of a phase a vote not duplicated during this phase
and all nodes are in normal states, then 
\whp\ the minority opinion has been already eliminated from the system.
Such a node changes to state $\done$ (lines~\ref{resetDoubledFrom}--\ref{resetDoubledTo}).

\RestyleAlgo{boxruled}
\LinesNumbered
\SetEndCharOfAlgoLine{}

\begin{algorithm}
\lIf{$v.\fail \vee u.\fail$}{\label{FailPropagateFrom}
    $v.\fail, u.\fail \Let \true$; \label{FailPropagateTo}
}
\lElseIf{$\neg \consistent(v.\Time, u.\Time)$}{\label{BigDiscrFrom}
    $v.\fail, u.\fail \Let \true$; \label{BigDiscrTo}
}
\uElseIf{$v.\done \wedge \neg  u.\done$}{\label{DonePropagateFrom}
   \lIf{$u.\vote = \emptyset$}{
       \{ $u.vote \Let v.vote$;~ $u.\done \Let \true$; \}
   }
   \lElseIf{$u.\vote \neq v.\vote$}{
       \{ $v.\fail, u.\fail \Let \true$; \}
   }
}
\lElseIf{$u.\done \wedge \neg  v.\done$}{
   [analogous to above]
}
\uElseIf{$v.\done \wedge u.\done$}{
   \lIf{$v.\vote \neq u.\vote$}{
        $v.\fail, u.\fail \Let \true$;\label{DonePropagateTo}
   }
}
\uElseIf{both nodes in the canceling stage and have opposite votes}{\label{CancelFro}
            $u.\vote \Let \notoken$;~ $v.\vote \Let \notoken$\;\label{CancelTo}
}
\ElseIf{both nodes in the doubling stage and exactly one has a vote}{\label{DoubleFrom}
        let $v$ be the node with a vote (the other case is symmetric)\;
        \lIf{$\neg v.\doubled$}{
           \{ $u.\vote \Let v.\vote$;~
           $v.\doubled, u.\doubled \Let \true$; \}\label{DoubleTo}
        }
}
\uIf{ $v.\phase \neq u.\phase$ }{\label{PhaseUpdateFrom}
        let $x\in \{ v, u\}$ be the node at the earlier phase\;
        $x.\step \Let C\log n - 1$\;
}
\For {each $x \in \{ u,v\}$}{
     $(x.\phase, x.\step) \Let \Next (x.\phase, x.\step)$\label{PhaseUpdateTo}
         \tcp*{progress to the next step}
     \uIf{$x.\step = 0 \wedge x.\vote \neq \emptyset \wedge \neg x.\fail \wedge \neg x.\done$}{\label{resetDoubledFrom}
        \lIf{$\neg x.\doubled$}{
          $x.\done \Let \true$;
        }
        \lElseIf{$x.phase < \Ph$}{
          $x.\doubled \Let \false$;\label{resetDoubledTo}
        }
     }
}
 \caption{Exact-majority: the update of the states of two interacting nodes $v$ and $u$\label{Algo}}
\end{algorithm}

The \done\ state  propagates through the system, but may at some point change to the $\fail$ state,
if there are still two opposite votes (low but positive probability);
see lines~\ref{DonePropagateFrom}--\ref{DonePropagateTo}.
The $\fail$ state is propagated through the system and does not change to any new state;
see line~\ref{FailPropagateFrom}. 
Two nodes may also get into the $\fail$ state, if their local clocks are too far apart (line~\ref{BigDiscrFrom}). 
We say that 
the local clocks of two nodes are {\em consistent}, if they belong to the same part of the same phase or
to two consecutive parts of a phase, 
for example, if one clock  belongs to the initial part $[0, c\log n)$ 
of one phase while the other belongs to the canceling stage 
of this phase
or to the final part 
of the previous phase. 
The details of this condition are hidden in the predicate $\consistent$ in line~\ref{BigDiscrFrom}.
If the two clocks are not 
consistent, then they are considered too far apart and both nodes move to the $\fail$ state.


We now summarize the proof of correctness and performance of our algorithm.
Lemmas~\ref{Lem:cancel}, \ref{Lem:double} and~\ref{Lem:phase}
formalize what happens during one phase.
Lemma~\ref{Lem:doublingVoteCount} is the basis for the claim that the computation
completes within $O(\log (n/|a_0 - b_0|)$ phases. 
Lemma~\ref{Lem:phase} follows from Lemmas~\ref{Lem:cancel} and~\ref{Lem:double}.

\begin{lemma}\label{Lem:cancel}
If all nodes are in normal states and in the beginning part of the same phase $p$, $0 \le p < \Ph$, then \whp\ when the last 
node completes the canceling stage of this phase 
(a) no node has entered yet the doubling stage of this phase
(so all nodes are in the middle part of this phase), and
(b) the number of nodes with opinions is at most $n/4$ or no minority opinion is left in the system.
\end{lemma}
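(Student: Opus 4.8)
The plan is to condition on the high-probability synchronization event and then run a coupling/domination argument for the canceling process, which is essentially the four-state cancellation dynamics restricted to a window of $\Theta(\log n)$ steps per node. First I would invoke Lemma~\ref{lem:synchrinization} together with the Chernoff bound on per-node interaction counts described in the introduction: in a window of $Cn\log n$ interactions every node is selected between $(2C-c)\log n$ and $(2C+c)\log n$ times, \whp. Combining this with the assumption that all nodes start this phase with $v.\step \in [0,c\log n)$, I get that \whp\ the global time interval during which ``some node is in the canceling stage'' is disjoint from the interval during which ``some node is in the doubling stage'': the slowest node is still below step $(C/2-c)\log n$ when the fastest node first reaches step $(C/2-c)\log n$ only if the clock spread stays $O(c\log n)$, and the middle part $[(C/2-c)\log n,(C/2+2c)\log n)$ has width $3c\log n$ which (for $C$ large relative to $c$) absorbs the spread. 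This gives part~(a): when the last node leaves the canceling stage, no node has yet entered the doubling stage, so all nodes sit in the middle part.

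For part~(b), I would argue about the number of opinionated nodes $m_t$ (nodes with $v.\vote\neq\emptyset$) during the canceling window. Throughout this window the only rule that fires among normal-state nodes is line~\ref{CancelFro}--\ref{CancelTo}: two nodes with opposite votes both drop to $\notoken$. So $m_t$ is non-increasing, and $a_0-b_0$ is preserved. If the minority has already been eliminated we are done, so assume both opinions are present; let $k_t$ be the current minority count, so $m_t \ge 2k_t$ and there are at least $k_t$ majority and $k_t$ minority nodes, hence at least $\binom{k_t}{1}^2/\binom{n}{2} \ge$ (roughly) $2k_t^2/n^2$ chance that a uniformly chosen pair is an opposite-vote pair and triggers a cancellation (removing two opinions). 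Since every node in the canceling stage participates in $\Theta(\log n)$ interactions, I would pass to a mean-field/potential estimate: each canceling interaction of a node has probability $\Omega(m_t/n)$ of hitting an opposite-opinioned partner while $m_t$ is large, so after the $\Theta(n\log n)$ canceling-stage interactions the expected drop is enough, by a supermartingale / Chernoff-type concentration, to push $m_t$ below $n/4$ unless the minority hit zero first. The cleanest route is to show that while $m_t > n/4$ the number of opposite-vote pairs is $\Omega(n^2)$ — this needs the minority count to be $\Omega(n)$ too, which is NOT guaranteed — so I would instead track the minority count $k_t$ directly: each step removes a minority node with probability $\Omega(k_t/n)$, and over $\Theta(n\log n)$ steps with a suitably large constant $C$ this drives $k_t$ to $0$ \whp\ regardless of its starting value, which in particular gives $m_t = a_0-b_0 \le n$ with no minority opinion left, so case~(b) holds in its second alternative. (The first alternative, $m\le n/4$, is the relevant one for later phases where duplication has not yet made the majority dominant; there the same opposite-pair estimate applies because $k_t$ large forces $m_t$ large.)

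The technical heart — and the main obstacle — is the concentration argument for the canceling dynamics: $m_t$ (or $k_t$) is a Markov chain whose one-step decrease probability depends on its current value, so I cannot just apply a fixed-rate Chernoff bound. I would handle this by a standard stopping-time decomposition: split the canceling window into $O(\log n)$ doubling-scale sub-phases, show that each sub-phase either halves $k_t$ or brings it to $O(1)$ \whp\ (using that during a sub-phase $k_t$ stays within a factor $2$ of its start, so the per-step kill probability is within a constant factor of a fixed value, and then Chernoff applies), and take a union bound over the $O(\log n)$ sub-phases and over the $n$ nodes. Choosing $C$ large enough relative to $c$ and to the desired \whp\ exponent $\alpha$ closes the argument; the bookkeeping that the sub-phase boundaries in global time align with per-node step counts is routine given Lemma~\ref{lem:synchrinization}.
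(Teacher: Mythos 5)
Part (a) of your proposal matches the paper's argument: Chernoff bounds on per-node interaction counts over the $(C/4)n\log n$ interactions of the phase show the clock spread stays $O(c\log n)$, so all nodes land in the middle part together.

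Part (b) has a genuine gap. Your key quantitative claim --- ``each step removes a minority node with probability $\Omega(k_t/n)$'' --- is false unless the \emph{majority} count $M_t$ is $\Omega(n)$: the per-step probability that a uniformly random pair is an opposite-vote pair is $2k_tM_t/n^2$, and if, say, $M_t=k_t=n^{0.1}$ then \whp\ no cancellation at all occurs in $\Theta(n\log n)$ interactions. Consequently your conclusion that $k_t$ is driven to $0$ ``regardless of its starting value'' is simply wrong, and the sub-phase halving argument fails in the same regime. This is not a fixable technicality of the concentration step you flag as the ``technical heart''; it is the reason the lemma is stated as a disjunction. The paper's proof resolves it with a per-node argument that folds both alternatives into one event: for each node $v$ holding the minority opinion, call an interaction of $v$ in the canceling window a \emph{success} if the partner is in the canceling stage and either (i) the current majority count is below $n/8$, or (ii) it is at least $n/8$ and the partner holds the majority opinion (so $v$ is canceled). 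Regardless of history this has probability at least $1/8-o(1)$, so over $\Theta(\log n)$ interactions every minority node records a success \whp. If any success is of type (i), then since the majority count is non-increasing during canceling it is still below $n/8$ at the end, whence the total opinion count is below $n/4$ (first alternative); otherwise every minority node is canceled (second alternative). This sidesteps entirely the Markov-chain/stopping-time machinery you propose, and it is where your write-up would break down if carried out as stated.
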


\begin{lemma}\label{Lem:double}
If all nodes are in the middle part of the same phase $p$, $0 \le p < \Ph$ (that is, between the canceling and doubling stages of this phase) 
and the number of nodes with opinions is at most $n/4$, then \whp\
each opinion will be duplicated in the doubling stage of this phase.
\end{lemma}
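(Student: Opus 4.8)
The plan is to show that during the doubling stage—a window of $\Theta(\log n)$ steps per node, hence $\Theta(n\log n)$ interactions in global time—every node that holds a vote at the start of the stage gets a chance to duplicate it, and moreover that this happens early enough that the duplicated copy is still handled correctly. By Lemma~\ref{lem:synchrinization} we may condition on the high-probability event that all clocks are consistent, so that throughout the relevant global-time window every node's step counter lies in the doubling interval $[(C/2+2c)\log n,(C-5c)\log n)$ (the $2c$ and $5c$ slacks at the two ends absorb the $c\log n$ clock spread). Call a node \emph{active} if $v.\vote\neq\emptyset$ and $\neg v.\doubled$; initially there are at most $n/4$ active nodes and at least $3n/4$ passive (no-vote) nodes. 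The key observation is that whenever an active node meets a passive node, the vote is copied and both become inactive (for the rest of the phase), so the set of active nodes is monotonically shrinking and an active node stays active only until its first interaction with a passive node.

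First I would bound, for a fixed active node $v$, the time until $v$ interacts with some passive node. As long as $v$ is active, the number of passive nodes is at least $3n/4$ (passive nodes never become active), so in each step $v$ is selected and paired with a passive node with probability at least $\frac{2}{n}\cdot\frac{3n/4-1}{n-1}\ge \frac{1}{n}$; equivalently, among the $\Theta(n\log n)$ interactions of the doubling stage, each such interaction is a "$v$-meets-passive" event with probability $\Omega(1/n)$ independently of the history given the lower bound on passive nodes. Hence the probability that $v$ fails to duplicate within the first, say, $(c)\,n\log n$ interactions of the stage is at most $(1-1/n)^{c\,n\log n}\le n^{-c}$. (If $v.\vote=\emptyset$ at the start there is nothing to prove; if $v.\doubled$ were somehow still $\true$ at the start of the doubling stage, that is excluded because $\doubled$ is reset to $\false$ at the phase boundary in lines~\ref{resetDoubledFrom}--\ref{resetDoubledTo}.) A union bound over the at most $n$ active nodes gives that \whp\ every vote present at the start of the doubling stage is duplicated within the first $c\log n$ steps of that stage, i.e.\ well before step $(C-5c)\log n$, using that $C$ is chosen large relative to $c$.

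Two points need care, and I expect the second to be the main obstacle. (1) One must make sure that the interactions I am appealing to really do fall in the doubling stage at \emph{both} endpoints: this is exactly what the $2c$ slack at the start of the doubling interval and the synchronization guarantee of Lemma~\ref{lem:synchrinization} buy us, since all clocks stay within $c\log n$ of each other. (2) The more delicate issue is the freshly created copies: when $v$ duplicates onto a passive node $u$, both $v.\doubled$ and $u.\doubled$ are set, so the new copy at $u$ cannot itself trigger a further duplication in this phase—good, no cascade—but one must also check that such a newly-voting $u$ is never mistaken for a ``not-duplicated'' vote at the phase's end. That is handled by the $\doubled$ flag test in line~\ref{resetDoubledFrom}, so a node that received its vote by duplication will not switch to $\done$. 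Thus the only nodes that can carry an undoubled vote to the end of the phase are those which were active at the start and never met a passive node, and we have just shown \whp\ there are none. The remaining bookkeeping—that passive nodes created by duplication never revert, and that no $\fail$/$\done$ transition interferes under the normal-state hypothesis—is routine given the case analysis in Algorithm~\ref{Algo}.
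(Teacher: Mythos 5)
Your argument is essentially the paper's (which gives only a sketch): each vote-holder has a constant per-step chance of meeting a no-vote node during the $\Theta(\log n)$ steps of the doubling stage, so it fails to duplicate with polynomially small probability, and a union bound over the at most $n/4$ vote-holders finishes. One small correction: your claim that the number of passive (no-vote) nodes stays at least $3n/4$ because ``passive nodes never become active'' conflates \emph{active} with \emph{vote-holding} --- a passive node that receives a duplicated copy is no longer a valid duplication target, so the count of no-vote nodes can drop to $n/2$ over the course of the stage; this is the bound the paper uses, and it still gives an $\Omega(1/n)$ per-interaction (equivalently, $1/2-o(1)$ per-step) success probability, so nothing else in your argument changes.
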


\begin{lemma}\label{Lem:phase}
If all nodes are in the beginning part of the same phase $p< \Ph - 1$ and  in normal states, then
\whp\ at some later step all nodes will be in the beginning part of phase $p+1$ and either
(a) all nodes will be in normal states or (b)
there will be at least one node in a "$\done$" state and holding the majority opinion,
no nodes in a "$\fail$" state,
and no minority opinion will be left in the system.
\end{lemma}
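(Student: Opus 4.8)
The plan is to prove Lemma~\ref{Lem:phase} as a direct consequence of the synchronization guarantee (Lemma~\ref{lem:synchrinization}), the two single-phase stage lemmas (Lemmas~\ref{Lem:cancel} and~\ref{Lem:double}), and the $O(\log n)$-time push-pull broadcast bound stated in the introduction, all combined via a union bound over the polynomially many events involved. Throughout, we condition on the high-probability event from Lemma~\ref{lem:synchrinization}, so that at the start of the argument all nodes are in the beginning part of phase $p$ (steps in $[0, c\log n)$) and, by hypothesis, all in normal states; in particular nobody is yet in a \done\ or \fail\ state.

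\emph{First} I would argue that \whp\ no node enters a \fail\ state during phase $p$: by Lemma~\ref{lem:synchrinization} the local clocks stay within $c\log n$ of one another throughout the phase, so the \consistent\ predicate in line~\ref{BigDiscrFrom} never fires, and since no node is in a \fail\ state at the start and the only other ways to set \fail\ (lines~\ref{DonePropagateFrom}--\ref{DonePropagateTo}) require a pre-existing \done\ node with a conflicting vote, it suffices to control how \done\ states arise. \emph{Next}, apply Lemma~\ref{Lem:cancel}: \whp\ when the last node leaves the canceling stage, all nodes are in the middle part of the phase and either the number of nodes holding an opinion is at most $n/4$, or no minority opinion remains. \emph{In the first case}, Lemma~\ref{Lem:double} applies and \whp\ every surviving opinion is duplicated during the doubling stage; consequently, at the end of the phase (step $C\log n - 1$, equivalently when $x.\step$ resets to $0$ in lines~\ref{resetDoubledFrom}--\ref{resetDoubledTo}) every node with a vote has $x.\doubled = \true$, so no node enters \done\ and $x.\doubled$ is simply reset to \false; all nodes thus remain in normal states — this is alternative (a).

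\emph{In the second case}, where the minority opinion is already gone before the doubling stage, only the majority opinion $M$ circulates. A node can fail to duplicate its vote in the doubling stage only if it never meets a vote-less partner in time; whichever way the doubling stage ends, \whp\ at least one node finishes the phase holding $M$ without having duplicated it this phase (this needs a short argument that the doubling stage cannot leave \emph{every} remaining copy duplicated unless the opinion count has grown to $\Theta(n)$, which it has not by the time we reach this phase — here one invokes Lemma~\ref{Lem:doublingVoteCount}, or more simply the observation that with $\le n/4$ opinion-holders there is a linear supply of $\emptyset$ nodes and the pairing of duplications is far from perfect), so that node sets $x.\done \Let \true$ with $x.\vote = M$. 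Since $M$ is the only opinion present, the conflict branches in lines~\ref{DonePropagateFrom}--\ref{DonePropagateTo} never trigger, so no \fail\ state is created — this is alternative (b). \emph{Finally}, in either case the first node $u$ to reach phase $p+1$ triggers a push-pull broadcast which, by the broadcast bound, \whp\ in $O(\log n)$ time brings every node into phase $p+1$; combined with Lemma~\ref{lem:synchrinization} applied to $p+1$, at some later step all nodes lie in the beginning part of phase $p+1$, with the state classification (a) or (b) preserved by the update rules.

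\emph{The main obstacle} I anticipate is the bookkeeping at the phase boundary: one must verify that the clock-adjustment rule in lines~\ref{PhaseUpdateFrom}--\ref{PhaseUpdateTo} (lagging nodes jump to the end of their phase) does not create, during the $O(\log n)$-time broadcast window, a transient situation in which a node in phase $p+1$'s beginning part meets a node still in phase $p$'s doubling or end stage and spuriously sets \fail\ via \consistent, or — worse — a node that jumped forward skips a duplication it "should" have done and thereby enters \done\ in case (a). Handling this cleanly requires showing that the \consistent\ predicate is designed to tolerate exactly the "two consecutive parts" overlap that the broadcast produces, and that any node forced to jump forward from phase $p$ to $p+1$ either had already duplicated (so its \done/\doubled bookkeeping is correct) or is accounted for as one of the legitimately-\done\ nodes of case (b); this is where the precise definitions of the phase parts and the gaps of width $c\log n$ between stages do the real work, and I would devote the bulk of the proof to it.
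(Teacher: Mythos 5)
Your proof is correct and takes essentially the route the paper intends: the paper offers no explicit proof of this lemma, saying only that it follows from Lemmas~\ref{Lem:cancel} and~\ref{Lem:double}, and your argument derives it from exactly those ingredients plus the synchronization and broadcast bounds. One minor remark: your claim that in the no-minority case at least one node necessarily fails to duplicate (so that alternative (b) must occur) is under-justified but also unnecessary, since if every surviving copy were duplicated the outcome would simply be alternative (a), and the lemma's disjunction holds either way.
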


\begin{lemma}\label{Lem:doublingVoteCount}
When all nodes are in the beginning part of the same phase $p$, $0 \le p < \Ph$, and all are in normal states,
then the difference 
between the number of opinions $A$ and $B$ is equal to $2^p|a_0 - b_0|$.
\end{lemma}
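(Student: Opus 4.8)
The plan is to prove Lemma~\ref{Lem:doublingVoteCount} by induction on the phase index $p$. The base case $p=0$ is immediate: before any phase has been executed the nodes still hold their initial opinions, so the number of $A$'s minus the number of $B$'s is $a_0 - b_0$, and $|a_0 - b_0| = 2^0|a_0-b_0|$. For the inductive step, I would assume that at the (global) time when all nodes are in the beginning part of phase $p$ and in normal states, the signed difference between $A$-opinions and $B$-opinions has absolute value $2^p|a_0-b_0|$, and show that the same holds when all nodes reach the beginning part of phase $p+1$. Here I would invoke Lemma~\ref{Lem:synchrinization} (and the surrounding description of how a phase unfolds \whp) to conclude that \whp\ all nodes pass cleanly through the canceling stage, then the doubling stage, of phase $p$ while remaining in normal states, so that the only state changes affecting votes are exactly those in lines~\ref{CancelFro}--\ref{CancelTo} (canceling) and lines~\ref{DoubleFrom}--\ref{DoubleTo} (doubling), together with the $\doubled$-flag resets at the phase boundary (lines~\ref{resetDoubledFrom}--\ref{resetDoubledTo}).

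The two operations have opposite, simple effects on the signed vote difference. A canceling step (line~\ref{CancelTo}) removes one $A$ and one $B$ simultaneously, hence leaves $a - b$ unchanged. A doubling step (line~\ref{DoubleTo}) takes a node with a vote $q\in\{A,B\}$ whose $\doubled$ flag is $\false$ and copies that vote onto a node currently holding $\emptyset$, setting both $\doubled$ flags to $\true$; so it increases the count of $q$-opinions by one and does not touch the other opinion. Thus, if during the doubling stage of phase $p$ exactly $n_A$ nodes duplicate an $A$-vote and $n_B$ nodes duplicate a $B$-vote, the signed difference changes from its post-canceling value (which equals the pre-canceling value, namely $\pm 2^p|a_0-b_0|$) to $\pm 2^p|a_0-b_0| + (n_A - n_B)$. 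To finish I need $n_A$ to equal the number of $A$-holders and $n_B$ the number of $B$-holders at the start of the doubling stage: this is precisely the content of Lemma~\ref{Lem:double}, which guarantees that \whp\ \emph{every} opinion present is duplicated exactly once (the $\doubled$ flag, reset to $\false$ only at the phase boundary, prevents a second duplication within the same phase). Consequently $n_A$ and $n_B$ are just the respective opinion counts entering the doubling stage, which — since canceling does not change them individually only once one side is exhausted, but in any case doubles the signed difference — gives a new signed difference of magnitude $2\cdot 2^p|a_0-b_0| = 2^{p+1}|a_0-b_0|$.

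One subtlety I would be careful about is the case where the minority opinion has already been eliminated before phase $p$: then one of $n_A, n_B$ is $0$, the doubling stage doubles the (one-sided) count of the surviving opinion, and the signed difference still doubles, consistent with the claimed formula; moreover such a node with an undoubled vote at the phase end enters the $\done$ state (line~\ref{resetDoubledTo} vs.\ the $\done$ branch), but the lemma's hypothesis is that all nodes are still in \emph{normal} states at the beginning of phase $p$, so this situation, if it has occurred, is compatible with the statement — the formula is simply being checked as an invariant along the normal-state branch. The main obstacle, as always with these phase-structured protocols, is not the algebra of $\pm$ and doubling but the bookkeeping needed to justify that \whp\ no stray interactions outside the clean canceling/doubling picture alter a vote during phase $p$: this is where I must lean on Lemma~\ref{Lem:synchrinization} and on Lemmas~\ref{Lem:cancel} and~\ref{Lem:double} to rule out, with high probability, interactions between nodes in non-adjacent parts of a phase and to ensure the canceling and doubling stages are temporally separated. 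Given those lemmas, the proof of Lemma~\ref{Lem:doublingVoteCount} is a short induction; without them it would require re-deriving the synchronization guarantees, which is why I would state the dependence explicitly and keep this proof brief.
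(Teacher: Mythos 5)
Your induction has the right skeleton and the right arithmetic, but it proves a weaker statement than the one claimed, and the paper's own proof is both shorter and stronger. Lemma~\ref{Lem:doublingVoteCount} carries no ``\whp'' qualifier: it is an exact, deterministic invariant, conditional only on the configuration described in its hypothesis, and it is used that way downstream (Corollary~\ref{Cor:time} derives a contradiction from the difference exceeding $n$, which only works if the doubling identity holds with certainty whenever all nodes are normal). By routing the inductive step through Lemma~\ref{Lem:double} and Lemma~\ref{lem:synchrinization} to argue that ``\whp\ every opinion is duplicated,'' you only establish the identity with high probability, which is not what is being asserted. The paper's argument needs no probability at all: the hypothesis of the inductive conclusion --- that all nodes are in \emph{normal} states at the beginning of phase $p+1$ --- already forces every surviving vote to have been duplicated exactly once in phase $p$, because a node that finishes a phase holding an un-duplicated vote switches to the $\done$ state (lines~\ref{resetDoubledFrom}--\ref{resetDoubledTo}) and is therefore no longer normal, while the $\doubled$ flag rules out a second duplication within the phase. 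You actually notice this mechanism in your ``subtlety'' paragraph, but you treat it as a side remark about the minority-eliminated case rather than as the engine of the whole proof; had you promoted it, Lemmas~\ref{Lem:cancel}, \ref{Lem:double} and~\ref{lem:synchrinization} would drop out of the argument entirely.

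Two smaller points. First, the sentence ``since canceling does not change them individually only once one side is exhausted, but in any case doubles the signed difference'' is garbled: canceling changes both individual counts (each cancellation decrements both $a$ and $b$) while preserving $a-b$; it is the doubling stage, not canceling, that doubles the difference. Second, in the inductive step the normal-state hypothesis you need to invoke is the one at the beginning of phase $p+1$ (the configuration whose vote difference you are computing), not the one at the beginning of phase $p$; your write-up points at the wrong end of the phase.
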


\begin{theorem}\label{Thm:ourExactMajorityProtocol}
The protocol reaches the correct answer in $O(\log n \log (n/|a_0 - b_0|))$ parallel time \whp\
This protocol can be extended to a $O(\log^2 n)$-state population protocol for the exact-majority problem 
which converges to the correct answer in $O(\log^2 n)$ time \whp\ and in expectation.
\end{theorem}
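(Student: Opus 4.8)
The plan is to prove the first (high-probability) statement by a phase-by-phase induction driven by Lemmas~\ref{Lem:phase} and~\ref{Lem:doublingVoteCount}, and then to bootstrap it to the second statement by attaching a constant-state back-up protocol.

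\emph{The \whp\ time bound.} First I would fix the intersection $\cE$ of the $O(\log n)$ high-probability events in play: the synchronization invariant of Lemma~\ref{lem:synchrinization} (which in particular keeps every pair of local clocks consistent at all times), the conclusion of Lemma~\ref{Lem:phase} for each phase $p = 0, \dots, \Ph-2$, and the $O(\log n)$-time bound for the asynchronous push--pull broadcast; a union bound keeps $\cE$ \whp\ On $\cE$, starting from the initial configuration (every node normal and in the beginning part of phase $0$), I would induct on the phase index: at the start of phase $p$ the system is, by Lemma~\ref{Lem:phase}, either in case~(a) (all nodes normal and in the beginning part of phase $p$), in which case the induction proceeds to phase $p+1$, or it has already reached case~(b) (some node in a \done\ state holding the majority opinion, no \fail\ node, and no minority opinion present). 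By Lemma~\ref{Lem:doublingVoteCount}, in case~(a) the difference between the two opinion counts at the start of phase $p$ equals $2^{p}|a_0-b_0|$; since this can never exceed $n$, case~(a) is impossible once $2^{p}|a_0-b_0| \ge n$, i.e.\ by phase $p^{*} := \lceil \log(n/|a_0-b_0|) \rceil \le \Ph-1$. Hence case~(b) has occurred by the start of phase $p^{*}$. Charging each phase $O(\log n)$ parallel time --- its $C\log n$ steps per node, plus the $O(\log n)$-time push--pull broadcast that pulls every node up to the next phase at a phase boundary --- this happens within $O(\log n \cdot \log(n/|a_0-b_0|))$ parallel time.

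It remains to check that from a case-(b) configuration the output is already, or becomes within $O(\log n)$ more time, correct and permanent. Two observations suffice. First, after case~(b) no node ever acquires a minority or opposite opinion and no node ever enters a \fail\ state: a cancellation needs two opposite votes (the minority opinion is gone), a duplication only copies a vote onto an empty slot, and the phase-boundary updates leave $v.\vote$ unchanged, so from then on the only opinion ever held anywhere is the majority one; consequently the \fail-creating rules that require a pre-existing \fail\ node, or a \done/vote holder facing the opposite opinion, never fire, and the clock-inconsistency rule never fires on $\cE$. Second, the majority opinion reaches every node within $O(\log n)$ time: every \done\ node carries the majority opinion and copies it to any empty-vote node it meets (a push broadcast), and moreover the count argument above forces at least $n/2$ nodes to hold the majority opinion at the start of phase $p^{*}$, so the remaining empty slots are \whp\ filled during the doubling stage(s) of phases $p^{*}$ (and, if need be, $p^{*}+1$). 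By the first observation this opinion is thereafter never lost, so every node permanently outputs the initial majority, which proves the first statement.

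\emph{Expectation and absolute correctness.} On the complementary event $\ol{\cE}$, which is polynomially small, the main computation could stall or --- if an inconsistency or an erroneous \done\ decision is not caught --- settle on the wrong opinion, so I would additionally run, in a separate $O(1)$-state component that also preserves each node's initial opinion, the four-state exact-majority protocol of~\cite{DBLP:conf/infocom/DraiefV10,Mertzios-etal-ICALP2014}, whose expected convergence time on the complete graph is polynomial in $n$. A node switches irrevocably to this back-up exactly when it enters a \fail\ state; since an erroneous \done\ decision is eventually converted to \fail\ (lines~\ref{DonePropagateFrom}--\ref{DonePropagateTo}) and \fail\ is contagious (line~\ref{FailPropagateFrom}), \whp\ either every node runs the main protocol to completion or every node eventually runs the back-up, and as the back-up always converges to the correct answer in finite expected time the combined protocol is correct with probability $1$ (which also makes $\E(T_C)$ finite). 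For the running time: with probability $\ge 1-n^{-\alpha}$ the main protocol converges in $O(\log^{2}n)$ time (take $|a_0-b_0| \ge 1$ above), and with the remaining probability the back-up contributes at most its expected time, $O(\mathrm{poly}(n))$; picking the constant parameters of the process so that $\alpha$ exceeds the degree of that polynomial makes this contribution $o(1)$, giving $\E(T_C) = O(\log^{2}n)$, and the \whp\ bound is likewise $O(\log^{2}n)$. The number of states is $O(\log^{2}n)$ --- dominated by the counter $v.\Time$, which runs only to $C\log^{2}n$ --- times $O(1)$ for all the other components, i.e.\ $O(\log^{2}n)$ in total.

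\emph{Expected main difficulty.} The crux is the ``absorption'' step of the third paragraph: showing, \whp, that once a single node is \done\ with the majority opinion the configuration cannot escape the correct answer. This requires both the non-reappearance of the minority opinion and the absence of spurious \fail\ states caused by clock drift --- precisely where the (strengthened) synchronization of Lemma~\ref{lem:synchrinization} is needed --- and it requires a proof that the correct opinion really does propagate to all $n$ nodes under the interaction rules of Algorithm~\ref{Algo}. A secondary but real technical point is keeping the initial opinions available to the back-up protocol and calibrating the failure probability $n^{-\alpha}$ against the back-up's polynomial running time so that the expectation comes out to $O(\log^{2}n)$.
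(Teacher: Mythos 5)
Your proposal is correct and follows essentially the same route as the paper: a phase induction via Lemma~\ref{Lem:phase}, the phase-count bound extracted from Lemma~\ref{Lem:doublingVoteCount} (the paper packages this as Corollary~\ref{Cor:time}), an $O(\log n)$ broadcast of the \done\ answer, and then a product construction with the four-state protocol whose output is consulted only on \fail\ states to get convergence with probability $1$ and the $O(\log^2 n)$ expectation. Your treatment of the ``absorption'' step and of the calibration of $n^{-\alpha}$ against the back-up's polynomial time is somewhat more explicit than the paper's, but the argument is the same.
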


%
%
\section{Leader election in $O(\log^2 n)$ time with $O(\log^2 n)$ states}

\label{sec:alg}

Our leader election protocol consists of two parts.
The first part is an $O(\log n)$-time process of selecting leader candidates.
\Whp\ all nodes perform the same $\Theta(\log n)$ number of Bernoulli trials and
the leader candidates
are the nodes which collect the maximum number of successes.
At least one candidate is selected and exactly 
one candidate is selected with constant probability.\footnote{%
{\em Constant probability\/} 
means in this paper probability $p$ for a constant $0 < p < 1$.} 

The second part of the protocol tests whether there is exactly one candidate.
There are 
$O(\log n)$ {\em test phases} and each phase takes \whp\ $O(\log n)$ time
and gives either negative or non-negative answer.
If there is exactly one candidate, then each test phase gives the non-negative answer.
If there are two or more candidates, then each phase gives 
the negative answer with constant probability,
so the first negative answer is obtained \whp\ within $O(\log n)$ phases.
The first phase with the  negative answer causes the restart of 
the whole computation: the push-pull broadcast moves all nodes back to
the beginning of the first part.
When $\Theta(\log n)$ consecutive phases give only non-negative answers, 
then \whp\  there is exactly one candidate, so
each candidate
declares itself the leader. 
In the low-probability event that two or more 
candidates declare themselves leaders, 
direct duels between leaders leave eventually only one surviving leader.

Since the first part of (each restart of) 
our protocol selects a unique candidate with constant probability,
the number of restarts is constant in expectation and $O(\log n)$ \whp\ 
The number of test phases between two consecutive restarts can be bounded
by a geometric variable 
with constant expected value. 
These geometric variables are independent, so the total number of test phases 
(over all restarts) is \whp\ $O(\log n)$ and the total time is \whp\ $O(\log^2 n)$.

Each node can be in one of $\Theta(\log^2 n)$ states denoted 
by $<q_i,X,G>$, where $q_i$ indicates the phase 
or role of the node, 
$X$ is the set to which the node belongs, which is 
$A$ or $B$, or $N$ for ``not defined'', and $G \in \{ N, 0, 1\}$
supports Bernoulli trials as explained below.
The initial value for $X$ and $G$ is $N$.
The ``$q_i$'' in the state notation can come with parameters
presented as superscripts; for example, $q_0^i$ or $q_4^{s,t}$.
In the description and analysis of our protocol, we may drop some part 
of the state notation.
For example, a state $q_0^i$ will mean 
a state $<q_0^{i,},X,G>$ for any values of $X$ and $G$, 
and a state $q_0$ will mean a state $q_0^i$ for any value of $i$.
The terms like ``a $q_i$ node'' and ``an $A$ node" will mean 
a node in a $q_i$ state and a node in set $A$ (in a state with $X = A$), respectively.

\subsection{Selecting leader candidates}
\label{elect}

The first part of the protocol can be viewed as consisting of five phases: 
(i) decide the $X$ value; (ii) decide the $G$ value; (iii) wait to ensure that \whp\ all nodes have decided both their 
$X$ and $G$ values; (iv) perform Bernoulli trials; 
(v) eliminate from the leader contention all nodes with the number of successes less
than the maximum.
Let $c \ll C$ be two suitably large constants. 
The states are:
$q_0^i$ with $0 \leq i \leq c \log n$;
$q_1^i$ with $1 \leq i \leq (C/4) \log n$; 
$q_2^i$ with $0 \leq i \leq C \log n$;
$q_3^{s,i}$ and $q_4^{s,t}$ with
$0 \leq i \leq C^2 \log n$ and 
$0 \leq s,t \leq C \log n$; and
$q_5^{s,i}$ with $0 \le s \leq 2C \log n$ and $0 \le i \leq 3C \log n$. 
Index $i$ counts the interactions,
$t$ counts the Bernoulli trials,
and $s$ is related to the number of successes.
The first part of the protocol ends when the first node enters 
state $q_6^{0,0,0}$. The overview of the 
computation is below and the details of the state transitions are included in Appendix.

At the beginning,
each node starts in a state $<q_0^i,N,N>$, where $0 \leq i \leq c \log n$.
We could assume that initially
all nodes are in the same state $<q_0^0,N,N>$,
but we need in the analysis this slack of $0 \leq i \leq c \log n$
to cover the restarts.
Every node, based on 
its interactions with other nodes,
joins either set $A$ or $B$ (the first phase) and sets its $G$-value to either $0$ or $1$ (the second phase). 
W.h.p.\ 
for each of the four combinations of set $A$ or $B$ and the $G$-value $0$ or $1$,
the number of nodes having this combination will be
$\Theta(n)$. 
The $B$ nodes perform $C \log n$ 
independent Bernoulli trials: each meeting with
an $A$ node is a trial,
and the success is meeting an $A$ node with the $G$ value~$1$.
%
If $B$ nodes were to test the $G$ bit of other nodes at each interaction, including the interactions 
with other $B$ nodes, then the counts of successes would not be independent. 
We show that with constant probability
the highest number 
of successes at a node is unique among the $B$-nodes.

The nodes in states $q_0^i$ have not decided yet their $X$ value.
The nodes in states $q_1^{i}$ have already decided whether their $X$ is $A$ or $B$.
If a $<q_0^{i},N,N>$ node $v$ interacts with a $<q_1^{j},A/B,N>$ node, then it 
joins the other set and switches to $<q_1^{0},B/A,N>$.
If a node $v$ reaches state $<q_0^{c \log n},N,N>$, 
that is, $c \log n$ interactions without seeing a $q_1$ node, and
meets again on its next interaction a node not in $q_1$, then $v$ decides to join set $A$,
switching to state $<q_1^{0},A,N>$.

The transitions for the $q_1$ phase convert the $G$-entry of each node from $N$
to $0$ or $1$.
A $q_1$ node first waits for $C/4 \cdot \log n$ interactions, to ensure that \whp\ all nodes 
have made their $A$-or-$B$ decisions, and then
sets its $G$-value to $0$ or $1$, if it meets a node with $X\in \{A,N\}$ or $X = B$,
respectively, and enters state $q_2^0$.
%
The $q_2$ phase is simply waiting for $C \log n$ interactions, to ensure 
that \whp\  all nodes have decided both their $X$ and $G$ values.
A node which reaches the $q_2^{C \log n}$ state, switches to state $q_4^{0,0}$ on its next interaction.
Nodes in $q_4$ states perform Bernoulli trials,
going through states $q_4^{s,t}$, where $t$ counts 
the number of interactions with $A$-nodes (the trials) and
$s$ counts the number of $1$'s seen during these interactions
(the successes). 
After $C \log n$ interactions with $A$-nodes, a $q_4^{s,C \log n}$ node
switches to $q_5^{s',0}$, where 
$s' = s$ for $A$ nodes and $s' = s + C \log n$ for $B$-nodes.
This effectively eliminates all $A$ nodes as leader contenders.
We will show that that the maximum value~$s$ occurring 
at a $<q_4^{s,C \log n}, B, *>$ node is unique with  constant probability (Lemma
\ref{lem2}).

In the next phase, 
the maximum $s$-value at $q_5^{s,i}$ nodes is broadcast to all nodes. 
If a $q_5$ node encounters a higher $s$ value, then it knows that it
is not a leader candidate,
switches to state $q_3$ and participates in broadcasting the maximum 
$s$ value.
More precisely, if a node $v$ in state 
$q_5^{s',i'}$
interacts with a node $w$ with a higher $s$ value,
that is, in a state $q_5^{s'',i''}$ or $q_3^{s'',i''}$ with $s''>s'$, 
then $v$ takes the $s$ value of $w$ and converts into a follower, switching to $q_3^{s'',0}$.
The new higher value~$s''$ is propagated further by $v$. 
If a node does not convert to $q_3$ within $3C \log n$ steps, then it assumes that its $s$ value is maximum and 
declares itself a leader candidate by switching to state $q_6^{0,0,0}$.

At least one node eventually turns into the $q_6^{0,0,0}$ state, but 
with constant probability this node is not unique because with constant probability the node achieving the highest $s$ value in 
the $q_4$ phase is not unique. 
In the second part of our protocol, described in Section \ref{sec3},
the nodes will find out whether the leader candidate is unique or not. 
If the 
leader candidate is
unique, then this node becomes the leader. 
Otherwise, a leader candidate that detects another candidate 
broadcasts this 
information through the system, restarting the whole protocol.

\subsection{Analysis of the first part of the protocol: selection of leader candidates}
\label{AnalysisPartOne}

The analysis of this algorithm is divided into three parts.
Initially each node is 
in some state $<q_0^i,N,N>$ with $0 \le i \le c \log n$. 
First we show that at the time step
when the first one or two nodes switch to state $q_4^{0,0}$,
\whp\ 
all other nodes are in states $<q_2^i,X,G>$ with $X\in \{ A,B\}$, 
$G \in \{ 0,1\}$ and $i > (C-3c)\log n$.
Additionally, \whp\
for each of the four combinations of $X$ $\in \{A,B\}$ and $G \in \{0,1\}$, 
the number of nodes in states $<q,X,G>$, 
where $q$ is $q_2^i$ or $q_4^{0,0}$, is~$\Theta(n)$.

Next, we show that \whp\ there will be a time step, in which one or two nodes are in state $q_5^{s, C \log n}$
and all other nodes are in states $q_5^{s', i'}$, with $i' < C \log n$,
and $q_3^{s'',i''}$.
Furthermore, with constant 
probability there is a unique node $w \in V$ in a state $q_5^{s_{\max},i}$, where 
$s_{\max}$ is the maximum $s$ over the nodes in states $q_5^{s,i}$, $i \leq C \log n$. 
%
Finally, there will be a time step when one node switches to state $q_6^{0,0,0}$, 
while \whp\ all other nodes are in states $q_3$ or $q_5^{s,i}$, where $i > (2C-5c) \log n$.  

\begin{lemma}
\label{lemAB}
With high probability there is a time step, at which one or two nodes are 
switching to state $q_2^{0}$ while all other nodes are in 
states $q_1^{i}$, with $i \leq C/4 \cdot \log n$. 
Thus
each node has already decided whether it is in set $A$ or $B$, and \whp\
the number of $A$ nodes and the number of $B$ nodes are both  
at least $n(1-o(1))/32$.
\end{lemma}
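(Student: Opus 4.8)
The plan is to track the spread of the "$A$-or-$B$ decision'' exactly as a push-pull broadcast and then bound the resulting set sizes. First I would observe that a node leaves state $q_0$ and enters $q_1$ (with its $X$ value fixed) in one of two ways: either it meets a node already in $q_1$ (and copies the opposite set), or it survives $c\log n$ interactions plus one more without ever seeing a $q_1$ node and then self-promotes to $<q_1^0,A,N>$. The key point is that once the \emph{first} node self-promotes (at some step $T_0$), the set of $q_1$ nodes thereafter grows precisely according to the asynchronous push-pull broadcast rule from the excerpt: in each subsequent interaction, if one endpoint is in $q_1$ then both are in $q_1$ afterwards. Hence by the $O(\log n)$-time push-pull broadcast lemma cited in the excerpt, \whp\ within $c'n\log n$ interactions after $T_0$ every node is in $q_1$ (for a suitable constant $c'$, absorbed into the choice of $c\ll C$). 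Moreover \whp\ no node self-promotes late: a node reaches $q_0^{c\log n}$ only if its first $c\log n$ interactions all avoid $q_1$ nodes, and once a constant fraction of nodes are in $q_1$ this has polynomially small probability per node; so \whp\ the only self-promotions happen in the initial $O(n\log n)$-interaction window, and the broadcast description is valid from $T_0$ on. This gives the first claim: \whp\ there is a global step at which one or two nodes switch to $q_2^0$ (i.e.\ finish the $C/4\cdot\log n$ waiting counter of the $q_1$ phase) while all others are still in $q_1^i$ with $i\le C/4\cdot\log n$, because the $q_1$ counter takes $C/4\cdot\log n\gg c'\log n$ interactions to run down, so the spread in $q_1$-entry times (at most $c'\log n$ \whp, by Chernoff on per-node interaction counts as in Lemma~\ref{lem:synchrinization}) is dwarfed by the length of the counter.

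For the set-size bound, I would argue that the broadcast is ``balanced'': each newly recruited $q_0$ node joins the set \emph{opposite} to the $q_1$ node it met. I would split time into the initial self-promotion window $W_1$ (length $O(n\log n)$) and the broadcast window $W_2$. In $W_1$, \whp\ only $o(n)$ nodes have been recruited (each of $n$ nodes has only $O(\log n)$ interactions \whp, and only those meeting an already-converted node convert, so the converted population in $W_1$ is at most the size of a branching-type process that stays $o(n)$ — or more crudely, one shows the count is $o(n)$ \whp). At the end of $W_1$ some seed set $S$ of size $o(n)$ exists, possibly all in set $A$. Then in $W_2$ the remaining $n-|S|=(1-o(1))n$ nodes are recruited by the push-pull broadcast, and the crucial sub-claim is that \whp\ each of the two sets receives at least a constant fraction of these. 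The natural way to see this: consider the first $q_1$ node $u$ (in set $A$); the nodes it directly recruits go into $B$, and by a standard ``early spreaders'' argument in push-pull broadcasting, \whp\ $u$ directly recruits $\Omega(n)$ nodes before the broadcast saturates (the broadcast takes $\Theta(\log n)$ parallel time, during which $u$ participates in $\Theta(\log n)$ interactions, but that only gives $\Theta(\log n)$; instead one argues that \emph{level-1} of the broadcast tree — the nodes recruited directly by the earliest $O(1)$ seeds — already has size $\Omega(n/\log n)$, and each such node is in $B$; recursing once more, level-2 nodes are in $A$ and there are $\Omega(n)$ of them, etc.). A cleaner route: in the doubling regime of the broadcast, for a $\Theta(\log n)$-length stretch a constant fraction of all interactions are between an informed and an uninformed node; partition these ``recruiting interactions'' by the parity of the recruiter's distance-from-seed — recruiters at even distance are in $A$, at odd distance in $B$ (up to the $o(n)$ seed set). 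Since at any moment both parities are present in $\Omega(n)$ quantity once the informed set has size between, say, $n/\log n$ and $n/2$, a constant fraction of recruiting interactions have each parity, yielding $\Omega(n)$ recruits of each set. With the constants chosen so that this fraction is at least $1/32$ (after accounting for the $(1-o(1))$ factor from $|S|=o(n)$), the stated bound $\ge n(1-o(1))/32$ follows.

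The main obstacle I expect is exactly this last sub-claim — showing the broadcast recruits $\Omega(n)$ nodes into \emph{each} of the two sets rather than having one set be only $o(n)$. The worry is a ``pathological'' broadcast tree that is a long path (each informed node recruits essentially one other), which would make the two sets roughly balanced but is anyway impossible \whp; or a ``star from one seed'' shape, which would put nearly everything in one set — this is the real enemy. Ruling it out requires the quantitative statement that during the exponential-growth phase of the push-pull broadcast, the informed set roughly doubles every $\Theta(1)$ parallel-time units, so the number of \emph{recruiting events} at each ``generation'' is $\Omega(|\text{informed set}|)$, and hence both parities are well-represented throughout the middle range of informed-set sizes. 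I would prove this by the same coupling/Chernoff machinery used for the broadcast-time lemma in the Appendix: condition on the informed set having size $m$ with $n/\log n\le m\le n/2$; then in the next $n$ interactions the number of recruiting interactions is $\Theta(m)$ \whp, and since both distance-parities occupy $\Omega(m)$ of the informed set (itself an inductive invariant seeded by the two-generation base case), both parities recruit $\Omega(m)$ new nodes. Everything else — the Chernoff bounds on per-node interaction counts controlling the $\le C/4\cdot\log n$ synchronization slack, and the polynomially-small failure probabilities — is routine and parallels Lemma~\ref{lem:synchrinization}.
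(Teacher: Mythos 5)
Your first claim (all nodes enter $q_1$ well before any node exhausts its $C/4\cdot\log n$ counter) is argued essentially as in the paper --- push-pull broadcast time plus Chernoff bounds on per-node interaction counts --- and is fine. The genuine gap is in the second claim, the $n(1-o(1))/32$ lower bound on \emph{both} sets. Your route, tracking the parity of the distance from the seed(s) in the broadcast tree and arguing that both parities receive $\Omega(n)$ recruits during the exponential-growth phase, rests on two sub-claims that are asserted but not established. First, that the seed set of self-promoted nodes (all of which join $A$) has size $o(n)$: this is not obvious and may in fact fail, because the window during which a node can reach $q_0^{c\log n}$ without ever meeting a $q_1$ node has length $\Theta(\log n)$ periods, the same order as the broadcast time, so depending on the constants a constant fraction of the population could self-promote into $A$ before the broadcast saturates; your $W_1$/$W_2$ decomposition then collapses. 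Second, the ``inductive invariant'' that both parities occupy an $\Omega(1)$ fraction of the informed set throughout the middle range of its growth is exactly the hard part --- you identify it yourself as the main obstacle --- and neither the base case (a single parity-$0$ seed) nor the induction step is actually carried out; making it rigorous would require an urn/martingale analysis of the anti-reinforcing recruitment dynamics, which you only gesture at.

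The paper sidesteps the broadcast-tree structure entirely with a short stopping-time argument: consider the first moment at which one set, say $A$, reaches size exactly $n/4$ while $|A|>|B|$. At that moment at least $n/2$ nodes are still in $q_0$, and each of them, when it eventually converts by meeting a $q_1$ node, meets an $A$ node (and hence joins $B$) with probability at least $1/4$. Ordering the undecided nodes by conversion time and keeping only alternate indices to extract independent trials, one gets at least $n/32$ nodes in $B$ in expectation, and Chernoff finishes; if neither set ever reaches $n/4$ while being the larger, both trivially end with at least $n/4$. This argument is robust to an arbitrary number of self-promoted $A$ seeds and to the shape of the broadcast tree. As written, your proposal does not constitute a proof of the size bound; I would replace the parity argument with a stopping-time argument of this form.
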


\begin{lemma}
\label{lem1}
When the first node switches to state $q_4^{0,0}$, then \whp\ there is no node in state $q_1$.
Moreover, \whp\ there are $\Theta(n)$ 
nodes in states $<*,x,g>$, for each $(x,g)\in \{ A,B\}\times \{ 0,1\}$. 
\end{lemma}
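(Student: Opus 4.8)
The plan is to combine the timing analysis established for Lemma~\ref{lemAB} with two applications of Chernoff-style concentration: one to control the local clocks, and one to count nodes in each of the four $(X,G)$-classes. I would first argue the ``no node in state $q_1$'' part. By Lemma~\ref{lemAB}, \whp\ there is a global time step $T_0$ at which one or two nodes switch to $q_2^0$ while all other nodes are in states $q_1^i$ with $i \le (C/4)\log n$. From $T_0$ onwards every node still in a $q_1$ state needs at least $(3C/4)\log n$ further personal interactions to finish the $q_1$ waiting phase and the $G$-setting interaction, and then $C\log n$ more interactions in $q_2$, and then one more interaction to reach $q_4^{0,0}$ -- so roughly $(7C/4)\log n$ personal interactions. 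Meanwhile the leading node(s) in $q_2^0$ need only $C\log n + 1$ personal interactions to reach $q_4^{0,0}$. Using the fact that in any window of $Cn\log n$ interactions each node is selected $(2C \pm c)\log n$ times \whp\ (the Chernoff bound quoted in the introduction, applied via a union bound over the $n$ nodes and over the $O(\log n)$ relevant windows), the slowest node cannot lag the fastest node by more than $c\log n$ personal interactions over the relevant horizon. Since $(7C/4)\log n - (C\log n + 1) \gg c\log n$ for $C$ large enough relative to $c$, no node can be in a $q_1$ state when the first node reaches $q_4^{0,0}$; in fact all non-leading nodes are in $q_2^i$ with $i$ close to $C\log n$. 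This is essentially the ``first part'' of the three-part analysis sketched just before the lemma.

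For the counting part, I would track a fixed class, say nodes that end up in $\langle *, A, 1\rangle$. The set membership ($A$ versus $B$) is frozen by time $T_0$ \whp, and by Lemma~\ref{lemAB} both the $A$-count and the $B$-count are at least $n(1-o(1))/32 = \Theta(n)$. The $G$-value of a node is set on its first interaction after it leaves the $q_1$ waiting sub-phase: it becomes $1$ iff that interaction partner has $X = B$, and $0$ iff the partner has $X \in \{A, N\}$. At the moment a given node makes this decision, \whp\ every node has already fixed its $A$/$B$ membership (this is precisely what the $(C/4)\log n$ wait in $q_1$ buys us, again via the clock-synchronisation bound), so the partner is a uniformly random node among the other $n-1$, and is a $B$-node with probability $\Theta(1)$ (between roughly $1/32$ and $31/32$). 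These decision events across different nodes are not independent, but each is determined by a single scheduler choice, and the conditional probability of "partner is a $B$-node" is $\Theta(1)$ regardless of the history of which nodes have already decided (since only $\Theta(\log n)$ clock-slack nodes could still be undecided, an $o(1)$ fraction). Hence the number of $A$-nodes that set $G=1$ stochastically dominates a sum of $\Theta(n)$ Bernoulli($\Theta(1)$) variables in the Doob-martingale sense, and a bounded-difference / Azuma argument gives concentration around its $\Theta(n)$ mean. The same argument handles the other three classes, and a union bound over the four classes finishes the proof.

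The main obstacle is the dependence structure in the $G$-setting step: the events ``node $v$ sets $G=1$'' for different $v$ share the scheduler and, more subtly, a node's $G$-decision interaction may involve a partner whose own $A$/$B$ label was itself only recently decided. I would handle this by conditioning on the (\whp) event of Lemma~\ref{lemAB} that all labels are frozen early and that the clocks stay within $c\log n$, so that by the time any node performs its $G$-decision at least $n - O(\log n)$ nodes already have a definite label in $\{A,B\}$ with both counts $\Theta(n)$; then expose the scheduler choices in time order and bound the one-step martingale differences by $O(1)$. A secondary technical point is getting the phase boundaries right so that the $q_1 \to q_2 \to q_4$ pipeline provably cannot be ``skipped'' by a lagging node -- this is where the constant $C$ must be chosen large enough relative to $c$, and I would simply state the required inequality (of the form $C \ge C_0(c)$) rather than optimise it. Everything else is a routine union bound of the standard \whp\ events already used repeatedly in the paper.
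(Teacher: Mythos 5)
Your proposal matches the paper's proof in both structure and technique: the timing claim is handled by Chernoff concentration of per-node interaction counts (the paper phrases this via ``periods'' of $n$ consecutive interactions), and the $\Theta(n)$ count for each $(X,G)$ class is obtained, exactly as in the paper, by a Doob martingale over the nodes' $G$-decisions with $O(1)$-bounded differences and Azuma--Hoeffding, after conditioning on all $A/B$ labels being frozen early. One arithmetic slip in your first paragraph: a node in $q_1^i$ needs at most $(C/4)\log n + 1$ further personal interactions to leave $q_1$ (not ``at least $(3C/4)\log n$''), and the inequality actually needed is $C\log n + 1 - c\log n > (C/4)\log n + 1$, which your (correct) premise that no node lags the leader by more than $c\log n$ personal interactions delivers once $C$ is large enough relative to $c$.
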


\begin{lemma}
\label{lemprev2}
When the first node enters state $q_5^{s,C \log n}$, \whp\ all other nodes are in states $q_5$ and $q_3$.
\end{lemma}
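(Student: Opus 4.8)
Proof proposal for Lemma~\ref{lemprev2} (When the first node enters state $q_5^{s,C\log n}$, w.h.p.\ all other nodes are in states $q_5$ and $q_3$.)

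The plan is to track the progress of the fastest node through the $q_4$ and $q_5$ phases and argue, via the synchronization bounds used throughout the analysis, that no other node can lag so far behind as to still be in a $q_2$ (or earlier) state when the leader reaches $q_5^{s,C\log n}$. First I would recall that a $q_4$ node needs $C\log n$ interactions with $A$-nodes to exit $q_4$, and then a $q_5$ node needs up to $3C\log n$ further interactions before switching to $q_6$; in particular reaching $q_5^{s,C\log n}$ from $q_4^{0,0}$ requires the node to complete $C\log n$ successful Bernoulli trials (i.e.\ $C\log n$ interactions with $A$-nodes) and then $C\log n$ more steps in $q_5$. Since by Lemma~\ref{lem1} there are $\Theta(n)$ $A$-nodes throughout this window, each interaction of a fixed node is with an $A$-node with constant probability, so by a Chernoff bound the number of its own interactions needed to accumulate $C\log n$ $A$-interactions is $\Theta(C\log n)$ w.h.p.; together with the $q_5$ steps this means the leader has performed $\Theta(C\log n)$ personal interactions since entering $q_4^{0,0}$, and hence $\Theta(n\log n)$ global interactions have elapsed.

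Next I would invoke the first claim proved in Section~\ref{AnalysisPartOne} (and in the spirit of Lemma~\ref{lem:synchrinization}): at the moment one or two nodes first switch to $q_4^{0,0}$, w.h.p.\ every other node is already in a state $<q_2^i,X,G>$ with $i>(C-3c)\log n$. So every node needs only $O(c\log n)$ more of its own interactions to finish $q_2$ and enter $q_4^{0,0}$, and $\Theta(C\log n)$ more to clear $q_4$. By the standard load-balancing estimate (each node is selected $(2C\pm c)\log n$ times in any window of $Cn\log n$ interactions, w.h.p.), during the $\Theta(n\log n)$ global interactions that the leader spends getting to $q_5^{s,C\log n}$ every other node has been selected $\Omega(C\log n)$ times — enough, with room to spare if the constant $C$ is chosen large relative to $c$, to pass through the remainder of $q_2$ and all of $q_4$. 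Hence at that global time no node is still in $q_1$ or $q_2$ or in $q_4$; the only reachable states are $q_5$ and $q_3$ (a node leaves $q_4^{s,C\log n}$ by going to $q_5^{s',0}$, and leaves $q_5$ only to $q_3$ or $q_6$, and by minimality of the leader no node has yet reached $q_6$). A union bound over the $O(1)$ w.h.p.\ events (the synchronization claim, the $\Theta(n)$-many-$A$-nodes claim, the Chernoff bound on $A$-interaction counts, the load-balancing bound) gives the lemma.

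The main obstacle I anticipate is the quantitative matching of constants: I must verify that the number of global interactions the leader consumes in $q_4\cup q_5$ is simultaneously (a) large enough that a lagging node, starting from $i>(C-3c)\log n$ in $q_2$, is pushed all the way through $q_4$, yet (b) this does not already force that lagging node into $q_6$, which would contradict minimality — but (b) is automatic since a node cannot reach $q_6$ strictly before the leader does. So the real work is the lower bound in (a): showing the slack $C-3c$ plus the $3C\log n$ budget in $q_5$ dominates the $\Theta(C\log n)$ cost of clearing $q_4$ when trials succeed only with constant probability. This is where the assumption $c\ll C$ is used, exactly as in Lemmas~\ref{lemAB} and~\ref{lem1}, and I would phrase the bookkeeping so that the same constants work consistently across the three lemmas of this subsection. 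The argument is otherwise routine given Lemmas~\ref{lem:synchrinization}, \ref{lemAB}, and \ref{lem1}.
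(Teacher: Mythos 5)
Your overall strategy is the same as the paper's: use Chernoff bounds on individual interaction counts together with $|A|=\Theta(n)$ (Lemmas~\ref{lemAB} and~\ref{lem1}) to bound when nodes pass through $q_2$ and $q_4$, and conclude that all nodes have left $q_4$ by the time the first node accumulates $C\log n$ interactions in $q_5$. However, the step you lean on to finish --- ``during the $\Theta(n\log n)$ global interactions the leader spends in $q_4\cup q_5$, every other node is selected $\Omega(C\log n)$ times, which is enough, with room to spare if $C$ is chosen large relative to $c$, to clear $q_4$'' --- does not work as stated, and it is precisely the content of the lemma. Clearing $q_4$ requires $C\log n$ interactions with $A$-nodes, i.e.\ about $(n/|A|)\,C\log n$ personal interactions, where $n/|A|$ can be as large as roughly $32$. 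This is the \emph{same order, with the same constant,} as the number of personal interactions the leader itself spends clearing $q_4$. So a one-sided comparison of an ``$\Omega(C\log n)$'' supply against a ``$\Theta(C\log n)$'' demand cannot be resolved by taking $C$ large: both sides scale linearly in $C$, and the laggard's demand exceeds a generic lower bound on its supply unless you control the constants on both sides to within a $(1\pm\epsilon)$ factor. Increasing $C$ relative to $c$ only kills the $O(c\log n)$ terms; it does nothing about the dominant $(n/|A|)\,C\log n$ term.

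What actually closes the argument (and what the paper does) is a \emph{two-sided} concentration estimate: the number of personal interactions needed to accumulate $C\log n$ $A$-interactions is concentrated in a window $\bigl[(Cn/(2|A|)-C/128)\log n,\ (Cn/(2|A|)+C/128)\log n\bigr]$ of periods, uniformly over all nodes, so the spread between the period in which the \emph{first} node enters $q_5^{*,0}$ and the period in which the \emph{last} node enters $q_5^{*,0}$ is at most $(c+7C/128)\log n$ periods --- the unknown $(n/|A|)$ term cancels in the difference. In that many periods no node has more than $C\log n$ interactions \whp, which is exactly the budget the first node must spend in $q_5$ before reaching $q_5^{s,C\log n}$; hence every $q_4$ node has moved to $q_5$ (or on to $q_3$) by then. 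Your closing paragraph gestures at this comparison, but you (i) never state the cancellation of the $(n/|A|)$ terms, which is the crux, and (ii) misstate the available budget as $3C\log n$ --- only $C\log n$ interactions in $q_5$ are available before the first node hits $q_5^{s,C\log n}$, so the constants are tighter than you suggest. You also apply the Chernoff bound on $A$-interaction counts only to the leader; it must be applied (with a union bound) to every node to control the laggards. As written, the proof has a genuine gap at its decisive step.
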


\begin{lemma}
\label{lem2}
Let us consider the first step, in which no node is in state $q_4$ 
(the last $q_4$ node has just switched to $q_5$). 
Then \whp\ all nodes are either 
in states $q_5^{s,i}$ with $i < C \log n$ 
or states $q_3$. Furthermore, 
for
$s_{\max} = \max \{ s ~|~ \exists \mbox{$v$ in $q_5^{s,i}$ or $q_3^{s,i}$}\}$,
and $V_{\max} = \{ v ~|~ v \mbox{ in } q_5^{s_{\max},i}\}$, 
always $|V_{\max}| \ge 1$, with constant probability 
$|V_{\max}| = 1$,
and \whp\ $|V_{\max}| = O(\log n)$.   
\end{lemma}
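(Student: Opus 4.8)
The plan is to establish Lemma~\ref{lem2} in three parts, corresponding to its three claims, using the synchronization results of the previous lemmas together with the push-pull broadcast bound and a second-moment (anti-concentration) argument for the Bernoulli trials.

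First I would pin down the global configuration at the moment the last $q_4$ node switches to $q_5$. By Lemma~\ref{lem1}, when the first node enters $q_4^{0,0}$ all nodes are in states $<q_2^i, X, G>$ with $X \in \{A,B\}$, $G\in\{0,1\}$, $i > (C - 3c)\log n$, and each of the four $(X,G)$-classes has $\Theta(n)$ nodes. From here the $q_2$ counters and then the $q_4$ trial counters run essentially in lockstep: in any window of $\Theta(n\log n)$ interactions each node is selected $\Theta(\log n)$ times up to an additive $c\log n$ slack, by the Chernoff bound stated in the introduction. I would argue that a $q_4$ node needs $C\log n$ interactions \emph{with $A$-nodes}, and since a constant fraction of the population is in $A$, the number of all interactions needed is $\Theta(\log n)$ \whp\ (a $B$-node sees an $A$-node with probability $\Theta(1)$ each step, so $C\log n$ successes among $\Theta(\log n)$ steps by another Chernoff bound). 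Consequently all $q_4$ nodes finish their trials within an $O(\log n)$-time window, and when the last one leaves $q_4$, every other node has been in $q_5$ for at most $O(\log n)$ of its own steps, i.e.\ is in a state $q_5^{s,i}$ with $i < C\log n$, or has already been overtaken and moved to $q_3$. This, combined with Lemma~\ref{lemprev2}, gives the first sentence of the lemma. (The $q_5\to q_3$ conversions only happen when a node sees a strictly larger $s$, so no node can be ``ahead'' in a way that breaks consistency.)

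Second, for $|V_{\max}| \ge 1$: the value $s_{\max}$ is by definition attained by some node, and the only way a node in $q_5^{s_{\max},\cdot}$ leaves that state is by meeting a node with a strictly larger $s$-value — impossible by maximality — so at least one node stays in $V_{\max}$. For the anti-concentration claim that $|V_{\max}| = 1$ with constant probability, I would model the successes of the $B$-nodes as follows. Fix the outcome of the first two phases (the $X$ and $G$ assignments); \whp\ there are $m = \Theta(n)$ $B$-nodes, $a_1 = \Theta(n)$ $A$-nodes with $G=1$, and $a_0 = \Theta(n)$ $A$-nodes with $G=0$. Each $B$-node's trial sequence consists of its first $C\log n$ interactions with $A$-nodes; conditioned on which $A$-nodes those are, each trial is an independent $\mathsf{Bin}(1, a_1/(a_0+a_1))$ with success probability $p = \Theta(1)$ bounded away from $0$ and $1$. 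I would first handle the cleaner i.i.d.\ idealization in which every $B$-node independently performs $C\log n$ $\mathsf{Bernoulli}(p)$ trials, show the max is unique with constant probability there, and then argue the real process is close enough. In the idealized model, $S_{\max} = \max$ of $m$ i.i.d.\ $\mathsf{Bin}(C\log n, p)$ variables concentrates around some value $k^* = pC\log n + \Theta(\sqrt{\log n \log m}) = \Theta(\log n)$; the point-probability $\Pr[\mathsf{Bin} = k^*] = \Theta(1/\sqrt{\log n})$ while $\Pr[\mathsf{Bin} \ge k^*] = \Theta(1/n)$ (choosing $k^*$ as the appropriate quantile), so the \emph{expected number} of nodes hitting exactly the top value is $\Theta(1)$ and, crucially, the ratio $\Pr[\mathsf{Bin}=k^*]/\Pr[\mathsf{Bin}\ge k^*] = \Theta(\sqrt{\log n})$ — wait, that ratio blows up, which actually forces the max to be attained by $\Theta(\sqrt{\log n})$ nodes in expectation. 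The correct statement is subtler: one shows that with probability $\Omega(1)$ exactly one node lands strictly above all others, using a Poisson-approximation / inclusion–exclusion argument on the top few levels, exploiting that $\Pr[\mathsf{Bin} \ge k+1]/\Pr[\mathsf{Bin}\ge k] \to p < 1$, so the number of nodes at the very top level is dominated by a constant-mean random variable and equals $1$ with constant probability. I would make this rigorous via a two-moment computation: let $N_k$ be the number of $B$-nodes with exactly $k$ successes; choose $k$ so that $\E[N_k^{\ge}] := \E[\#\{v : \text{score} \ge k\}] \in [1, 1/p]$, and show $\Pr[N_k^{\ge} = 1 \text{ and } N_{k-1}^{\ge} \ge 1]$ is bounded below by a constant using $\Var$ bounds and the near-geometric tail ratio.

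Third, for the \whp\ bound $|V_{\max}| = O(\log n)$: this is a pure upper-tail estimate. With the scores being (nearly) i.i.d.\ $\mathsf{Bin}(C\log n, p)$, $s_{\max} \le p C\log n + \gamma\sqrt{\log n}$ \whp\ for a suitable constant $\gamma$ (union bound over $m=\Theta(n)$ nodes against the Chernoff tail), and also $s_{\max} \ge pC\log n + \gamma'\sqrt{\log n}$ \whp\ for a smaller $\gamma'$. At any fixed level $k$ in this $O(\sqrt{\log n})$-width band, $\Pr[\text{score} = k] \le \Pr[\text{score} \ge \gamma'\sqrt{\log n} + pC\log n] = O(1/n)$ by the lower-tail choice, so $\E[N_k] = O(1)$ and $N_k = O(\log n)$ \whp\ by a Chernoff bound on the sum of $m$ independent indicators; then $|V_{\max}| = N_{s_{\max}} \le \max_k N_k = O(\log n)$ \whp\ by a union bound over the $O(\sqrt{\log n})$ relevant levels.

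The main obstacle is the constant-probability uniqueness claim — the anti-concentration argument for the maximum of many i.i.d.\ binomials, and then transferring it from the idealized independent model to the actual trial process (where the trials of a $B$-node are its interactions with $A$-nodes, so different $B$-nodes' trial sequences share the same $A$-population and are only conditionally independent given the interaction pattern). I expect the transfer to go through by conditioning on the (random but \whp\ well-behaved) vector of $A$-node $G$-values and on a typical interaction schedule, under which each $B$-node still sees an i.i.d.-$\mathsf{Bernoulli}(p')$ sequence with $p'$ within $o(1)$ of $p$; the small perturbation in $p$ does not affect the constant-probability conclusion. Everything else — the synchronization window, the $\ge 1$ claim, and the $O(\log n)$ upper bound — follows routinely from Chernoff bounds, the broadcast lemma, and the earlier lemmas.
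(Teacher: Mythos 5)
Your overall route is the paper's route: restrict attention to the $B$-nodes (whose scores are shifted up by $C\log n$ so the maximum lands in $B$), treat their success counts as independent $\mathsf{Bin}(C\log n,p)$ variables with $p=\Theta(1)$, locate the level $k$ at which $\Pr[\mathsf{Bin}=k]$ crosses $1/n$, and exploit the fact that the ratio $P_{k+1}/P_k$ is a constant strictly less than $1$ at that level, so the occupancy of the top level is dominated by a constant-mean (Poisson-like) variable and equals $1$ with constant probability. The paper phrases this as the classical ``unique maximum degree in $G(n,\Theta(\log n)/n)$'' computation; your Poisson-approximation/two-moment formulation after the self-correction is the same argument. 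Your extra care about the $B$-nodes' trials being only conditionally independent is reasonable and in fact resolves the way you suggest (each interaction draws a fresh uniform partner, so conditioned on the $G$-assignment the trials are genuinely i.i.d.\ $\mathsf{Bernoulli}(a_1/|A|)$); the paper simply asserts independence, which is exactly why the protocol counts only interactions with $A$-nodes as trials.

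There is, however, one concrete step in your part three that fails as written: the claim that $s_{\max}\le pC\log n+\gamma\sqrt{\log n}$ \whp\ via a union bound over $\Theta(n)$ nodes. A deviation of $\gamma\sqrt{\log n}$ is only $O(1)$ standard deviations of $\mathsf{Bin}(C\log n,p)$, so the per-node tail probability there is $\Theta(1)$ and the union bound gives $n\cdot\Theta(1)$; in fact \whp\ $\Theta(n)$ nodes exceed that threshold. The correct scale --- which you yourself use in part two --- is a deviation of $\Theta(\log n)$, i.e.\ $\Theta(\sqrt{\log n})$ standard deviations, where the point probabilities become $\Theta(1/n)$; your parenthetical ``$\Pr[\mathsf{Bin}=k^*]=\Theta(1/\sqrt{\log n})$'' is the same scale confusion. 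With the scale fixed, the $O(\log n)$ bound goes through as in the paper: one shows \whp\ $s_{\max}\ge k_{\delta'}-r\log\log n$ because the total mass of that $O(\log\log n)$-level band is $\Theta(\log n/n)$ per node, hence the expected number of nodes with score in the band is $\Theta(\log n)$, and a Chernoff bound on these independent indicators caps $|V_{\max}|$ at $O(\log n)$ \whp. Your first part (synchronization, so that when the last node leaves $q_4$ every other node is in $q_5^{s,i}$ with $i<C\log n$ or in $q_3$) and the trivial $|V_{\max}|\ge 1$ claim match the paper, which delegates the former to Lemma~\ref{lemprev2}.
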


\begin{lemma}
\label{lem3}
Just before the step when the first node changes from $q_5^{s,3C \log n}$ to $q_6^{0,0,0}$,  
\whp\ each node is either in state $q_5^{s_{\max},i}$ (that is, belongs to $V_{\max}$)
or on state $q_3^{s_{\max},i}$.
\end{lemma}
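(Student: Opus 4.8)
The plan is to track, step by step, the progress of the broadcast of the maximum $s$-value through the system during the $q_5$ phase, and to show that by the time the first node has spent the full $3C\log n$ steps in $q_5$ without being beaten, every other node has already adopted $s_{\max}$ (either staying in $q_5^{s_{\max},i}$ or having converted to $q_3^{s_{\max},i}$). I would first invoke Lemma~\ref{lem2} as the starting configuration: at the first step with no node in $q_4$, \whp\ every node is in a state $q_5^{s,i}$ with $i<C\log n$ or in a $q_3$ state, the set $V_{\max}$ is nonempty, and $|V_{\max}|=O(\log n)$. From this point on, the $s$-values carried by nodes only increase (a node replaces its $s$ by a strictly larger one and resets its counter $i$ to $0$ when it meets a higher value), the value $s_{\max}$ itself never changes because no node creates a new $s$-value, and $s_{\max}$ is held by at least one node throughout.

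The key step is a ``information spreading'' argument on the set $U_t$ of nodes that already hold the value $s_{\max}$ at global step $t$. Once a node is in $U_t$ it stays in $U_{t'}$ for all $t'\ge t$, so $U_t$ is monotone increasing; and whenever a node of $U_t$ interacts with a node holding a smaller $s$-value, the latter joins $U_t$ (this is exactly the conversion rule to $q_3^{s_{\max},0}$ or the in-place update inside $q_5$). This is precisely the asynchronous push-pull broadcast analysed in the Appendix: starting from $|U|\ge 1$, \whp\ after $O(n\log n)$ interactions, i.e.\ $O(\log n)$ parallel time, we have $U=V$. I would choose the constant $C$ large enough that $3C\log n$ comfortably exceeds this broadcast time (measured in the local-clock units of the node that reaches $q_5^{s,3C\log n}$), using Lemma~\ref{lem:synchrinization} (or rather its analogue for the $q_5$ counters, which the earlier analysis in Section~\ref{AnalysisPartOne} already establishes in the statement ``all other nodes are in states $q_3$ or $q_5^{s,i}$ with $i>(2C-5c)\log n$'') to convert between the global step count and the local step count of an individual node. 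Concretely: the first node to reach $q_5^{s,3C\log n}$ has performed $3C\log n$ of its own interactions since entering $q_5$; by the Chernoff-type synchronisation bound this corresponds to $\Theta(n\log n)$ global interactions since the configuration of Lemma~\ref{lem2}, which is more than enough for the $s_{\max}$-broadcast to saturate \whp, provided $C$ is a sufficiently large constant relative to $c$.

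The main obstacle is the interplay between two clocks running at once: the broadcast of $s_{\max}$ needs enough global time to finish, while simultaneously no node may itself reach step $3C\log n$ in $q_5$ too early (which would force a premature switch to $q_6$ before the broadcast completes). I would handle this by the same two-sided Chernoff argument used for Lemma~\ref{lem:synchrinization}: in the window of $\Theta(n\log n)$ interactions needed for saturation, \whp\ no single node is selected more than $(2C'+c)\log n$ times for an appropriate $C'<3C$, so the ``fast'' node cannot overshoot; and \whp\ the slowest-to-be-informed node is reached within the broadcast time bound from the Appendix. A minor additional point to check is that nodes already in $q_3$ at the start (from Lemma~\ref{lem2}) do not impede the spread — but they cannot, since $q_3$ nodes also carry and propagate their $s$-value and update it upward on meeting a larger one, so they behave identically to $q_5$ nodes for the purpose of the $U_t$ monotonicity argument. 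Putting the pieces together: \whp\ the first node to hit $q_5^{s,3C\log n}$ does so only after $U=V$, at which moment every node is in $q_5^{s_{\max},i}$ or $q_3^{s_{\max},i}$, which is the claim.
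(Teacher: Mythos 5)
Your plan matches the paper's own proof essentially step for step: start from the configuration guaranteed by the earlier lemmas of Section~\ref{AnalysisPartOne}, model the spread of $s_{\max}$ (via the monotone set of informed nodes, with $q_3$ nodes also propagating the value) as the asynchronous push--pull broadcast of Theorem~\ref{the:pp1}, and use a Chernoff bound on individual interaction counts to ensure the broadcast saturates before any node's $q_5$ counter reaches $3C\log n$. The approach and the key ingredients are the same, so no further comparison is needed.
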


\subsection{Testing the number of leader candidates}
\label{sec3}

The testing whether there is one or more than 
one leader candidates starts 
when the first candidate enters state $q_6^{0,0,0}$. 
The states in this part of the protocol are:
$q_6^{l,i,j}$ with $l \in \{0,1,2\}$, $0 \leq i,j \leq C \log n$;
$q_7^{l,i}$ with $0 \leq j \leq (3/4)C \log n$, if
$l \in \{ 1,2\}$, and $0 \leq j \leq C^3\log^2 n$, if $l = 0$;
and $q_8$, $q_9$, and $q_{10}$.
Each leader candidate $v$ remains in states $q_6^{l,i,j}$, 
where $j$ counts the test phases, 
$i$ counts the interactions within the current phase,
and $l$ indicates the type of message the candidate $v$ is broadcasting 
in the current phase. 
The value of $l$ equal to $1$ or $2$ indicates 
the message of type $1$ or $2$, respectively, and $l = 0$ means that $v$
has not decided yet which of the two messages to broadcast.
The general idea is that if there are two (or more) leader candidates,
then with constant probability, 
in the same test phase one candidate broadcasts message 1 
and the other message 2. 
The nodes will realize that different types of messages are in the system
(that is, that this test phase gives the negative outcome)
and the protocol will be restarted. 
The test phase gives a non-negative outcome, if there is only one leader 
candidate or if all leader candidates decide to broadcast the same message.
The nodes in states $q_7^{l,i}$ are {\em followers}, which are waiting 
for a message from a leader candidate, if $l = 0$, or are
participating in propagating message $l$, if $l \in \{1,2\}$. 
Index $i$ counts the interactions performed
since receiving the last message. 

A leader candidate starts its $j$-th test phase in state $q_6^{0,0,j}$
and decides at the first interaction which message to broadcast.
When it meets an $A$ node, then it decides 
on message $1$ and 
switches to state $q_6^{1,1,j}$, 
otherwise it decides on message $2$ and switches to state $q_6^{2,1,j}$. 
When a leader candidate has decided to send out 
in the current test phase a message of type $l\in \{1,2\}$, then 
in the next interaction, if it meets a node 
in state $q_7^{0,i}$, then this follower switches to state 
$q_7^{l,1}$. 
If the leader candidate 
interacts in this step with a node that possesses already some message (of any type), 
then it switches to state 
$q_8$, meaning that \whp\ there are at least two leader candidates,
so the whole protocol should restart. 
If this is not the case, then in the next $C \log n -1 $ steps, the leader candidate 
increments its $i$ counter in each step but does not pass on any messages 
(i.e., does not convert any other $q_7^{0,i}$ node to $q_7^{l,1}$).
The leader candidate 
switches to the restart state $q_8$, if it meets a node with a message of different type.
After $C \log n$ interactions, if the leader 
candidate has not switched to state $q_8$, it
proceeds to the next test phase by entering state $q_6^{0,0,j+1}$.
The limit on the number of test phases is set at $C \log n$. 
If a leader candidate does not recognize any other leader candidates 
during $C \log n$ phases, then it switches to state $q_9$,
declaring itself a leader.

If a follower in state $q_7^{l,i}$, with $l \in \{ 1,2\}$ (i.e., an informed  
follower propagating message $l$) 
meets a node in state $q_7^{0,i}$ 
(a follower without any message), then the message is passed to 
the uninformed follower,
who switches to $q_7^{l,1}$.
If two followers carrying different types of messages meet,
then they switch to state $q_8$, triggering a restart.
Otherwise, once the interaction counter of an informed follower reaches 
$C \log n/32$, it stops 
transmitting the message further, 
and when the counter reaches $3 C \log n/4$, it drops the message altogether and
switches to $q_7^{0,0}$ (gets ready for the next test phase).

To restart the protocol, 
when a $q_8$ node meets a node 
which is in a state other than the final leader and non-leader states $q_9$ and $q_{10}$, 
both nodes 
switch to the initial  $<q_0^0,N,N>$. If a $q_0$ node meets a node $u$ which is
in a state other than $q_0$, $q_1$, $q_9$, $q_{10}$,
then $u$ also switches to $<q_0^0,N,N>$.

In our analysis, we first show that if there is only one leader candidate, 
then this node initiates every 
$C \log n+1$ interactions a new message 1 or 2, which is then 
broadcast to all nodes in the system.
The followers 
carry out these broadcasts, but they 
forget the message after $3C \log n/4$ interactions, switching back to the 
listening mode (states $q_7^{0,i}$).
Thus
when a new message is initiated, \whp\ all 
followers are in the listening mode, so the leader 
candidate will see in the system only its current message.
If there are two leader candidates, then \whp\ one of them will realize 
that there is  another 
candidate, if
the time difference
between the initialization of their messages is 
sufficiently large, even if the messages are 
the same. 
If the leader candidates remain closely synchronized, then we rely on the fact that \whp\
in one of $\Theta(\log n)$ test phases, 
one candidate initiates message $1$ while some other candidate initiates 
message $2$. 
When this happens, \whp\ some node will realize 
that there are at least two leader candidates in the system.
In both cases, any node which realizes that \whp\ there are at least two leader candidates
switches to state $q_8$ to trigger restart.
The details of the protocol and the proofs of the following two theorems are in 
the Appendix.

\begin{theorem}
\label{theo1}
With probability 
$1-n^{-\Omega(1)}$,
the leader election algorithm designates a single leader 
in $O(\log^2 n)$ time. 
\end{theorem}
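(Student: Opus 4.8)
The plan is to combine the analysis of the two parts of the protocol --- candidate selection (Section~\ref{elect}) and candidate testing (Section~\ref{sec3}) --- and to argue that a bounded number of restarts suffices. First I would establish the ``one run'' statement: conditioned on starting a fresh run with all nodes in states $<q_0^i, N, N>$ with $0 \le i \le c\log n$, Lemmas~\ref{lemAB}--\ref{lem3} show that \whp\ the first part completes in $O(\log n)$ time and produces a nonempty set $V_{\max}$ of leader candidates, which is a singleton with constant probability and has size $O(\log n)$ \whp. Then I would invoke the analysis of Section~\ref{sec3} (to be stated as lemmas in the Appendix) to show that (i) if $|V_{\max}| = 1$, then every test phase gives a non-negative outcome, so after $C\log n$ phases the unique candidate switches to $q_9$ and, since there is no other candidate, no node ever enters $q_8$, so no restart is triggered and the protocol terminates with a single leader; and (ii) if $|V_{\max}| \ge 2$, then each test phase independently gives the negative outcome with constant probability --- either because the candidates' message-initiation times have drifted apart by enough that one candidate sees a stale message (via the broadcast and the $3C\log n/4$ forgetting window), or because in that phase two candidates pick opposite messages $1$ and $2$ --- so \whp\ a restart is triggered within $O(\log n)$ phases, i.e.\ $O(\log^2 n)$ time. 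The broadcast lemma from the Appendix (each push-pull broadcast finishes in $O(\log n)$ time \whp) is what makes both the message-propagation and the restart-propagation arguments go through, and the local-clock/counter bounds (Chernoff, as in Lemma~\ref{lem:synchrinization}) are what keep all nodes in compatible phases during a run.

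Next I would bound the number of restarts. A run ends in success (single leader declared, no restart) with at least constant probability $p>0$, by the constant-probability uniqueness in Lemma~\ref{lem2} together with case~(i) above. Hence the number $R$ of restarts is stochastically dominated by a geometric random variable with parameter $p$, so $\E(R) = O(1)$ and $R = O(\log n)$ \whp. Each run (first part plus the test phases before it either succeeds or restarts) takes \whp\ $O(\log^2 n)$ time; moreover, conditioned on $|V_{\max}|\ge 2$, the number of test phases before restart is dominated by a geometric variable with constant mean, and these per-run counts are independent across runs, so by a Chernoff bound for sums of independent geometrics the total number of test phases over all (at most $O(\log n)$) runs is \whp\ $O(\log n)$, giving total time \whp\ $O(\log^2 n)$. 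Combining: with probability $1 - n^{-\Omega(1)}$ all the ``\whp'' events above hold simultaneously (there are only polynomially many of them, so a union bound with a large enough constant $\alpha$ loses only a constant factor in the exponent), and on that event the protocol designates a single leader in $O(\log^2 n)$ time.

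Finally I would handle the low-probability bad events so that the statement is genuinely ``with probability $1 - n^{-\Omega(1)}$, single leader in $O(\log^2 n)$ time'' and not merely ``eventually correct.'' If the clock-synchronization fails, or $|V_{\max}|$ is atypically large, or a broadcast is slow, some node detects the inconsistency; the restart mechanism (lines triggering $q_8$ and the subsequent reset to $<q_0^0,N,N>$, plus the observation that $q_0$ nodes pull other non-$q_0$, non-final nodes back to the start) re-initializes the computation, and correctness is preserved because the direct-duel rule among $q_9$ nodes guarantees that even if two or more candidates erroneously declare themselves leaders, all but one eventually revert. The main obstacle, I expect, is the case $|V_{\max}|\ge 2$ with the candidates' clocks staying closely synchronized throughout: one must show that the ``opposite messages in the same phase'' event has constant probability in each phase and that this event is \whp\ actually detected by some node before the phase ends --- this requires carefully tracking, via the broadcast and forgetting windows, that a $q_7$ follower carrying message $1$ and one carrying message $2$ meet while both still hold their messages, which is where the constants $c \ll C$ and the $C\log n/32$, $3C\log n/4$ thresholds must be tuned. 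The bookkeeping to make all these phase-alignment and ``no spurious $q_8$'' arguments precise is the bulk of the work, but each piece is a routine Chernoff-bound-plus-broadcast-lemma argument once set up correctly.
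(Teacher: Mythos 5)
Your proposal is correct and follows essentially the same route as the paper: case analysis on whether the candidate set $V_{\max}$ is a singleton (invoking the single-candidate lemma for success and the multi-candidate detection lemma for a constant-probability restart per test phase), a geometric bound on the number of restarts, and a Chernoff bound for sums of independent geometric variables to bound the total number of test phases across all runs by $O(\log n)$, yielding $O(\log^2 n)$ total time. The paper's proof is exactly this argument, with the per-run test-phase counts $Z_i$ summed over $c'\log n$ restarts.
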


To guarantee that our algorithm 
always (that is, with probability $1$) ends in a configuration with 
exactly one node in the leader state $q_9$ and all other nodes 
in the non-leader state $q_{10}$, we proceed in the following way. 
If a $q_9$ or $q_{10}$ node meets any node in a state other than $q_9$, 
it converts this node into 
state $q_{10}$. 
Furthermore,
a $q_9$-node can be in 
two different states, $<q_9,0>$ or $<q_9,1>$, flipping the $0/1$ bit 
at each interaction. 
If a $<q_9,0>$ node meets a $<q_9,1>$ node,
then the latter switches to the non-leader state $q_{10}$. 
Such duels between leader candidates and flipping the additional bit
ensure that if two or more nodes enter the leader state $q_9$
(a low but positive probability), all but one leader will eventually switch to 
the non-leader state.
One can check that the only 
stable configurations of our protocol are when  
one node is in state $q_9$ and all other nodes are in state $q_{10}$. 
Furthermore, for each other configuration there is a sequence 
of interactions which lead to a stable configuration.
This implies that the algorithm converges with probability $1$.

We believe that 
our protocol, as described so far, has also \emph{expected} 
$O(\log^2 n)$ convergence time, but we do not have a proof.
(Convergence with probability $1$ together with fast convergence \whp\
do not necessarily imply the expected fast convergence.)
We therefore extend our protocol to get $O(\log^2 n)$ convergence time
also in expectation as stated in the theorem below.

\begin{theorem}\label{Thm:LreaderElectionExpectedLog2}
The leader election protocol can be combined 
with a polynomial-time constant-space protocol 
to obtain a leader-election population protocol with $O(\log^2 n)$ states 
and $O(\log^2 n)$ convergence time \whp\ and in expectation.
\end{theorem}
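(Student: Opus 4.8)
The plan is to combine the fast-but-only-\whp{} leader election protocol from Theorems~\ref{theo1} with the obvious polynomial-time constant-space asymmetric protocol (when two leader candidates meet, one becomes a follower, using an extra parity bit to break symmetry in the all-symmetric model) so that the composite protocol inherits the $O(\log^2 n)$ \whp{} bound while also achieving finite, and in fact $O(\log^2 n)$, \emph{expected} convergence time. The subtlety the theorem's preceding remark flags is that convergence with probability $1$ plus $O(\log^2 n)$ convergence \whp{} does not by itself bound $\E(T_C)$: the bad event has probability $n^{-\Omega(1)}$, but on that bad event the protocol as described in Section~\ref{sec3} has no a priori polynomial tail, so the expectation could in principle be infinite.

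The construction I would use runs the two protocols ``in parallel'' in each node's state, at the cost of only a constant factor blow-up in the number of states, so we stay at $\Theta(\log^2 n)$ states. The back-up protocol is run on a separate set of bits and is only consulted as a fallback. The key design point is a \emph{timeout}: every node maintains (reusing the same kind of local interaction counter already used throughout the paper, counting to $\Theta(\log^2 n)$ and thus costing only a constant factor) an estimate of elapsed parallel time; if a node observes that substantially more than $c\log^2 n$ parallel time has elapsed without the main protocol having reached a stable leader/non-leader configuration $\{q_9,q_{10}\}$ (detectable locally by the counter overflowing), it switches irrevocably into ``back-up mode,'' and back-up mode is propagated by the usual push-pull mechanism (as with the $\fail$ state of the exact-majority protocol) so that \whp{} within $O(\log n)$ further time all nodes are in back-up mode. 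From then on the nodes ignore the main protocol and run only the constant-space polynomial-time leader election, which converges in expected $O(n)$ time and with probability $1$.

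The analysis then has three ingredients. First, \whp{} (probability $1-n^{-\Omega(1)}$) the timeout never triggers, because by Theorem~\ref{theo1} the main protocol has already designated the unique leader within $O(\log^2 n)$ time; on this event the composite protocol converges in $O(\log^2 n)$ time and the back-up machinery is never used, so it contributes $O(\log^2 n)\cdot(1-n^{-\Omega(1)})$ to $\E(T_C)$. Second, on the complementary event (probability $n^{-\Omega(1)}$), the timeout fires, all nodes reach back-up mode, and the back-up protocol converges in expected $O(n) = O(n)$ additional time regardless of the starting configuration — here one uses that the asymmetric-with-parity-bit protocol started from \emph{any} configuration reaches the unique-leader configuration in expected polynomial time, which is the standard coupon-collector-style bound and can be cited or proved in a line. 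Hence the contribution of the bad event to $\E(T_C)$ is $n^{-\Omega(1)}\cdot O(\mathrm{poly}\,n) = o(1)$, provided the exponent in the \whp{} bound of Theorem~\ref{theo1} is chosen (as the footnote on \whp{} permits, by enlarging the constants $c,C$) larger than the polynomial degree of the back-up protocol's expected running time. Third, one must check the composite protocol is still \emph{correct with probability $1$}: every reachable configuration still has a path to a stable single-leader configuration, because either the main protocol's own duel-and-flip mechanism (already argued in the excerpt) does the job, or, once any node is in back-up mode, the back-up protocol's convergence-with-probability-$1$ takes over, and the two components do not interfere because back-up mode is absorbing.

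The main obstacle is making the timeout/fallback interface clean: one has to ensure that a node entering back-up mode because of a local clock anomaly does not spuriously destroy a genuine leader that the main protocol has already (correctly) produced — this is handled by having $q_9$ and $q_{10}$ be ``respected'' by the restart/back-up transitions exactly as in the last part of Section~\ref{sec3} (a $q_9$/$q_{10}$ node is never overwritten except by a losing duel), and by noting that if the main protocol \emph{has} stabilized then no node's counter overflows, so the timeout provably never fires on that event. The remaining work — verifying that back-up mode propagates by push-pull in $O(\log n)$ time \whp{} (immediate from the push-pull lemma in the Appendix), that the state count remains $\Theta(\log^2 n)$, and that the expectation arithmetic goes through with the exponent choice above — is routine and I would relegate it to the Appendix.
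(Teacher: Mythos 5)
Your high-level plan (run a timeout, fall back to the constant-space polynomial-time protocol, and do the expectation arithmetic $\,(1-n^{-b})\cdot O(\log^2 n)+n^{-b}\cdot O(n^a)=O(\log^2 n)$ with $b>a$) is exactly the intended strategy, and the correctness, push-pull propagation of the back-up mode, and probability-$1$ convergence parts of your argument are fine. However, there is a genuine gap in your state accounting, and it is precisely the obstacle the paper singles out. You claim that every node can maintain ``an estimate of elapsed parallel time'' by a counter running to $\Theta(\log^2 n)$ ``at the cost of only a constant factor blow-up in the number of states.'' That is not true here: unlike the exact-majority protocol, where the $\Theta(\log^2 n)$ states \emph{are} a single global interaction counter plus $O(1)$ extra bits, the leader election protocol's $\Theta(\log^2 n)$ states are already consumed by pairs of $\Theta(\log n)$-range indices such as $q_4^{s,t}$ and $q_6^{l,i,j}$, and these indices are reset at phase boundaries and at every restart, so none of them can serve as a monotone estimate of total elapsed time. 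Adding a fresh counter to $K\log^2 n$ as a separate state component multiplies the state space by $\Theta(\log^2 n)$, giving $\Theta(\log^4 n)$ states and violating the theorem.

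The paper's fix is a pre-processing stage that splits the population into two groups of size $\Theta(n)$ each, call them $\a$ and $\b$, by an interaction process analogous to the $A$/$B$ split of the main protocol. The $\b$ nodes run the fast $\Theta(\log^2 n)$-state leader election protocol (ignoring interactions with $\a$ nodes, which slows it down by only a constant factor \whp), while the $\a$ nodes run the constant-space slow protocol \emph{and}, since their protocol state is $O(1)$, have room to count their interactions up to $K\log^2 n$; an $\a$ node whose counter overflows before it observes a node in a final state $q_9$ or $q_{10}$ triggers the switch to the slow protocol's outcome. The union of the two per-group state spaces is still $O(\log^2 n)$. If you want to keep your ``every node times out'' architecture you must explain where the timer lives without the multiplicative blow-up; as written, your proof establishes the time bounds but not the $O(\log^2 n)$ state bound.
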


\section{Simulations}

We have implemented our exact-majority and leader election protocols
and the protocols proposed 
in~\cite{DBLP:conf/icalp/AlistarhG15}, 
\cite{DBLP:conf/soda/AlistarhAEGR17}
and \cite{Mertzios-etal-ICALP2014}.
We have run simulations on complete graphs to compare the 
observed performance of these algorithms.

The performance of the exact majority protocols is plotted 
in Figures~\ref{fig:majority-rounds} and~\ref{fig:majority-rounds-normalized}.
Figure~\ref{fig:majority-rounds} shows
that the round count for our $O(\log^2 n)$-time 
BCER protocol is quite high in comparison with
the $O(\log^3 n)$-time AAEGR protocol of~\cite{DBLP:conf/soda/AlistarhAEGR17}. 
Figure~\ref{fig:majority-rounds-normalized}, which shows normalized running times, 
suggests that both algorithms complete the computation within
$\Theta(\log^2 n)$ expected number of rounds.
The high constant factor in the running time of our algorithm 
is the price we pay for
provably guaranteeing the $O(\log^2 n)$ time
in expectation and with high probability.

We compare the performance of our $O(\log^2 n)$-states leader election protocol with 
the $O(\log^3 n)$-states protocol AG from~\cite{DBLP:conf/icalp/AlistarhG15}
and the $O(\log^2 n)$-states 
protocol AAEGR from~\cite{DBLP:conf/soda/AlistarhAEGR17}. 
The plots in 
Figures~\ref{fig:leader-rounds}
and~\ref{fig:leader-rounds-normalized} indicate that 
our protocol has the lowest rate of growth of the running time.

\newpage

\bibliographystyle{abbrv}

\newpage

\section*{Appendix}

\appendix
\setcounter{section}{1}

\subsection{Four-state exact-majority voting protocol}

The transition function of the four-state exact-majority voting protocols analysed 
in~\cite{DBLP:conf/infocom/DraiefV10} and~\cite{Mertzios-etal-ICALP2014} is geven below.
The states $X$ and $Y$ stand for distinct strong opinions and  
$x$ and $y$ stand for the corresponding weak opinions; $q$ stands for any state.

\begin{center}
\begin{tabular}{|l|l|}\hline
$(q',q'')$ & $\d(q',q'')$  \\ \hline\hline
$(X,Y)$ & $(x,y)$ in~\cite{Mertzios-etal-ICALP2014}, $(y,x)$ in~\cite{DBLP:conf/infocom/DraiefV10} \\ \hline
$(X,y)$ & $(x,X)$  \\ \hline
$(X,x)$ & $(X,x)$ \\ \hline
$(x,y)$ & $(x,y)$ in~\cite{Mertzios-etal-ICALP2014}, $(y,x)$ in~\cite{DBLP:conf/infocom/DraiefV10}\\ \hline
$(q,q)$ & $(q,q)$ \\ \hline
\end{tabular}
\end{center}

\subsection{Proofs of the lemmas for the exact majority protocol}

%

{\bf Proof of Lemma \ref{lem:synchrinization}:}

The lemma can be proven by induction. The case $p = 0$ is obvious: at the global step $T=0$, all
nodes are in phase $0$ and step $0$. 
Assume that for some $0 \le p < \Ph-1$, each node is in phase $p$ and at step at most $c\log n$.
When the first node eventually enters the next phase $p+1$ all other nodes are "dragged" 
to phase $p+1$ by the asynchronous push-pull broadcast process, 
which completes with probability at least $1-n^{-\a_1}$ 
in $(c/4)n\log n$ global steps.
During these $(c/4)n\log n$ steps, any fixed node $v$ is selected for interactions on average $(c/2)\log n$
times and no more than $c\log n$ times with probability at least $1 - n^{-\a_2}$
(from Chernoff's bounds). 
Thus, with probability at least $1 - n^{-\a_1} - n^{-\a_2} \ge 1 - n^{-\a}$, 
at the time when the last node reaches phase $p+1$, node $v$ is still at this phase and at step 
$v.\step \le c\log n$. 
This implies that the statement in the lemma holds with probability at least $1- \Ph/n^{-\a+1}$.
The constants $\a_1$, $\a_2$ and $\a$ depend on $c$ and increase with~$c$.


{\bf Proof of Lemma \ref{Lem:cancel}:}

Consider the subsequent $(C/4)n \log n$ interactions.
Each node participates in $(C/2) \log n$ interactions on average,
and \whp\ participates in at least $(C/2 - c)\log n$ and at most $(C/2 + c)\log n$ interactions
(from Chernoff's bounds).
Since each node starts with its step in  $[0, c\log n)$, then \whp\
after $(C/4)n \log n$ interactions all nodes are still in phase $p$ and their steps 
are in $[(C/2 - c)\log n, (C/2 + 2c)\log n)$, that is, all nodes are in the middle part of phase $p$.

Consider a node $v$ with the minority opinion at the beginning of phase $p$ and consider 
the interactions of this node during this phase at steps $[(3c)\log n, (C/2 - 3c)\log n]$.
For each such interaction, we say that this interaction is successful 
if the other node $u$ is in the canceling stage of this phase and either 
the total number of majority opinions is less than $n/8$, 
or this number is at least $n/8$ and node $u$ holds the majority opinion
(so the opinions of $v$ and $u$ are canceled).
Since node $u$ is \whp\ in the canceling stage of this phase, the probability of success 
is at least $1/8 -o(1)$. 
Therefore, \whp\ node $v$ will have success in at least one interaction,
so \whp\ each node with the minority opinion will have at least one success.
This implies that when all nodes reach the middle part of this phase, then \whp\
either the number of nodes with the majority opinion is less than $n/8$
(so there are less than $n/4$ nodes with opinions)
or all minority opinions have been canceled.

%

{\bf Proof of Lemma \ref{Lem:double}:}

(Sketch)
An opinion is duplicated, if it meets a node with no opinion.
If before the doubling stage the number of nodes with opinions is at most $n/4$, there will be at least 
$n/2$ nodes with no opinion throughout this doubling stage. 
Thus each node $v$ which holds an opinion, in each step $[((3/5)C)\log n, ((4/5)C)\log n)$,
has at least $1/2 - o(1)$ chance to meet a node with no opinion (and to duplicate). 

%

{\bf Proof of Lemma \ref{Lem:doublingVoteCount}:}

By induction on $p$.
For $p = 0$, the statement is obvious, since each node has its initial opinion.
Assume that the statement is true for some $0 \le p < \Ph - 1$. 
The  canceling of opinions 
in this phase does not change the difference between the number of opinions $A$ and $B$.
If each node entering the next phase remains in a normal state, then for each node
entering the next phase and having a vote, this vote must be a result of (exactly) one duplication
in the doubling stage of phase $p$. 
Thus, if there is a (global) step when all nodes are
in the beginning part of phase $p+1$ and all are in normal states,
then the number of votes is $2\cdot 2^p|a_0 - b_0|$.

Lemma~\ref{Lem:doublingVoteCount} implies the following corollary.

\begin{corollary}\label{Cor:time}
If for each $p = 0, 1, \ldots, P-1$, there is a (global) step such that all nodes are in the beginning of phase $p$,
then for some phase $q \le \log (n/|a_0 - b_0|) + 1$, 
there is a node which is in the beginning of this phase and in either "$\done$" or "$\fail$" state. 
\end{corollary}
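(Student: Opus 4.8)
The plan is to read off Corollary~\ref{Cor:time} from Lemma~\ref{Lem:doublingVoteCount} together with the trivial fact that at every moment the number of nodes holding opinion $A$ plus the number holding opinion $B$ is at most $n$, so the difference between these two counts never exceeds $n$. For each $p\in\{0,\ldots,\Ph-1\}$ let $T_p$ be a global step at which every node is in the beginning of phase $p$, as provided by the hypothesis, and let $q$ be the least such $p$ at which, at step $T_q$, some node is \emph{not} in a normal state. By the definition of the states that node is then in a $\done$ or a $\fail$ state, and it is in the beginning of phase $q$; so the corollary reduces to bounding $q$ (and to checking that such a $q$ exists).

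To bound $q$ I would set $p_0=\lfloor\log(n/|a_0-b_0|)\rfloor+1$, so that $2^{p_0}|a_0-b_0|>n$ while $p_0\le\log(n/|a_0-b_0|)+1$. If $q$ were larger than $p_0$, then at step $T_{p_0}$ all nodes would be in normal states and in the beginning of phase $p_0$; assuming $p_0<\Ph$ (which holds unless $|a_0-b_0|=1$), Lemma~\ref{Lem:doublingVoteCount} would force the $A$/$B$ difference at $T_{p_0}$ to equal $2^{p_0}|a_0-b_0|>n$, contradicting the bound above. Hence $q\le p_0\le\log(n/|a_0-b_0|)+1$ in this case.

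The only point that needs care — and the sole reason the statement carries the ``$+1$'' — is the case of exact equality, i.e.\ when $2^p|a_0-b_0|=n$ for the integer $p=\log(n/|a_0-b_0|)$. This also covers $|a_0-b_0|=1$, where $\log(n/|a_0-b_0|)+1=\Ph$ lies just past the phases named in the hypothesis, and it is how one argues that $q$ exists at all (if no phase up to $\Ph-1$ has a non-normal node, then Lemma~\ref{Lem:doublingVoteCount} at $T_{\Ph-1}$ gives difference $2^{\Ph-1}|a_0-b_0|=n|a_0-b_0|\ge n$, which combined with the bound $\le n$ forces $|a_0-b_0|=1$ and difference exactly $n$). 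So suppose all nodes are normal at $T_p$ and the difference there is exactly $n$; then every node holds the majority opinion, with no empty node and no minority opinion in the system. Since every node has a vote yet none is $\done$ at $T_p$, each node must have entered phase $p$ with $\doubled=\true$ and hence had $\doubled$ reset to $\false$ there; and $\doubled$ stays $\false$ throughout phase $p$ because, with no opposite opinions and no empty node, neither cancellation nor duplication takes place, so every node reaches the end of phase $p$ still holding an unduplicated vote while remaining normal. By lines~\ref{resetDoubledFrom}--\ref{resetDoubledTo} of Algorithm~\ref{Algo} each such node then sets $\done\Let\true$ when its step counter wraps to $0$, i.e.\ on entering phase $p+1=\log(n/|a_0-b_0|)+1$, supplying the required $\done$-node at the beginning of phase $q=\log(n/|a_0-b_0|)+1$.

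I do not expect a genuine obstacle: the mathematical heart is the one-line observation that a quantity that doubles each phase and is capped at $n$ can do so at most $\log(n/|a_0-b_0|)$ times. The only thing to be attentive to is the phase bookkeeping (the index $\Ph-1$ versus $\Ph$, and the step-counter wrap that triggers the $\done$ transition) in the exact-equality boundary case.
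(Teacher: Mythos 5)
Your argument is correct and is exactly the route the paper intends: the paper gives no explicit proof, merely asserting that Corollary~\ref{Cor:time} follows from Lemma~\ref{Lem:doublingVoteCount} via the observation that a difference doubling each phase cannot exceed $n$. Your write-up fills in that one-line argument and additionally handles the boundary case $2^{p}|a_0-b_0|=n$ (in particular $|a_0-b_0|=1$) more carefully than the paper does, by tracing the $\doubled$/$\done$ transitions in Algorithm~\ref{Algo}; this is a welcome refinement rather than a deviation.
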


%

{\bf Proof of Theorem \ref{Thm:ourExactMajorityProtocol}:}

Use Lemma~\ref{Lem:phase} and Corollary~\ref{Cor:time} to show,
by induction on the number of phases, that by the phase $\log (n/|a_0 - b_0|) + 1$,
\whp\ at least one node reaches a $\done$ state with the correct majority opinion, no minority opinions 
are left in the system and no node is in a $\fail$ state. 
The correct answer will be broadcast to all nodes in additional $O(\log n)$ time \whp

To obtain $O(\log^2 n)$ time also in expectation, we have to augment our protocol 
to recover from the possibility of ending up in $\fail$ states with incorrect answer.
We run in parallel our protocol and the four-states exact majority protocol: 
the state space of the combined protocol is the product of the state spaces of both protocols
and on each individual interaction, each protocol performs its own state transition.
The output function of the combined protocol
is the output function of our protocol for all states other than the $\fail$ states and 
is the output function of the four-state protocol on all $\fail$ states.
Our protocol always converges either to a global state such that 
all nodes are in $\done$ states and all hold the opinion of the initial majority, or
to a global state such that 
all nodes are in $\fail$ states. 
This property and the convergence property of the four-state protocol imply that the combine 
protocol converges to the correct answer. 
Theorem~\ref{Thm:ourExactMajorityProtocol} implies that 
the combined protocol computes the exact majority in $O(\log n \log (n/|a_0 - b_0|))$ time \whp: 
if our protocol stabilizes on the $\done$ states and 
the correct output, the four-state protocol will never overrule it.

Our protocol converges in $O(\log^2 n)$ time \whp\
Thus the expected convergence time of the combined protocol is $O(\log n \log (n/|a_0 - b_0|))$ with a high probability $p$
and is $O(n\log n)$ (the expected convergence time of the four-state protocol)
with probability $1-p$.
With the appropriate constants in our protocol, $p \ge 1 - n$, so 
the expected convergence of the combined protocol is $O(\log n \log (n/|a_0 - b_0|))$.

\subsection{State transitions for Section \ref{elect}: first part of the leader election protocol}

In the specification of the transitions below, 
we always specified the transition of the first entry. That is, 
a transition $(q_i,g_j) \rightarrow \, q_l$ means that 
if a node $v$ in state $q_i$ meets a node $u$ in state $q_j$, then $v$ switches to state $q_l$. 
Also, we omit the $X$- and $G$-entries from the state of a node, if they do not change and 
do not play any role in the transition.
If the transition for some specific state by meeting a node in some other specific state is is not specified, then 
this means that the state will not change.

The transition function for the phases $q_0$ and $q_1$ are defined below.

\begin{framed}
\begin{itemize}
\item $(q_0^i,*) \rightarrow \, q_0^{i+1}$, where~$i< c \log n, * \neq q_1$
\item $(q_0^{c \log n},*) \rightarrow \, < q_1^{0},A,N > $, where~$* \neq q_1$
\item $(q_0^i,< q_1^{i'},A, * > ) \rightarrow \, < q_1^0,B,N > $
\item $(q_0^i,< q_1^{i'} ,B,*>) \rightarrow \, <q_1^0,A,N>$
\end{itemize}
\end{framed}

\begin{framed}
\begin{itemize}
\item $(q_1^i,*) \rightarrow \, q_1^{i+1}$, where $i< C/4 \cdot \log n$
\item $(<q_1^{C/4 \cdot \log n},*,*>,<*,N,*>) \rightarrow \, <q_2^0,*,0>$
\item $(<q_1^{C/4 \cdot \log n},*,*>,<*,A,*>) \rightarrow \, <q_2^0,*,0>$
\item $(<q_1^{C/4 \cdot \log n},*,*>,<*,B,*>) \rightarrow \, <q_2^0,*,1>$
\end{itemize}
\end{framed}

The transitions for the phases $q_2$ and $q_4$ are specified in the next table.
\begin{framed}
\begin{itemize}
\item $(q_2^i,*) \rightarrow \, q_2^{i+1}$, where $i \leq C \log n-1$
\item $(q_2^{C \log n},*) \rightarrow \, q_4^{0,0}$
\item $(q_4^{s,t},<*,A,1>) \rightarrow \, q_4^{s+1,t+1}$, if $t \leq C\log n-1$ 
\item $(q_4^{s,t},<*,A,0>) \rightarrow \, q_4^{s,t+1}$, if $t \leq C \log n-1$
\item $(<q_4^{s, C\log n},A,*>,*) \rightarrow \, q_5^{s,0}$
\item $(<q_4^{s,C \log n},B,*>,*) \rightarrow \, q_5^{s+C \log n,0}$
\end{itemize}
\end{framed}

Next we consider the transitions for the nodes in state $q_5$.
\begin{framed}
\begin{itemize}
\item $(q_5^{s,i},*) \rightarrow \, q_5^{s,i+1}$, if $i < 3C \log n$ and $* \neq q_5^{s',i'},q_3^{s',i'}$ with   
$s' >s$
\item $(q_5^{s,i},q_5^{s',i'}) \rightarrow \, q_3^{s',0}$, if $s < s'$
\item $(q_5^{s,i},q_3^{s',i'}) \rightarrow \, q_3^{s',0}$, if $s < s'$
\item $(q_3^{s',i'},q_5^{s,i}) \rightarrow \, q_3^{s,i+1}$, if $s' < s$
\item $(q_3^{s',i'},q_3^{s,i}) \rightarrow \, q_3^{s,i'+1}$, if $s' < s$
\item $(q_3^{s',i'},*) \rightarrow \, q_3^{s',i'+1}$, 
if no other rule for $q_3$ or $q_7^{0,0}$ is specified
\item $(q_3^{s',i'},q_7^{0,*}) \rightarrow \, q_7^{0,0}$ 
\item $(q_3^{s,C^2 \log n},*) \rightarrow \, q_7^{0,0}$
\item $(q_5^{s,3C\log n},*) \rightarrow \, q_6^{0,0,0}$
\end{itemize}
\end{framed}

According to the rules above, if two nodes meet, 
one e.g. in state $q_5^{s,t}$ and the other in $q_3^{s',i}$
with $s>s'$, then 
the first node switches to state $q_5^{s,t+1}$ (according to $(q_5^{s,t},*) \rightarrow \, q_5^{s,t+1}$) 
and the second one to $q_3^{s,i+1}$
(according to $(q_3^{s',i},q_{5}^{s,t}) \rightarrow \, q_3^{s,i+1}$). 
If a $q_3^{s,i}$ node interacts with a $q_5^{s',t}$ node with $s>s'$, then 
the $q_3^{s,i}$ node only increments its counter of interactions,
switching to $q_3^{s,i+1}$.

\subsection{Proofs for Section \ref{AnalysisPartOne}}

{\bf Proof of Lemma \ref{lemAB}:}

We define $n$ consecutive interactions a \emph{period}, and show that if $c$ is large enough, then within 
$c \log n$ periods, every node is chosen at least $c \log n$ and at most $3c \log n$ times for communication, \whp\
Let $v \in V$ be a node. At each time step, $v$ is chosen for communication with probability $2/n$, independently of all 
other steps. Thus, in $c \log n$ periods, $v$ is chosen $2c \log n$ times in expectation. Simple Chernoff bounds 
imply that if $c$ is large enough, then $v$ is chosen at least $c \log n$ times and at most $3c \log n$ times, \whp\

Assume now that $C > 64 c$ is a large constant. 
Similarly as before, if $C$ is large enough, then each node $v$ interacts \whp\
within $(C/8 - C/128) \log n$ periods at least $(C/4 - C/64) \log n$ 
and at most $C/4 \cdot \log n$ times.
Within $(C/8+C/64) \log n$ periods, every node interacts at least $C/4 \cdot \log n$ and at most $(C/4 + C/32) \log n$ times
The pigeonhole principle implies that  
within $C/8 \log n$ periods there will be at least one node which interacts 
at least $C/4 \cdot \log n$ times. Thus, we conclude that \whp\ 
after at most $(C/8 + c) \log n$ periods at least one node will 
switch to state $q_2^{0,0}$.

As described above, by the end of period $c \log n$, all nodes have switched to state $q_1$, but no node reached 
state $q_2^{0,0}$, \whp\ That is, by that time all nodes belong to one of the sets $A$ or $B$, \whp\ Assume now, there 
is a time step such that exactly $n/4$ nodes belong to set $A$, and $|A| > |B|$. This implies that at least 
$n/2$ nodes are still in state $q_0^i$ with some $i \leq c \log n$. Let $v$ be such a node. Whenever $v$ interacts, it meets
an $A$ node with probability at least $1/4$. 

We divide the 
set of $q_0$-nodes into the sets $\{ v_1, \dots , v_{n/4}\}$ and $\{ v_{n/4+1} , \dots , v_{n/2}\}$. 
We assume w.l.o.g.~that 
these nodes are 
ordered according to the time they convert into $q_1$-nodes.  
Then, each node $v_i$ with $i \leq n/4$ 
switches to state $B$ with 
probability at least $1/4$, independently. On the other hand, if two nodes $v_i$ and $v_{i+1}$ meet, where $v_{i+1}$ is in state $q_0^{c \log n}$, then 
both turn into $A$ nodes. Thus, every node $v_i$ with $i \leq n/4$ being odd switches to state $B$ with probability $1/4$, independently. Therefore, 
the expected number of $B$ nodes at the end is at least $n/32$. Applying now 
Chernoff bounds, we obtain the lemma. In the case that at a certain time step $|B| = n/4$ and $|B| > |A|$, the same 
analysis leads to the result.

{\bf Proof of Lemma \ref{lem1}:}

In this proof, we first show that at the time step when the first node
or the first two nodes switch to state $q^{0,0}_4$, \whp\ all the other nodes are in state $q_2$. Furthermore, 
we show that \whp\ there are $\Theta(n)$ nodes of both types, $0$ and $1$, in set $A$. For the set $B$, the same 
can be shown with the same techniques. 

As shown in the previous proof, within $(C/8+C/64) \log n$ periods every node interacts at least $C/4 \cdot \log n$ times. Thus by the end of 
period $(c+C/8+C/64) \log n$, all nodes are in state $q_2$, 
and no node is in state $q_4$ yet. Together with the fact that 
within $(C/8 - C/128) \log n$ periods the number of interactions a node may have is at most $C/4 \cdot \log n$, \whp\ we conclude that 
the nodes turn into state $q_2^{0}$ after the beginning of period $(C/8 - C/128) \log n$ 
but by the end ofperiod $(c+C/8+C/64) \log n$. 
We also know that by the end of period $c \log n \ll (C/8 - C/128) \log n$ all nodes belong to some set $A$ or $B$ \whp\ A node 
in state $q_1^{C/4 \cdot \log n}$ sets its $G$-value to $0$ if it meets an $A$ node, and to $1$ if it meets a $B$ node. Here we only consider 
the nodes of set $A$. 
Clearly, there is a constant probability $p =|A|/n$ between $0$ and $1$ 
that a node $v$ sets its $G$-value to $0$; however,
the events between two nodes of $A$ are not independent. In expectation, there will be $|A|^2/n = \Theta(n)$ nodes in $A$ with $G$-value $0$. 

Let $A = \{ v_1, \dots , v_{|A|}\}$. Let $X_i$ be $1$ if the $G$-value of $v_i$ becomes $0$ and $0$ otherwise, and let $X= \sum_{i=1}^{|A|} X_i$. Define $Y_0 = \Exp[X]$
and $Y_i = \Exp[X ~|~ X_1, \dots , X_{i-1}]$. Clearly, $Y_0, \dots , Y_{|A|}$ is a Martingale sequence, which satisfies the $2$-Lipschitz condition. 
Applying now the Azuma-Hoeffding inequality, we obtain that the number of nodes with $G$-value $0$ in $A$ is $(1\pm o(1)) |A|^2/n$, \whp\ This 
also implies that the number of nodes with $G$-value $1$ in $A$ is $(1-o(1)) |A|\cdot |B|/n$, \whp\ 
Applying the same calculations to the set $B$, we obtain the result w.r.t.~the $0$- and $1$-nodes in $B$.

{\bf Proof of Lemma \ref{lemprev2}:}

As in the proof of Lemma \ref{lemAB}, we call $n$ consequtive interactions a period. We showed that \whp\ 
all nodes convert between periods $(C/8 - C/128) \log n$ and $(c+C/8+C/64) \log n$ into state $q_2^{0,0}$. 
Using Chernoff bounds as in the proof of Lemma \ref{lem1} we obtain that within additional $(C/2 + C/128) \log n$
periods every node will convert into state $q_4$. Next we show that a node $v$ meets within $(Cn/(2|A|) + C/128) \log n$ 
periods at least $C \log n$ nodes from set $A$, \whp, if $C$ is large enough. 

According to Lemma \ref{lemAB}, $|A| > n(1-o(1))/32$, \whp\ A node is selected for communication in a step with probability 
$2/n$, and if selected, it interacts with an $A$-node with probability $|A|/n$. Thus, the expected number of steps 
$v$ interacts with an $A$-node is 
$$\frac{2|A|n(Cn/(2|A|) + C/128) \log n}{n^2} = (C+C|A|/(64n)) \log n \geq (C+C/(2048)) \log n$$
Applying Chenroff bounds, we obtain that the number of steps $v$ interacts with an $A$-node is at least 
$C \log n$, \whp, if $C$ is large enough. Using simlar techniques, we obtain that within $(Cn/(2|A|) - C/128) \log n$
periods the number of steps a node $v$ interacts with an $A$ is at most $C \log n$, \whp, if $C$ is large enough.

This implies that the first period, in which a node switches to state $q_5^{*,0}$ is  
$$(C/8 - C/128) \log n + (C/2 - C/128) \log n+(nC/(2|A|) - C/128) \log n$$ and the last period in which a node converts to state 
$q_5^{*,0}$ is $$(c+C/8+C/64) \log n + (C/2 + C/128) \log n + (Cn/(2|A|) + C/128) \log n ,$$ \whp\ The time difference between them is 
at most $(c+C/32+C/64+C/128) \log n$ periods. Applying Chernoff bounds, we obtain that within these periods no node 
can interact with more than $C \log n$ other nodes, \whp\ Thus, when a node reaches state $q_5^{*,C\log n}$, then 
all $q_4$-nodes have switched to state $q_5$ (and may have turned afterwards into $q_3$-nodes).

{\bf Proof of Lemma \ref{lem2}:}

The first statement follows from Lemma \ref{lemprev2}.
To prove the second statement, we only consider nodes from set $B$, and show that among these nodes the maximum value $i$ is unique with constant probability. Taking into 
account that at the end the value $C \log n$ is added to all nodes from $B$ (while the nodes from $A$ leave their values after $C \log n$ 
interactions with nodes from $A$ unchanged), the maximum value occurs at a node of set $B$, \whp\

The proof of this lemma is similar to the analysis of the number of nodes having maximum degree in an 
Erd{\H{o}}s-R\'enyi random graph with edge probability $\Theta((\log n)/n)$. We first compute the probability that 
a node has an arbitrary but fixed value $i$, when it switches to state $q_5$. Note that every node has 
$C \log n$ interactions with $A$-nodes before it switches to state $q_5^{0,0}$. Let $p$ denote the probability that a node increases its 
$i$ value in a step (i.e., it meets a node in some state $<*,A,1>$). Then, every node in $q_4^{*,*}$ has in expectation an $i$ value of $pC \log n$ when it enters 
state $q_6$. Using simple Chernoff bounds, we obtain that no node will have an $i$ value larger than 
$pC \log n + \delta C \log n$, \whp, where $\delta$ is significantly smaller than $p$ whenever $C$ is large enough. 
This also implies that if $C$ is large enough, then $p+\delta < 1$. 

Consider some value $0 < \delta' < \delta$. Then, denote by $k_{\delta'} = pC \log n + \delta' C \log n$ and 
we obtain that a node is in state $q_4^{k_{\delta'},C \log n}$ before entering state $q_5^{0,0}$ with probability
$$P_{k_{\delta'}} = \left({C \log n} \atop {k_{\delta'}} \right) p^{k_{\delta'}} (1-p)^{c \log n - k_{\delta'}} .$$
We consider now $P_{k_{\delta'}}/P_{k_{\delta'} +1}$
\begin{eqnarray*}
P_{k_{\delta'}}/P_{k_{\delta'} +1} &=& \frac{\left({C \log n} \atop {k_{\delta'}} \right) p^{k_{\delta'}} (1-p)^{C \log n 
- k_{\delta'}}}{\left({C \log n} \atop {k_{\delta'}+1} \right) p^{k_{\delta'}+1} (1-p)^{C \log n - k_{\delta'}-1}} \\
&=& \frac{(k_{\delta'}+1)(1-p)}{(C\log n - k_{\delta'})p} \\
&=& \frac{(p+\delta'+1/(C\log n))(1-p)}{(1-p-\delta')p}
\end{eqnarray*}
which is a constant larger than $1$. This implies that there is some value $\delta'$ for which $P_{k_{\delta'}} \geq 1/n$, 
$P_{k_{\delta'}+1} < 1/n$, and $P_{k_{\delta'}}/P_{k_{\delta'} +1}$ is bounded from below by some constant larger than 1. 
Furthermore, $P_{k_{\delta''}}/P_{k_{\delta''} +1}$ 
is a constant larger than one for any $\delta' \leq \delta'' \leq \delta$. Since among the nodes in $B$ the random variables 
denoting the number of $s$-values are independent, 
there will be at least one node in $B$, which reaches state 
$q_4^{i,C \log n}$ with $s \geq k_{\delta'}$, 
with some probability bounded away from $0$. Taking into consideration that a node exceeds value 
$k_{\delta'}$ with probability 
$$\sum_{i= k_{\delta'}+1}^{k_{\delta}} P_i + o(1/n) = \Theta(1/n) ,$$
this value $s$ is unique among all nodes with constant
probability.
Furthermore, there is a range $[k_{\delta'}-r\log \log n),k_{\delta'}]$ with $r$ being 
some constant, such that 
the probability for a node to reach state $q_4^{s,C \log n}$ with $s$ in the range given before is $\Theta(\log n/n)$. This 
implies that the largest $s$ value will be at least $k_{\delta'} - r\log \log n$, with high probability.

{\bf Proof of Lemma \ref{lem3}:}

According to Lemma \ref{lemprev2}, we know that at the first time step when a node is in state $q_5^{s,C \log n}$, all nodes are 
in state $q_3$ or $q_5$ \whp\ Since the nodes $q_5^{s_{\max},*}$ can not switch to state $q_3$, all these nodes will 
broadcast their values to the other nodes by letting every other node switch to state $q_3^{s_{\max}}$. Thus, the 
spread of the $s_{max}$ values can be modeled by an asynchronous push-pull broadcast protocol as described in 
Section \ref{pushpull}. According to Theorem \ref{the:pp1}, there is a constant $c'$ such that within 
$c' \log n$ periods, all nodes receive the value $s_{\max}$, \whp\ Letting $C$ being large enough, Chernoff bounds imply that 
no node interacts within $c' \log n$ steps more than $C \log n$ times, \whp\ Since at the first time when a node 
is in state $q_5^{s,C \log n}$, all nodes are in state $q_5$ or $q_3$ (including the nodes with upper index $s_{\max}$), and 
within $c' \log n$ further periods all nodes are either in $V_{\max}$ or in state $q_3$, we obtain that by that time no node 
will have switched to state $q_6$ or $q_7$, \whp\ Hence, only nodes of $V_{\max}$ can enter state $q_6$, and when this happens
for the first time, all other nodes are in state $q_3$, \whp\

\subsection{Transition table from Section \ref{sec3}}

We assume that there are $O(\log n)$ nodes which are in state $q_6$ (leader candidates) and the rest is in state $q_7$
(followers, starting with $q_7^{0,0}$) or $q_5$ (leader candidates which have not entered state $q_6$ yet). 
Clearly, at the beginning there can also be 
$q_3$-nodes; however, these followers are treated as $q_7^{0,0}$-nodes, i.e., in each transition $q_7^{0,0}$ on the left 
hand side can be replaced by any $q_3^{i,j}$.
We check whether a single node is in state $q_6$ (and no node is in state $q_5$) by the following algorithm. 
The different states a leader candidate
can have are $q_6^{l,i,j}$ with $l=0,1,2$, and $i,j \leq C \log n$. Similarly, a follower can be in a state $q_7^{l,i}$ with 
$l = 0,1,2$ and $i \leq 3C \log n/4$. 
The transitions for initiating a message of type $1$ are as follows:

\begin{framed}
\begin{itemize}
\item $(q_6^{0,0,j} , <*,A,*>) \rightarrow \, q_6^{1,1,j}$, if $j < C \log n$
\item $(q_6^{1,1,j},q_7^{0,i}) \rightarrow \, q_6^{1,2,j}$
\item $(q_7^{0,i},q_6^{1,1,j}) \rightarrow \, q_7^{1,1}$
\item $(q_6^{1,1,j},q_7^{l,i}) \rightarrow \, q_8$, if $l \neq 0$
\item $(q_6^{1,i,j}, *) \rightarrow \, q_6^{1,i+1,j}$, if $* \neq q_6^{l,i',j'},q_5^{s,i'},q_7^{2,i'}$ and $i,j \neq C \log n$
\item $(q_6^{1,i,j}, *) \rightarrow \, q_8$, if $* = q_6^{l,i',j'},q_5^{s,j'},q_7^{2,i'}$
\item $(q_6^{l,C \log n,j},*) \rightarrow \, q_6^{0,0,j+1}$, if $j < C \log n$
\item $(q_6^{l,C \log n,C \log n},*) \rightarrow \, <q_9,0>$
\end{itemize}
\end{framed}

The transitions for initiating a message of type $2$ are described below.

\begin{framed}
\begin{itemize}
\item $(q_6^{0,0,j}, <*, B,*>) \rightarrow \, q_6^{2,1,j}$, if $j < C \log n$
\item $(q_6^{2,1,j},q_7^{0,i}) \rightarrow \, q_6^{2,2,j}$
\item $(q_7^{0,i},q_6^{2,1,j}) \rightarrow \, q_7^{2,1}$
\item $(q_6^{2,1,j},q_7^{l,i}) \rightarrow \, q_8$, if $l \neq 0$
\item $(q_6^{2,i,j}, *) \rightarrow \, q_6^{2,i+1,j}$, if $* \neq q_6^{l,i',j'},q_5^{s,i'},q_7^{1,i'}$ and $i,j \neq C \log n$ 
\item $(q_6^{2,i,j}, *) \rightarrow \, q_8$, if $* = q_6^{l,i',j'},q_5^{i'',j''},q_7^{1,i'''}$
\end{itemize}
\end{framed}
The transitions above describe the case, in which the Bernoulli trial was unsuccessful.

Next we describe how the message is propagated in the system. 
\begin{framed}
\begin{itemize}
\item $(q_7^{l,i},q_7^{0,i'}) \rightarrow \, q_7^{l,i+1}$ if $0<i<C \log n/32$ and $l =1,2$
\item $(q_7^{0,i'},q_7^{l,i}) \rightarrow \, q_7^{l,1}$ if $0<i<C \log n/32$ and $l=1,2$
\item $(q_7^{l,i},q_7^{l',i'}) \rightarrow \, q_8$ if $l,l' \in \{1,2\}$ and $l \neq l'$
\item $(q_7^{l,i},q_7^{l',i'}) \rightarrow \, q_7^{l,i+1}$ if $1 \leq i < 3C \log n/4$, where $l' =0,l$
\item $(q_7^{l,i},q_5^{s,i'}) \rightarrow \, q_8$ if $l=1,2$
\item $(q_7^{l,3C\log n/4},*) \rightarrow \, q_7^{0,0}$
\item $(q_7^{0,C^3 \log^2 n},*) \rightarrow \, q_8$
\item $(q_7^{0,i},*) \rightarrow \, q_7^{0,i+1}$ if none of the other rules apply
\end{itemize}
\end{framed}

As in Section \ref{sec:alg}, we specified for each $q_i$-node the transition when it meets a node in state $q_j$. 
That is, transition $(q_i,q_j) \rightarrow \, q_{l}$ 
means that the $q_i$-node switches to state $q_l$ when it meets a node in state $q_j$. 
Now we are looking at the endgame.
\begin{framed}
\begin{itemize}
\item $(*,q_8) \rightarrow \, (<q_0^0,N,N>,<q_0^0,N,N>)$ if $* \neq q_9,q_{10}$
\item $(*,<q_0^i,N,N>) \rightarrow \, (<q_0^0,N,N>,<q_0^{i+1},N,N>)$ if $* \neq q_9,q_{10},q_0,q_1$
\item $(<q_9,0>,<q_9,1>) \rightarrow \, (<q_9,1>,q_{10})$
\item $(*,<q_9,l>) \rightarrow \, (q_{10},<q_9,1-l>)$ if $* \neq <q_9,0> \mbox{ or } <q_9,l>$
\item $(*,q_{10}) \rightarrow \, (q_{10},q_{10})$ if $* \neq q_9$
\end{itemize}
\end{framed}


\subsection{Proofs for Section \ref{sec3}}

\begin{lemma}
\label{lem4}
If there is only one leader candidate, then no node will ever enter state $q_8$ \whp  
\end{lemma}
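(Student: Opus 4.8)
\textbf{Proof proposal for Lemma~\ref{lem4}.}

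The plan is to argue that with a single leader candidate, the system behaves like a clean sequence of broadcast rounds, and that the only ways a node could enter $q_8$ are ruled out \whp\ We first observe from the transition table that a node switches to $q_8$ only in one of the following situations: (i) a $q_6^{l,i,j}$ candidate with $l\in\{1,2\}$ meets a node already carrying a message (a $q_7^{l',i'}$ with $l'\neq 0$, or a $q_5$ node, or another $q_6$ candidate); (ii) two followers carrying opposite messages meet; (iii) a follower carrying a message meets a $q_5$ node; or (iv) a $q_7^{0,i}$ follower reaches the timeout $i = C^3\log^2 n$. Since there is exactly one leader candidate, case (i) with "another $q_6$ candidate" is vacuous, and (ii) is impossible because all followers that carry a message in a given test phase carry the \emph{same} message $l$ — the one initiated by the unique candidate. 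So the real work is to exclude the $q_5$-related possibilities in (i), (iii) and the timeout in (iv).

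The key step is a round-synchronization claim analogous to Lemma~\ref{lem:synchrinization} and the phase analysis for the exact-majority protocol. First I would invoke Lemma~\ref{lem3}: \whp\ just before the first node enters $q_6^{0,0,0}$, every node is in $V_{\max}$ (here a singleton) or in $q_3^{s_{\max},i}$, and there are no $q_5$ nodes with a smaller label left to compare against; moreover the $q_3$ nodes are themselves near the end of their counters. Then I would show, using the standard Chernoff/period argument ($n$ interactions $=$ one period, each node selected $2\log n\pm O(\log n)$ times per $\log n$ periods), that within $O(\log n)$ time the $q_3$ nodes all reach $q_7^{0,0}$ (via $(q_3^{s',i'},q_7^{0,*})\to q_7^{0,0}$ and the $q_3^{s,C^2\log n}\to q_7^{0,0}$ timeout, whose bound $C^2\log n$ is chosen large enough to absorb this), so that \whp\ by the time the candidate initiates its first message all other nodes are plain $q_7^{0,i}$ followers with small $i$. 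In particular no $q_5$ node survives into the testing part, which kills the $q_5$-branches of (i) and (iii).

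Next I would run the per-phase induction. In a single test phase the candidate sits in $q_6^{0,0,j}$, picks a message $l$ on its first interaction, converts one follower to $q_7^{l,1}$, then stays silent for $C\log n - 1$ of its own steps. Meanwhile the message $l$ spreads by push-pull among the $q_7$ nodes; by the broadcast lemma (Theorem~\ref{the:pp1}) it reaches all nodes \whp\ within $O(\log n)$ time, and each informed follower stops transmitting after $C\log n/32$ of its steps and drops the message after $3C\log n/4$. Choosing $C$ large, the time for the candidate to complete its $C\log n$-step phase exceeds the $3C\log n/4$ lifetime of a message but is short enough that \whp\ no follower performs $\omega(\log n)$ interactions in the interval; hence by the time the candidate starts phase $j+1$ all followers are back in the listening mode $q_7^{0,0}$, and during phase $j$ the candidate never meets a follower carrying a message (so (i) does not fire) and no two followers carry opposite messages. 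The timeout (iv) is avoided because a fresh message is injected every $C\log n+1$ of the candidate's steps, i.e. every $O(\log n)$ time, so an individual follower is never left without a message for $C^3\log^2 n$ interactions. A union bound over the $O(\log n)$ test phases (each failing with probability $n^{-\Omega(1)}$) gives the claim \whp\

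The main obstacle I anticipate is making the quantitative "the candidate's phase outlasts a message's lifetime but not by enough to let a follower drift by $\omega(\log n)$ interactions" argument fully rigorous: one needs to track the worst-case spread between the candidate's local step counter and the followers' counters across $\Theta(\log n)$ consecutive phases without a resynchronizing broadcast at phase boundaries, and show the accumulated drift stays $o(\log n)$ — or, more carefully, that each phase is short enough ($O(\log n)$ time) that the drift within it is $O(\log n)$ and does not compound, because the message-drop threshold $3C\log n/4$ and the candidate's phase length $C\log n$ are separated by a constant factor that dominates the drift. This is the same style of bookkeeping as in Lemmas~\ref{Lem:cancel}--\ref{Lem:phase}, so I expect it to go through with the constants $c\ll C$ chosen appropriately, but it is where the care is needed.
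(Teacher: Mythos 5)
Your proposal is correct and follows essentially the same route as the paper's proof: reduce via Lemma~\ref{lem3} to a configuration with one $q_6^{0,0,0}$ candidate and all other nodes behaving as $q_7^{0,0}$ followers, dismiss the $C^3\log^2 n$ timeout by a Chernoff bound, and then run a per-epoch induction showing that the unique message of each test phase is broadcast, held, and dropped by all followers (using Theorem~\ref{the:pp1} and the constant-factor gap between the $3C\log n/4$ message lifetime and the $C\log n$ phase length) so that every epoch begins and ends with all followers in listening mode. The drift bookkeeping you flag as the main obstacle is handled in the paper exactly as you anticipate — each epoch's analysis is self-contained because follower counters reset on receiving and on dropping a message, so nothing compounds across phases.
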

\begin{proof}
In this lemma, we consider the case when $|V_{\max}| =1$ in Lemma \ref{lem2}. According to Lemma \ref{lem3}, 
the node $v \in V_{\max}$ switches to state $q_6^{0,0,0}$, and at this time, all other nodes are in state 
$q_3^{i_{\max},*}$. Since in the testing phase $q_3$-nodes follow the same rules as $q_7^{0,0}$-nodes, we may assume that  
at the beginning, one node is in state $q_6^{0,0,0}$ and all the others are in state $q_7^{0,0}$. Applying simple Chernoff bounds, 
we conclude that with high probability the second upper index of a $q_7^{0,*}$-node will never reach the value $C^3 \log^2 n$ before a $q_8$- or 
$q_9$-node appears. Therefore, we ignore the second upper index in a $q_7$-node as long as the first upper index is $0$.

Let us define the $i^{\mbox{th}}$ epoch as the time during which the third upper index of the $q_6$-node is $i$. We show by induction that 
\whp \ at the beginning of an epoch all nodes except the $q_6$-node are in state $q_7^{0,*}$, during the epoch no node switches to state $q_8$, and at the 
end of the epoch all nodes except the $q_6$-node are in state $q_7^{0,*}$. 

At the beginning of the $0^{\mbox{th}}$ epoch the statement is true (see above). Assume that the statement is true at the end of epoch $i-1$ and therefore
it is also true at the beginning of epoch $i$.
When the node $q_6$ communicates for the first time in this epoch, 
it switches either to state $q_6^{1,1,i}$ or to state $q_6^{2,1,i}$. 
Assume w.l.o.g.~that the node 
switches at some point in time to state $q_6^{1,1,i}$ and all other nodes are in state $q_7^{0,*}$. Then 
the leader candidate interacts with another node exactly once, and converts this node into state $q_7^{1,1}$. 
The process can be described by a simple (asynchronous) push \& pull algorithm \cite{KSSV00}, where in each step an edge is chosen, and if one 
of the nodes possesses the message, then the other one will become informed as well. 

Let $n$ consecutive iterations be a period. 
According to Theorem \ref{the:pp1}, within $O(\log n)$ periods all nodes are informed, \whp\ Thus, if $C$ is large enough, 
then all nodes are informed within $C \log n/70$ 
periods \whp, which implies that at the end of the broadcast process all followers are in state $q_7^{1,j}$ for some $j$. 
Furthermore, if $C$ is large enough, then within $C \log n/32$ periods, all nodes interacted at least once \emph{after they received 
the message, i.e., after they switched to state $q_7^{1,1}$}, \whp\ Also,
within $(3C/8 + C/16) \log n$ periods every node interacts at least $3C/4 \cdot \log n $ times and at most $(3C/4 + C/8) \cdot \log n $
times, \whp\ Thus, after period $(3C/8 + C/16 + C/32) \log n$ all nodes will be in state $q_7^{0,*}$, except the leader candidate that will still be in state 
$q_6^{1,j,i}$ with $j \leq (3C/4 + C/4) \cdot \log n $, \whp\ This implies that when the leader candidate reaches state $q_6^{1,C \log n,i}$ all followers 
are in state $q_7^{0,*}$ and the next epoch $i+1$ starts. Summarizing, at the beginning of the $i^{\mbox{th}}$ epoch 
all nodes except the $q_6$-node are \whp\ in state $q_7^{0,*}$, during this epoch no node switches to state $q_8$, and at the 
end of the epoch all nodes except the $q_6$-node are in state $q_7^{0,*}$.
If $i = C \log n$, then the leader candidate switches to state $q_9$ \whp\ 
and declares itself the leader. 
\end{proof}

In the next lemma we assume that $|V_{\max}|>1$ in Lemma \ref{lem2}, and show that all leader candidates act synchronously within a relatively small time frame or 
a $q_8$-node appears. 
We assume here that 
there are at most $O(\log n)$ leader candidates (this holds with high probability, as shown in Lemma \ref{lem2}).
Lemma~\ref{lem6}
shows that if there are at least two leader candidates, then one of them will realize this 
and will enter state $q_8$.
The node in state $q_8$ informs all nodes and the whole process is repeated.

\begin{lemma}
\label{lemsynch}
Let $n$ consecutive interactions be a period, and assume period $1$ starts when a single node $v$ is in state $q_6^{0,0,0}$ and all 
others are in state $q_5^{i_{\max},*}$ or $q_3^{i_{\max},*}$. 
We call period $\tau_i$ the period in which $v$ performs its $i(C \log n+1)$'st interaction after the beginning of period $1$. 
For any $i < C \log n$, if there is more than 
one leader candidate, then either all leader candidates switch to state $q_6^{0,0,i}$ in one of the  
periods $\{ \tau_i -C \log n/70, \dots , \tau_i +C \log n/70\}$, or a node switches to state $q_8$ before period $\tau_{i+1}$ \whp\
\end{lemma}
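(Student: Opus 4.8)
The plan is to prove Lemma~\ref{lemsynch} by induction on $i$, tracking the set of leader candidates and showing that they stay tightly synchronized (within a window of $\pm C\log n/70$ periods around $\tau_i$) as long as no candidate has yet detected another and switched to $q_8$. First I would set up the base case: at the start of period $1$ there is a single $q_6^{0,0,0}$ node $v$ and all other nodes are in $q_5^{i_{\max},*}$ or $q_3^{i_{\max},*}$; but since we are in the case $|V_{\max}|>1$ from Lemma~\ref{lem2}, the other members of $V_{\max}$ are still in $q_5^{i_{\max},*}$ states and will reach $q_6^{0,0,0}$ only after accumulating $3C\log n$ further interactions. By the Chernoff/period counting already used repeatedly (e.g.\ in the proofs of Lemmas~\ref{lemAB}--\ref{lem3}), \whp\ all members of $V_{\max}$ reach $q_6^{0,0,0}$ within a window of $O(\log n)$ periods; choosing $C$ large enough forces this window to be at most $C\log n/70$ periods wide, which establishes the base case with $\tau_0$ the period in which $v$ enters $q_6^{0,0,0}$.

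For the inductive step, I would assume that at period $\tau_i$ all leader candidates are within the window $\{\tau_i - C\log n/70,\dots,\tau_i + C\log n/70\}$ of entering state $q_6^{0,0,i}$, and that no $q_8$ node has appeared so far. Each candidate, on its first interaction in epoch $i$, picks message type $1$ or $2$ and then spends $C\log n$ interactions in that epoch before moving to $q_6^{0,0,i+1}$. Using the Chernoff bound on the number of interactions a node has in a fixed number of periods (the same estimate as in the proof of Lemma~\ref{lem4}), \whp\ within $O(\log n)$ periods each candidate's epoch-$i$ message is broadcast to all nodes via the asynchronous push-pull process (Theorem~\ref{the:pp1}), and within a further $O(\log n)$ periods every follower forgets it (the $3C\log n/4$ drop threshold), so the system returns to the all-$q_7^{0,*}$ configuration. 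Making $C$ large, all these sub-durations fit inside $C\log n/70$ periods, so the candidates' entries into $q_6^{0,0,i+1}$ again lie in a window of width at most $2C\log n/70$; since each candidate needs exactly $C\log n+1$ interactions per epoch, and the per-node interaction counts in $\tau_{i+1}-\tau_i$ periods concentrate, the candidates' $q_6^{0,0,i+1}$ entries fall in $\{\tau_{i+1}-C\log n/70,\dots,\tau_{i+1}+C\log n/70\}$, closing the induction. The disjunction in the statement arises because the above concentration can fail: if at some point a candidate's message (or a follower still carrying it) meets another candidate's message of a different type, or if a candidate meets a node already holding \emph{any} message while trying to initiate its own (the $q_6^{1,1,j}$-meets-$q_7^{l,i}$, $l\neq0$ rule, and the $q_6$-meets-$q_6$ / $q_6$-meets-$q_5$ rules), some node switches to $q_8$ before $\tau_{i+1}$; I would argue that in the good (no-$q_8$) event the synchronization window is maintained, and in the bad event the second alternative of the lemma holds.

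The main obstacle I expect is bookkeeping the interplay between the push-pull broadcast timing of one candidate's message and the simultaneous presence of \emph{several} (up to $O(\log n)$) candidates' messages in the system: I need the window width $C\log n/70$ to be simultaneously (i) wide enough to contain the broadcast-plus-forget duration of every candidate's epoch-$i$ message \whp, and (ii) narrow enough that distinct candidates' broadcasts do not accidentally collide in a way that is consistent with the ``synchronous'' conclusion rather than producing a $q_8$. Concretely, if two candidates are within $C\log n/70$ periods of each other and both broadcast the \emph{same} message type in epoch $i$, the two broadcasts overlap but no $q_8$ is triggered, and I must verify that the resulting timing perturbation does not push any candidate outside the window at epoch $i+1$ — this is where the slack between $C\log n/70$ (the window) and the larger constants $C\log n/32$, $3C\log n/4$, $C\log n$ governing message lifetime and epoch length does the work, and where the choice of the constant $70$ (versus $32$, $16$, etc.) has to be checked to be consistent. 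The routine Chernoff estimates on interaction counts per period and the invocation of Theorem~\ref{the:pp1} for each broadcast are standard given the earlier proofs; the delicate part is assembling them so that a single ``good event'' of probability $1-n^{-\Omega(1)}$ (union-bounded over $O(\log n)$ candidates, $O(\log n)$ epochs, and $n$ nodes) supports the whole induction.
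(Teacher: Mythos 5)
There is a genuine gap, and it is in the step you describe as ``closing the induction.'' You maintain the $\pm C\log n/70$ window purely by concentration of interaction counts: each candidate needs $C\log n+1$ interactions per epoch, these counts concentrate, hence the entries into $q_6^{0,0,i+1}$ again lie within $\pm C\log n/70$ of $\tau_{i+1}$. But the \whp\ fluctuation in the number of periods a candidate needs to accumulate $C\log n+1$ interactions is $\Theta(\sqrt{C}\log n)$, and these fluctuations are independent across epochs, so two candidates drift apart by $\Theta(\sqrt{iC}\log n)$ periods after $i$ epochs. Over the $\Theta(C\log n)$ epochs of the testing part this drift is $\omega(C\log n)$, so the window cannot be preserved by concentration alone --- it provably blows past $C\log n/70$ after $O(1)$ epochs. (This is exactly the clock-drift phenomenon the paper flags for the exact-majority protocol, where an explicit resynchronization rule is added.) The same issue already appears in your base case: the spread of arrival times at $q_6^{0,0,0}$ inherits the spread of arrivals at $q_5^{s_{\max},0}$, which the proof of Lemma~\ref{lemprev2} only bounds by roughly $(c+C/32+C/64+C/128)\log n$ periods, larger than $C\log n/70$; making $C$ larger does not shrink this ratio.

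What the paper's proof actually uses --- and what your proposal is missing --- is that the $q_8$ transitions are themselves the synchronization enforcement, not merely a side effect of message-type collisions. In each epoch, the \emph{first} candidate to advance to $q_6^{0,0,i+1}$ initiates a broadcast that \whp\ reaches all followers within $c'\log n < C\log n/70$ periods (Theorem~\ref{the:pp1}); any candidate that tries to start its own epoch-$(i+1)$ message later than that meets a follower already carrying a message and is sent to $q_8$ by the rule $(q_6^{l,1,j},q_7^{l',i})\to q_8$ for $l'\neq 0$ (in the base case, a lingering $q_5$ node is caught by an informed follower via $(q_7^{l,i},q_5^{s,i'})\to q_8$). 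Hence, conditioned on no $q_8$ appearing, the surviving candidates are \emph{forced} to be within $c'\log n$ periods of the first one --- the window is reset each epoch rather than accumulating --- and otherwise the second alternative of the lemma holds. This is why the statement is a disjunction. The paper also uses a small coupling device you would need (or something equivalent): it analyzes a modified protocol in which $q_8$ transitions are suppressed until the first candidate advances and all followers are then reset to $q_7^{0,0}$, so that the timing analysis can be separated from the collision analysis. Your invocations of Chernoff bounds and of Theorem~\ref{the:pp1} are the right tools, but without the detection mechanism as the source of synchronization the induction does not close.
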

\begin{proof}
%
First assume that at the beginning of period $1$, there are some $q_5$-nodes. Clearly, these nodes are in some states 
$q_5^{i_{\max},j}$.
Then, assume w.l.o.g.~that $v$ switches to state $q_6^{1,1,0}$ and initiates the transition of $q_7^{0,*}$-nodes to 
$q_7^{1,*}$. According to Theorem \ref{the:pp1}, within $O(\log n)$ periods a message is spread to all nodes, \whp\ Thus, if $C$ is large enough, 
then all nodes are either converted to $q_7^{1,*}$ within $C \log n/70$ 
periods \whp, or a node switches to state $q_8$. Thus, if one of the $q_5$-nodes does not switch to $q_6^{0,0,0}$ within 
 $C \log n/70$ periods, then this node will switch to state $q_8$, \whp

Now we show the statement on induction over $\tau_i$. Assume that w.r.t. $\tau_{i}$ all leader candidates 
switched to state $q_6^{0,0,i}$ in one of the  
periods $\{ \tau_{i} -C \log n/70, \dots , \tau_{i} +C \log n/70\}$. We can assume here that we are given some predefined 
steps at which the nodes interact and change their state to $q_6^{0,0,i}$. Furthermore, for the analysis in this proof we assume a modified protocol, in which  
as long as none of the $q_6$-nodes 
change their third upper index to $i+1$, no node is allowed to enter state $q_8$, and as soon as one of the $q_6$-nodes switches to state $q_6^{0,0,i+1}$, all $q_7$-nodes 
change their state to $q_7^{0,0}$.
Using simple Chernoff bounds, one can show that some 
node $w$ that switched to $q_6^{0,0,i}$ in some period $\tau'_i$ changes to state $q_6^{0,0,i+1}$ in a period 
lying in the range $[\tau'_i + C \log n - C/70 \cdot \log n, \tau'_i + C \log n + C/70 \cdot \log n]$, \whp , if $C$ is large enough. Since we assumed that 
$\tau'_i \in \{ \tau_{i} -C \log n/70, \dots , \tau_{i} +C \log n/70\}$, we obtain that $w$ switches to state $q_6^{0,0,i+1}$ in some period in the 
range $[\tau'_i + C \log n - C/35 \cdot \log n, \tau'_i + C \log n + C/35 \cdot \log n]$ w.h.p. 

Let $u$ be the first node that switches to state $q_6^{0,0,i+1}$ in a period in the range 
$[\tau_i + C \log n - C/35 \cdot \log n, \tau_i + C \log n + C/35 \cdot \log n]$. Let $\tau'_{i+1}$ denote this period. We know that within
$c' \log n$ periods, all nodes receive the message of $u$, \whp\ (see Theorem \ref{the:pp1}), where $c' < C/70$ if $C$ is large enough. 
That is, all $q_7$-nodes will receive the message of $u$ within $c' \log n$ periods, but none of them will interact more than $C/2 \log n$ times 
by period $\tau_i + C \log n + C/35 \cdot \log n$ after recieving this message, \whp\
However, 
if $v$ switches to state $q_6^{0,0,i+1}$ in some period after $\tau'_{i+1}+ c' \log n$, but before period $\tau_i + C \log n + C/35 \cdot \log n$ (the latter 
event happens \whp), then 
$v$ will switch to state $q_8$. The case, in which a node $u$ exists such that $u$ switches to state $q_6^{0,0,i+1}$ in some period in the range 
$[\tau_{i+1}, \tau_i + C \log n + C/35 \cdot \log n]$, is analyzed in the same way. Thus, the lemma follows.  
\end{proof}

\begin{lemma}
\label{lem6}
If there are at least two leader candidates, then at least one node will switch to state $q_8$, \whp\
\end{lemma}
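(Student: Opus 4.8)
The plan is to show that if $|V_{\max}| \ge 2$ at the start of the testing phase, then the leader candidates cannot all stay ``in sync'' with matching messages through all $C\log n$ test phases, so some node detects a conflict and enters $q_8$. I would split the argument according to whether the leader candidates remain tightly synchronized (in the sense of Lemma~\ref{lemsynch}) or not.

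First, suppose the candidates do \emph{not} stay synchronized. By Lemma~\ref{lemsynch}, applied inductively on $i = 0, 1, \ldots, C\log n - 1$, the only alternative to all candidates switching to $q_6^{0,0,i}$ within the window $\{\tau_i - C\log n/70, \ldots, \tau_i + C\log n/70\}$ is that some node switches to $q_8$ before period $\tau_{i+1}$. So if synchronization ever fails in \emph{some} phase, we are done immediately. Hence it remains to handle the case where, for \emph{every} $i < C\log n$, all leader candidates enter state $q_6^{0,0,i}$ within the narrow window around $\tau_i$ and no $q_8$ node has yet appeared. In this case I would argue: in each test phase $i$, each candidate independently picks message $1$ with probability $p = |A|/n$ (it picks message~$1$ iff its first interaction in the phase is with an $A$-node) and message $2$ otherwise; by Lemma~\ref{lem1}, $|A|/n = \Theta(1)$ and $|B|/n = \Theta(1)$, so both outcomes have constant probability. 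Since there are $\ge 2$ candidates, the probability that in a fixed phase two of them pick \emph{different} messages is at least some constant $\gamma > 0$ (it suffices to look at any two fixed candidates and use that $p(1-p) = \Theta(1)$). These choices are independent across the $C\log n$ phases, so \whp\ there is some phase $i^\star$ in which one candidate initiates message $1$ and another initiates message $2$.

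It then remains to verify that when two conflicting messages are initiated in the same (tightly synchronized) test phase, \whp\ some node enters $q_8$ within that phase. Here I would use the push-pull broadcast bound (Theorem~\ref{the:pp1}): both the $q_7^{1,*}$ wave and the $q_7^{2,*}$ wave spread to all (uninformed) followers within $O(\log n)$ periods, and since a follower only drops its message after $3C\log n/4$ periods while the messages are re-initiated only every $C\log n + 1$ interactions, \whp\ the two waves are simultaneously present and must meet — either a follower carrying message $1$ meets a follower carrying message $2$ (rule $(q_7^{l,i}, q_7^{l',i'}) \to q_8$ for $l \ne l'$), or one of the two broadcasting candidates meets a node already carrying the opposite message (rules $(q_6^{1,i,j}, q_7^{2,i'}) \to q_8$ and its symmetric counterpart). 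In all cases a node switches to $q_8$. The one subtlety to check is that the $q_8$ transition fires \emph{before} the phase ends and the followers reset; this follows because the broadcast time $O(\log n) \ll C\log n/70$ once $C$ is large enough, which is exactly the slack built into Lemma~\ref{lemsynch}.

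The main obstacle I expect is not any single step but the bookkeeping needed to rule out the ``both messages present but they never meet'' scenario: one has to track, in the tightly-synchronized regime, that the lifespans of the two broadcast waves overlap in real (period) time for long enough that a meeting is forced \whp, using that the waves each occupy a constant fraction of the nodes once broadcast completes (again from $|A|/n, |B|/n = \Theta(1)$, so each message is carried by $\Theta(n)$ followers for a $\Theta(\log n)$-period stretch). Given two linear-sized, temporally overlapping populations of $q_7^{1}$ and $q_7^{2}$ nodes, a single random interaction hits a $1$--$2$ pair with constant probability, so within $O(\log n)$ further interactions a conflict is detected \whp. Combining the non-synchronized case (Lemma~\ref{lemsynch}), the existence of a conflicting phase \whp\ (independence across $\Theta(\log n)$ phases, each conflicting with constant probability), and the detection-within-a-phase argument yields the lemma; the various $n^{-\Omega(1)}$ failure probabilities are summed over $O(\log n)$ phases and absorbed into the final \whp.
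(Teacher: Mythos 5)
Your proposal is correct and follows essentially the same route as the paper's proof: invoke Lemma~\ref{lemsynch} to dispose of the desynchronized case, note that in each synchronized test phase two candidates initiate conflicting messages with constant probability (since $|A|/n$ and $|B|/n$ are both $\Theta(1)$), conclude \whp\ over the $C\log n$ phases, and then use the push-pull broadcast bound (Theorem~\ref{the:pp1}) to force the two message waves to meet and produce a $q_8$ node within the phase. Your treatment of the ``waves must meet before the followers reset'' timing issue is somewhat more explicit than the paper's, but it is the same argument.
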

\begin{proof}
%
In the previous lemma we proved that either all leader candidates act synchronously (within a frame of 
$[-C/70 \cdot \log n,C/70 \cdot \log n]$ periods, or a node switches to state $q_8$, \whp\ Assume, there is 
an $i < C \log n$ such that a node $v$ switches to state $q_6^{1,1,i}$ and another node $u$ switches to state 
$q_6^{2,1,i}$. If the two do not switch within the given time frame, then a $q_8$-node arises \whp\ Otherwise, both start 
broadcasting their messages ($v$ a message of type $1$, and $u$ a message of type $2$). However, since a message is 
distributed in the system within $O(\log n)$ steps (see Theorem \ref{the:pp1}), the two messages will meet within 
$C/70 \cdot \log n$ periods, and a $q_8$ node arises. Clearly, the event that for an arbitrary but fixed $i < C \log n$ 
a leader candidate $v$ switches to state $q_6^{1,1,i}$ and another leader candidate $u$ switches to state $q_6^{2,1,i}$
occurs with constant probability. Since we consider any $i = 1, \dots , C\log n$, such an $i$ exists with high probability.
\end{proof}


Theorem \ref{theo1} follows from the previous lemmas.

{\bf Proof of Theorem \ref{theo1}:}

Assume, there is a single leader elected during this process. Then, this leader will be in state $q_9$ at the end
(see Lemma \ref{lem4}), and this information is propagated to all nodes in the system. If there are two or more leader 
candidates, then a node will switch \whp\ to state $q_8$, and this information is also spread to all nodes. For this, we consider the 
transitions $(q_8,*) \rightarrow \, (q_0^{0},q_0^{0})$ and $(q_0^i,*) \rightarrow \, (q_0^{i+1},q_0^{0})$, where 
$* \neq q_0,q_1$. Thus, choosing $c$ accordingly, within $c/4 \cdot \log n$ periods all nodes convert to $q_0$ 
\whp, and no node performs more than 
$c \log n$ interactions, so we are in the initial state, \whp\ Then, with constant probability a leader is chosen. By 
repeating the whole protocol $\Theta(\log n)$ times, a leader is chosen with high probability.  

Concerning the running time, assume that the process is repeated $c' \log n$ times, i.e., $c' \log n-1$ times a $q_8$ node arises that 
restarts the whole protocol. 
Define $Z_i$ to be the random variable, 
which describes how often the testing phase is repeated 
between the $i$'th and $i+1$'st restart of the whole process, 
i.e., the time the third upper index of a fixed leader candidate 
is increased in $q_6^{*,*,j}$ within 
this repetition. According to Lemma \ref{lem6}, with some constant probability $p>0$, a node $q_8$ is created within one execution of the 
testing phase (between states $q_6^{*,*,j}$ and $q_6^{*,*,j+1}$ of the same leader candidate for some $j$). Thus, $Z_i$ can be modeled 
by a random geometric variable 
with $\Pr[Z_i =k] = (1-p)^{k-1}p$. Set $Z = \sum_{i=1}^{c' \log n} Z_i$. Then, applying Chernoff bounds to independent random 
geometric variables \cite{AD11} we have
$$\Pr[Z \geq C' \log n] \leq \exp\left(-\frac{(C'-c')^2 c' \log n}{2(1+C'-c')}\right) < n^{-5}$$
for any $C'$ large enough. Thus, one execution of the testing phase is repeated $O(\log n)$ times \whp\ leading to a total number of 
$O(\log^2 n)$ periods. At the end, when a single leader emerges the total number of periods is also $O(\log^2 n)$ \whp\ (leader election 
requires $O(\log n)$ periods while the overall testing requires $O(\log^2 n)$ periods), and the theorem follows.

{\bf Proof of Theorem \ref{Thm:LreaderElectionExpectedLog2}:}

To obtain a leader-election protocol which converges in $O(\log^2 n)$ time both \whp\ and in expectation,
we follow the same general idea as in the proof of Theorem~\ref{Thm:ourExactMajorityProtocol} for
the exact-majority problem.
We combine a fast protocol which converges in poly-logarithmic time \whp\
(but with some low probability may require long time or might even not converge at all) with
a slow protocol which converges in polynomial time but both \whp\ and in expectation. 
Such combining of fast and slow protocols turns out to be more complicated for our leader election protocol than it was 
for the exact majority protocol because the nodes in the leader election protocol do not have spare states 
to keep their overall count of interactions. This means that they do not know when they 
could declare the failure of the fast protocol and switch to relying on the slow protocol.

Our solution to this problem is to add a pre-processing stage to the whole algorithm.
All nodes start in the same state and they first split into to two groups, say $\a$ nodes and $\b$ nodes, 
using an interaction process analogous to the initial phase of our leader election protocol 
when the nodes decide between the sets $A$ and $B$.
Thus, \whp\ all node decide whether they belong to the $\a$ or the $\b$ group and the size of each group
is $\Theta(n)$. 
While with some exponentially small probability the $\b$ group may be empty, the $\a$ group will always 
have at least one node (as it was the case with the $B$ and $A$ sets in our main protocol).

Each node ends the pre-processing stage after counting $c'\log n$ interactions.
The $\b$ nodes switch then to the state $<q^0,N,N>$ of our fast leader election protocol and follow this protocol, 
disregarding the interactions with $\a$ nodes.
The constant $c'$ is such that \whp\ there is a step when each node has decided on the $\a$ or $\b$ group,
there are $\Theta(n)$ $\b$ nodes, and all these nodes are in the states $<q^i,N,N>$, with $0 \le i \c log n$, 
of our fast protocol (the starting configuration for this protocol).
Since \whp\ we have $\Theta(n)$ $\b$ nodes, it can be shown that the \whp\ $O(\log^2 n)$ bound on the completion time 
of the fast protocol still holds: the progress of the computation will be slower, but \whp\ only by a constant factor.

The $\a$ nodes follow the slow polynomial time protocol and keep counting the number of interaction up to the limit 
of $K\log^2 n$, where the constant $K$ is sufficiently large to guarantee that \whp\ 
by the step when any $\a$ node reaches this limit of interaction,
the $\b$ nodes will have reached the 
final configuration of their fast protocol (one $\b$ node in the leader state $q_9$, while all others in the follower state $q_10$).
In the low probability event that some $\a$ node reaches the limit of $K\log^2 n$ interaction without meeting an $\a$ node
in a final state $q_9$ or $q_10$, this node initiates spreading information throughout the system 
that all nodes have to accept the leader-election outcome of the slow protocol.

The combine protocol will have the $O(\log^2 n)$ \whp\ time of the fast protocol.
Its expected completion time will be $1 - n^{-b}$ times $O(\log^2 n)$ (for the case when the fast protocol is successful)
plus $n^{-b}$ time $O(n^{a})$ (for the case when the fast protocol fails), so also $O(\log^2 n)$
since we can take constants $b > a$. ($O(n^{a})$ is the polynomial expected completion time of the slow protocol).

\subsection{Push-Pull broadcast protocol in the asynchronous model}
\label{pushpull}

We are given a complete graph of $n$ nodes, and one of the nodes possesses a piece of information. 
In each step, two nodes are selected uniformly at random. If one of the nodes has the message, then 
at the end of the step the other node has the message too. We call $n$ consecutive steps a period, and show that
after $O(\log n)$ periods all nodes have the message. The proof is very similar to the one of the 
synchronous model~\cite{KSSV00}. First we show the following lemma.

\begin{lemma}
\label{lem:pp1}
There is a constant $c$ such that after $2c \log n$ periods at least $c \log n$ nodes have the message, \whp 
\end{lemma}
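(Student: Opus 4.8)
The plan is to track the growth of the informed set in two stages, mirroring the classical analysis of push-pull rumour spreading of Karp~\etal~\cite{KSSV00}, but adapted to the continuous (one-pair-at-a-time) schedule. Let $I_t$ denote the set of informed nodes after $t$ individual interactions, starting from $|I_0| = 1$. In a single step, a uniformly random unordered pair $\{x,y\}$ is selected; the informed set grows by one exactly when exactly one of $x,y$ is in $I_t$, which happens with probability $2|I_t|(n-|I_t|)/\binom{n}{2} = \frac{4|I_t|(n-|I_t|)}{n(n-1)}$. So when $|I_t| = k$ with $k \le n/2$, the per-step success probability is at least $2k/n \cdot (1 - k/n) \ge k/n$.

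First I would handle the \emph{exponential-growth} regime, say while $|I_t| \le n/2$. Over one period ($n$ consecutive steps) starting from $|I_t| = k \le \sqrt n$, the number of new informed nodes stochastically dominates a sum of $n$ Bernoulli-type increments each with success probability $\ge k/n$ (the lower bound $k/n$ is valid as long as the set stays below $n/2$, which it does throughout a single period when it starts below $\sqrt n$, since it can grow by at most $n$ but actually the drift keeps it well below). A standard coupling/Chernoff argument shows that \whp\ the set at least, say, doubles in each period as long as $|I_t| \ge 1$; more precisely, after one period the informed set has size $\ge \min\{2k, \sqrt n\}$ with failure probability $n^{-\Omega(1)}$, and a trivial bound handles the very first few periods where $k$ is a small constant (here one uses that in $n$ steps a fixed informed node is in a selected pair $\Theta(1)$ times in expectation, so \whp\ $I$ grows to size $\ge 2$ within $O(1)$ periods, and then repeated doubling takes over). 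Taking a union bound over $O(\log n)$ periods, \whp\ after $c'\log n$ periods we have $|I_t| \ge c\log n$ for any prescribed constant $c$, provided $c'$ is chosen large enough relative to $c$; in particular we can arrange that $2c\log n$ periods suffice.

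The main obstacle is making the branching/domination argument clean in the continuous model where increments within a period are \emph{not} independent (the success probability drifts upward as the period progresses, and the pair is sampled without replacement). I would resolve this by a monotone coupling: replace the true process by a lower bound process in which, for the duration of one period, each step independently succeeds with probability exactly $k/n$ (the value at the \emph{start} of the period), and observe that the true informed set stochastically dominates this lower bound (the true success probability is only larger, and a standard coupling makes the domination pathwise). The lower bound process is then a $\mathrm{Bin}(n, k/n)$ number of fresh informed nodes, to which Chernoff bounds apply directly. One must also check that the lower bound "start-of-period" probability $k/n$ is legitimate throughout the period, i.e. that the set does not cross $n/2$ mid-period while we are still in the exponential regime — but since we only need to reach $c\log n = o(\sqrt n)$, this is automatic. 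The remaining bookkeeping — choosing the constants $c'$ and the Chernoff parameters so the per-period failure probability is $n^{-\alpha}$ and survives a union bound over $O(\log n)$ periods — is routine, and I would simply cite \cite{KSSV00} for the structure of the argument, noting that the only change is the continuous-time sampling which the coupling above reduces to the synchronous case.
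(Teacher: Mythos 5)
Your overall route is genuinely different from the paper's, and it contains a real gap at the bootstrap stage. The paper does not use any growth/doubling argument for this lemma: it simply follows the single initially informed node $v$, notes that in $2c\log n$ periods $v$ is selected for at least $c\log n$ interactions with probability $1-n^{-\Omega(1)}$ (a Chernoff/binomial tail over $\Theta(\log n)$ periods, each of which selects $v$ with constant probability), and that with probability $1-O(\log^2 n/n)$ these $c\log n$ interaction partners are all distinct. Those partners are all informed, which already gives $|I| \ge c\log n$. The exponential-growth machinery you build is reserved in the paper for the next lemma (growth from $c\log n$ up to $n/3$), where the set is large enough for per-period concentration to actually hold.

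The gap in your version is precisely where you wave at "a trivial bound for the very first few periods." When $|I_t|=k=O(1)$, the expected number of new informed nodes in one period is $\Theta(k)=\Theta(1)$, so the event "the set at least doubles this period" fails with \emph{constant} probability (e.g.\ with probability about $e^{-2}$ the unique informed node is never selected at all). Hence "\whp\ the set at least doubles in each period as long as $|I_t|\ge 1$" is false for small $k$, and your proposed patch --- "\whp\ $I$ grows to size $\ge 2$ within $O(1)$ periods" --- is also false for the same reason: over $O(1)$ periods the initial node is unselected with constant probability. To escape the constant-size regime with failure probability $n^{-\Omega(1)}$ you must aggregate over $\Omega(\log n)$ periods, and at that point the clean "union bound over \whp-doubling periods" structure no longer applies; you would need either a renewal/successful-periods argument or, more simply, the paper's direct count of the initial node's distinct interaction partners. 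Your coupling idea for handling the dependence within a period is fine and would be the right tool for the regime $k=\Omega(\log n)$, but it does not rescue the $k=O(1)$ phase. (Minor: the per-step probability that exactly one endpoint is informed is $2k(n-k)/(n(n-1))$, not $4k(n-k)/(n(n-1))$; this does not affect the asymptotics.)
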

\begin{proof}
Let $v$ be the node which is informed at the beginning. In a period, this node is not selected with probability 
$(1-2/n)^n \approx 1/e^2$. This implies that $v$ is chosen for communication in $2c \log n$ steps less than
$c \log n$ times with probability at most 
$$\sum_{i= c\log n}^{2c \log n} \left( {{2c \log n} \atop i} \right) \frac{1}{e^{2i}} \left(1-\frac{1}{e^2}\right)^{2 c \log n -i}
< \frac{1}{n^2}$$
if $c$ is large enough. Furthermore, $v$ communicates in some step with a node, which has already been chosen for 
communication before, with probability at most $2 c \log n/n$. Thus, node $v$ chooses at least $c \log n$ different nodes 
for communication with probability at least $1-2(c \log n)^2 /n - 1/n^2$.
\end{proof}
Once $c \log n$ nodes are informed, we can show the following lemma.
\begin{lemma}
\label{lem:pp2}
Assume the number of informed nodes $|I(t)|$ at the beginning of period $t$ is at least $c \log n$ but at most 
$n/3$. Then, we have 
$|I(t+1)| \geq \frac{6}{5} |I(t)|$, \whp
\end{lemma}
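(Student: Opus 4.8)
The plan is to run the standard lower-bound argument for the growth of the informed set in a phone-call-type process~\cite{KSSV00}, adapted to this one-pair-per-step asynchronous model, together with a coupling that lets us stop caring about the process as soon as the target is reached. View period $t$ as a sequence of $n$ steps, let $I_j$ be the informed set after the $j$-th step of the period (so $I_0 = I(t)$ and $I_n = I(t+1)$), and write $m := |I(t)|$; by hypothesis $c\log n \le m \le n/3$, and $I_j$ is nondecreasing in $j$. The engine is a uniform lower bound on the chance that a given step produces a new informed node, valid as long as the target has not yet been reached. Suppose that before step $j$ we have $|I_{j-1}| < \tfrac65 m$. Then at least $n - \tfrac65 m \ge \tfrac{3n}{5}$ nodes are uninformed (this is where $m \le n/3$ is used), while at least $m$ nodes are informed (monotonicity), so there are at least $m\cdot\tfrac{3n}{5}$ pairs consisting of one informed and one uninformed node; since the pair selected in step $j$ is uniform over all $\binom n2$ pairs and independent of the past, step $j$ creates a new informed node — conditioned on the whole history up to step $j-1$, on the event $|I_{j-1}| < \tfrac65 m$ — with probability at least
\[
p^\ast \;:=\; \frac{m\cdot (3n/5)}{\binom n2}\;=\;\frac{6m}{5(n-1)}\;\ge\;\frac{6m}{5n}.
\]

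Next I would turn this conditional bound into a clean tail estimate by coupling with fresh independent coins. Draw i.i.d.\ $U_1,\dots,U_n$ uniform on $[0,1]$, independent of everything, and couple them with the period so that whenever $|I_{j-1}| < \tfrac65 m$ and $U_j < p^\ast$, step $j$ does create a new informed node (this is possible precisely because the conditional probability above is at least $p^\ast$). Consider the event $\mathcal{U} := \big\{\, |\{\, j\le n : U_j < p^\ast\,\}| \ge \tfrac{m}{5}\,\big\}$. On $\mathcal U$, either some $|I_j|$ reaches $\tfrac65 m$ (and then stays there, by monotonicity), or it never does, in which case every step $j$ with $U_j < p^\ast$ creates a new informed node, so there are at least $\tfrac m5$ such steps and hence $|I_n| \ge m + \tfrac m5 = \tfrac65 m$. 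Either way $|I_n|\ge\tfrac65 m$, so $\{|I(t+1)| < \tfrac65|I(t)|\} \subseteq \overline{\mathcal U}$.

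Finally, $|\{ j\le n : U_j < p^\ast\}|$ has the $\Bin(n,p^\ast)$ distribution with mean $np^\ast \ge \tfrac65 m \ge \tfrac65 c\log n$, and $\tfrac m5 \le \tfrac16\, np^\ast$, so a standard multiplicative Chernoff bound for the lower tail gives $\Pr[\overline{\mathcal U}] \le e^{-\Omega(np^\ast)} = e^{-\Omega(c\log n)} = n^{-\Omega(c)}$, which is at most $n^{-\alpha}$ once $c$ is taken large enough. This proves the lemma.

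The hard part is really just the coupling: one must check that the lower bound $p^\ast$ on the per-step success probability holds conditionally on the \emph{entire} history of the period so far (not merely marginally), and that it only needs to hold on the event $|I_{j-1}| < \tfrac65 m$ — which is exactly why we stop tracking the process once the target is reached, so that no awkward conditioning on "$I_{j-1}$ is large" ever arises. Everything else is routine. Note that the two hypotheses of the lemma play distinct roles: $|I(t)| \le n/3$ guarantees a linear supply of uninformed nodes for the per-step bound, while $|I(t)| \ge c\log n$ makes the binomial mean $\Omega(c\log n)$, which is what upgrades the Chernoff estimate to a polynomially small failure probability.
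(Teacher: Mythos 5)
Your proof is correct, but it follows a genuinely different route from the paper's. The paper argues at the level of whole periods and of nodes: it first shows (using the fact that a node is unselected in a period with probability about $1/e^2$, the $O(\log n)$ bound on the maximum number of selections of a single node, and the method of bounded differences) that \whp\ at most $|I(t)|/5$ of the initially informed nodes are never selected during the period; it then bounds the expected number of ``wasted'' communications in which a selected informed node meets an already-informed partner by roughly $8|I(t)|^2/(5n^2)+|I(t)|/5$ and applies a concentration bound to conclude that at least $|I(t)|/5$ new nodes are informed. That argument needs a side case ($|I(t)|>\log^4 n$ versus not) and leans on an external max-load estimate. You instead work step by step inside the period: as long as the informed set has not yet reached $\tfrac65 m$, each of the $n$ uniform pair-selections produces a new informed node with conditional probability at least $p^\ast = 6m/(5(n-1))$ given the whole history, so the number of successes stochastically dominates a $\Bin(n,p^\ast)$ variable via your coupling, and the stopping trick at $\tfrac65 m$ makes the domination clean. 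What your approach buys is a single elementary Chernoff application with no Azuma/bounded-differences machinery, no max-load citation, no case split on the size of $|I(t)|$, and an explicit treatment of the conditioning (the per-step bound holds given the entire past, and only on the event that the target has not yet been reached) — arguably a tighter and more self-contained argument than the one in the paper. The two hypotheses are used exactly as the paper uses them: $|I(t)|\le n/3$ supplies the linear pool of uninformed partners, and $|I(t)|\ge c\log n$ drives the failure probability down to $n^{-\Omega(c)}$.
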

\begin{proof}
As described above, a node is not chosen for communication within one period with probability at most 
$1/e^2$. We consider now one single period. If there are $|I(t)| > c \log n$ informed nodes at the 
beginning of period $t$, then since a node is chosen less than $\log n$ times for communication \cite{RS98}, 
the method of bounded differences implies
that more than $|I(t)|/5$ informed nodes are not chosen for communication with probability at most 
$$\exp \left(-\frac{|I(t)|^2 (1/5 -1/e^2)^2}{2|I(t)| \log^2 n}\right) < \frac{1}{n^2} .$$
Here we assume for simplicity that $|I(t)| > \log^4 n$, otherwise simple Chernoff bounds can be applied to show the 
statement.
Now we consider one single step, in which such an informed node is chosen. The other node is an informed node 
(from the set $I(t)$ or one that has been informed in some step of this period 
before), with probability at most 
$2|I(t)|/n$. Thus, the expected number of unsuccessful communications in a period is at most 
$8|I(t)|^2/(5 n^2) +|I(t)|/5$, \whp\ Applying Chernoff bounds, we obtain that the number of unsuccessful communications 
is at most $12 |I(t)|/15$, \whp, if $c$ is large enough and $|I(t)| < n/3$. This implies that the number of newly infomred 
nodes is at least $3|I(t)|/15$, \whp, and the lemma follows.  
\end{proof}
Then, we can show that within additional $O(\log n)$ periods, all nodes are informed, \whp
\begin{lemma}
\label{lem:pp3}
If $I(t)| > n/3$, then within additional $O(\log n)$ periods all nodes are informed.
\end{lemma}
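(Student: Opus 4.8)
The plan is to run the standard ``final phase'' argument of randomized rumour spreading: instead of tracking the informed set, track the uninformed set $U(t)=V\setminus I(t)$ and show that, once $|I(t)|>n/3$, every node still uninformed at the start of a period becomes informed by the end of that period with probability bounded below by a \emph{fixed} constant, independently of how large or small $U(t)$ currently is. A union bound over the $n$ nodes and over $O(\log n)$ periods then gives the claim.

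First I would record the monotonicity $I(t)\subseteq I(t+1)$, so that $|I(t')|>n/3$ for all $t'\ge t$. Now fix an uninformed node $u$ and one period (the next $n$ steps). In each step $u$ is one of the two selected nodes with probability $\approx 2/n$, independently across steps, so $u$ is selected at least once during the period with probability $1-(1-2/n)^n\ge 1-e^{-2}$. Consider the first step of the period in which $u$ is selected: by monotonicity at least $n/3$ nodes are informed at that moment, and $u$'s partner is uniform over the other nodes, hence informed — which informs $u$ — with probability at least $(n/3)/(n-1)\ge 1/3$. Therefore $\Pr[u\in U\text{ at the end of the period}\mid u\in U\text{ at the start}]\le \rho$, where $\rho\le \tfrac{2}{3}+\tfrac{1}{3}e^{-2}<\tfrac{3}{4}$; crucially this bound is uniform in the current value of $|U(t)|$.

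Since this conditional estimate holds for \emph{every} history in which $u$ is still uninformed at the start of the period (it uses only the invariant $|I|\ge n/3$ and the fresh randomness of the scheduler in that period), iterating gives $\Pr[u\text{ uninformed after }K\log n\text{ periods}]\le \rho^{K\log n}=n^{-cK}$ for a constant $c>0$. Choosing $K$ a sufficiently large constant and taking a union bound over all $n$ nodes, \whp\ every node is informed after $O(\log n)$ further periods, which is the lemma; together with Lemmas~\ref{lem:pp1} and~\ref{lem:pp2} this yields the $O(\log n)$-period broadcast bound used throughout the paper.

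The one point that needs care — and where a careless argument would break — is keeping the per-period informing probability bounded away from $0$ \emph{simultaneously} in the regime where $U(t)$ is still almost $2n/3$ and in the endgame where only a handful of uninformed nodes remain. Conditioning on $u$'s \emph{first} selection inside the period resolves this cleanly: at that instant the partner is informed with probability $\approx|I|/n\ge 1/3$ regardless of $|U(t)|$, so there is no need for a separate ``$U(t)$ shrinks geometrically'' phase nor any argument tailored to small $U(t)$. The independence of the scheduler's choices across distinct steps and periods, on which the conditional bound relies, is immediate from the uniform-random-scheduler assumption.
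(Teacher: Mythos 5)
Your proposal is correct and follows essentially the same route as the paper's proof: fix an uninformed node $u$, show that in each period it meets an informed node with probability bounded below by a constant (using $|I|\ge n/3$), iterate over $O(\log n)$ periods, and take a union bound over all nodes. Your bookkeeping (conditioning on $u$'s first selection within a period) is a slightly cleaner way to get the uniform per-period constant, but it is not a different argument.
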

\begin{proof}
Let $u$ be an uninformed node. In a period, $u$ is not chosen with probability 
$1/e^2$, and within $2c \log n$ periods, $u$ is chosen in at least $c \log n$ periods, \whp\
(see the proof of Lemma \ref{lem:pp1}.
In each of these periods, $u$ communicates with an informed node with probability at least $1/3$.
Thus, $u$ is not informed in these $c \log n$ periods with probability $1/3^{c \log n} < 1/n^2$, if 
$c$ is large enough. Applying the union bound over all nodes, we obtain the lemma.
\end{proof}
Lemmas \ref{lem:pp1}-\ref{lem:pp3} imply the following theorem.
\begin{theorem}
\label{the:pp1}
In a complete graph, a message is spread to all nodes within $O(\log n)$ periods, \whp
\end{theorem}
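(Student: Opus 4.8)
The plan is to concatenate Lemmas~\ref{lem:pp1}--\ref{lem:pp3}, each of which describes one stage of the broadcast, and to bound the overall error by a union bound over the $O(\log n)$ periods involved. First I would run the ``seeding'' stage: by Lemma~\ref{lem:pp1}, after $2c\log n$ periods at least $c\log n$ nodes are informed, \whp\ Second, starting from a period $t_0$ with $|I(t_0)|\ge c\log n$, I would apply Lemma~\ref{lem:pp2} repeatedly: as long as $c\log n \le |I(t)| \le n/3$, one period multiplies the number of informed nodes by a factor at least $6/5$, \whp, so after at most $\log_{6/5}(n/(c\log n)) = O(\log n)$ further periods the informed set first exceeds $n/3$ (overshooting is harmless, it only shortens this stage). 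Third, once the informed set exceeds $n/3$, Lemma~\ref{lem:pp3} gives that all nodes are informed within $O(\log n)$ more periods, \whp\ Adding the three stages yields $O(\log n)$ periods in total.

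For the error analysis I would track the failure events period by period. Lemma~\ref{lem:pp1} fails with probability $O(1/n^2)$, each single application of Lemma~\ref{lem:pp2} fails with probability $O(1/n^2)$ uniformly in the current value of $|I(t)|$, and Lemma~\ref{lem:pp3} fails with probability $n^{-\Omega(1)}$; chaining the applications of Lemma~\ref{lem:pp2} is legitimate because that lemma is stated conditionally on the configuration at the start of the period, which is exactly the configuration produced by the previous period. Since only $O(\log n)$ periods are involved, the union bound gives overall failure probability $O(\log n / n^2) = o(1)$, and, as is standard for these broadcast arguments, the constants can be tuned (by taking $c$ large enough) so that each per-period bound is $n^{-\alpha'}$ for any prescribed $\alpha'$; the combined bound is then $n^{-\alpha}$ for an arbitrarily large constant $\alpha$, which is the form of ``\whp'' used in the paper.

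The only real obstacle here — and it is a mild one — is the bookkeeping of this union bound across the geometric-growth stage: one must verify that the number of periods there is genuinely $O(\log n)$ and that the per-period failure probabilities coming from Lemma~\ref{lem:pp2} are small enough and uniform enough in $|I(t)|$ that their sum over all periods remains $n^{-\Omega(1)}$. Once that is in place, Theorem~\ref{the:pp1} follows by a direct concatenation of Lemmas~\ref{lem:pp1}--\ref{lem:pp3}.
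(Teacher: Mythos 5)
Your proposal is correct and is exactly the argument the paper intends: the paper simply states that Lemmas~\ref{lem:pp1}--\ref{lem:pp3} imply Theorem~\ref{the:pp1}, i.e.\ the same three-stage concatenation (seeding to $c\log n$ informed nodes, geometric growth by a factor $6/5$ per period up to $n/3$, then completion in $O(\log n)$ further periods) with a union bound over the $O(\log n)$ periods. Your explicit bookkeeping of the per-period failure probabilities just fills in what the paper leaves implicit.
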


\subsection{Simulation Results}

In this section we present our simulation results for both exact majority
and leader election on complete graphs. 
We developed our implementations  in the C++11 programming language. Our tests were
conducted on a 144-core machine 
at University of Salzburg and on 384-core and 1024-core
systems at the University of Linz.

{\bf Exact Majority}

We compared our algorithm to the protocols of
\cite{DBLP:conf/soda/AlistarhAEGR17} and \cite{Mertzios-etal-ICALP2014},
referring to the protocol in~\cite{DBLP:conf/soda/AlistarhAEGR17} as AAEGR
and to the protocol in~\cite{Mertzios-etal-ICALP2014} as MNRS,
while calling our protocol BCER.
For the two 
two constants $c$ and $C$ in our algorithm, we choose $c = 10$ and $C = 200$.

In Figure \ref{fig:majority-rounds}, we compared the number of rounds for all
algorithms. The x-axis shows the size of the graph and the y-axis the
average number of rounds per node.
For each $n$ we performed 30 different simulations, and
the initial majority was set to $A = \lfloor n/2 \rfloor + 1$.

We recall that the proven bounds 
on the expected (parallel) time are $O\left(\log^3 n\right)$  for the AAEGR
algorithm of~\cite{DBLP:conf/soda/AlistarhAEGR17} 
and $O\left(\log^2 n\right)$ for our BCER algorithm.
To compare the shape of the curves, we removed the constants for the actual
runtime and compared the normalized number of rounds for the algorithms (see
Figure \ref{fig:majority-rounds-normalized}).  That is, we normalized each
algorithm individually by computing the number of rounds for each of them at $n
= 1000$, and then we divided the values we obtained for different $n$ by this
base number of rounds.

\begin{figure}
\centering
\includegraphics[width=0.8\textwidth]{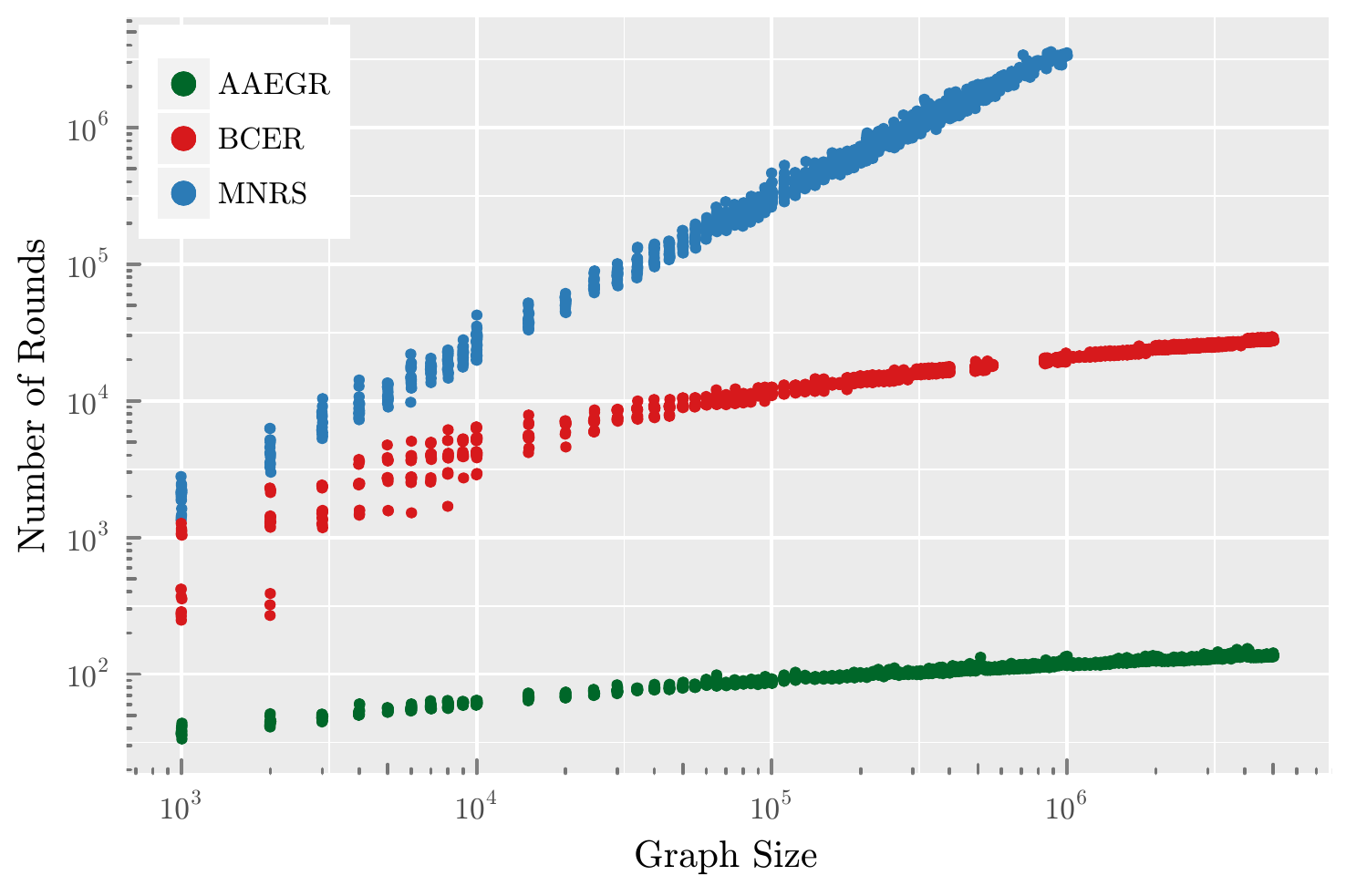}
\caption{Comparison of the number of rounds until exact majority was found. Each dot is a individual test run.}
\label{fig:majority-rounds}
\end{figure}

We observe in Figure~\ref{fig:majority-rounds} 
that the absolute round count for our algorithm is quite high in comparison with
the AAEGR protocol of~\cite{DBLP:conf/soda/AlistarhAEGR17}. 
Figure~\ref{fig:majority-rounds-normalized} suggests that both algorithms
complete the computation within
$\Theta(\log^2 n)$ expected number of rounds, but our algorithm
has a considerably higher constant factor in this bound.
It seems that this high constant factor is the price we pay for
provably guaranteeing the $O(\log^2 n)$ bound 
in expectation and with high probability.

\begin{figure}
\centering
\includegraphics[width=0.8\textwidth]{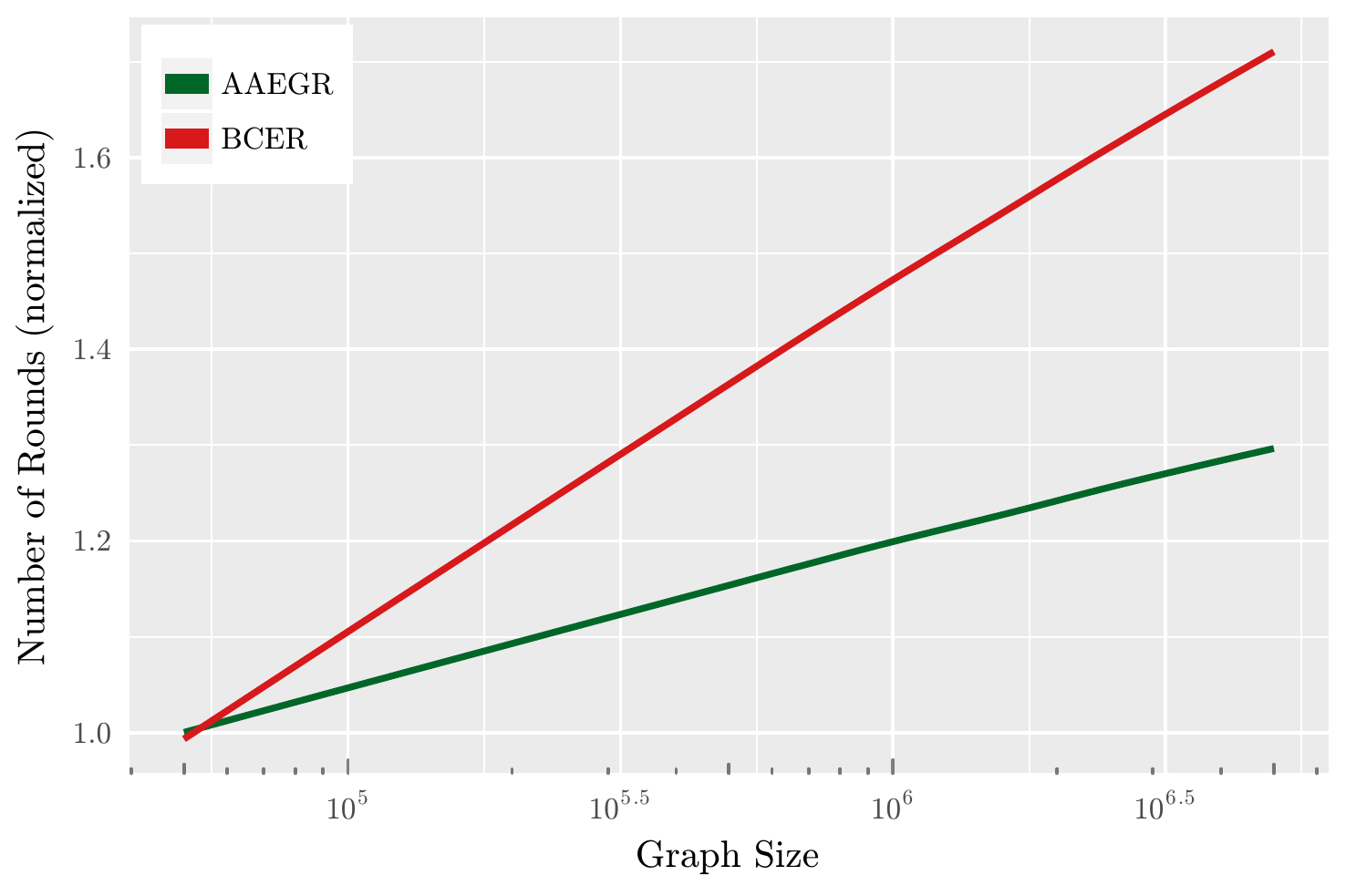}
\caption{Comparison of the normalized number of rounds until exact majority was found. We computed for each data point the square root and calculated the average for each $n$}
\label{fig:majority-rounds-normalized}
\end{figure}

{\bf Leader Election}

We compare our leader election algorithm with 
to the protocols proposed in
\cite{DBLP:conf/icalp/AlistarhG15} (protocol AG) and 
in~\cite{DBLP:conf/soda/AlistarhAEGR17} (protocol AAEGR).
Algorithm \cite{DBLP:conf/icalp/AlistarhG15} uses a state space of
$\Theta\left(\log^3 n\right)$ and each node starts with value 1. In each
round, two nodes compete and compare their values. The winner increases its
value and the looser changes to a minion state and decreases its value.

The algorithm of \cite{DBLP:conf/soda/AlistarhAEGR17} uses $O\left(\log^2
n\right)$ states. Each node can be in one of four modes. All nodes starts with
the same state in a 'seeding mode' where they utilize a syntetic coin flip to
generate randomness. In the subsequent 'lottery mode,' nodes try to
increase their state values. During the tournament mode, nodes compare their
values. The winner stays in competion, the looser changes to the minion mode and
helps the leader candidate to win the competion faster by propagating its value
to the other nodes.

Our algorithm contains several constants, which were optimized empirically.
Each node is in one of 8 phases (states $q_0$ to $q_7$) and the constants have
to be set so that no two nodes are in phases which are too far apart.  For
example, if a node is in state $q_0$ and another one is in state $q_2$, then
they are too far apart.
%
Starting with high constants (e.g.\ 100), we were gradually
decreasing them for each phase seperatly until we found the
smallest constant for each phase, where the property above is still valid. The
results are shown in Table~\ref{tbl:cer-leader-constants}.

\begin{table}
\centering
\begin{tabular}{cc}\toprule
Phase & Constant\\\midrule
$q_0$ & 2\\
$q_1$ & 4\\
$q_2$ & 8\\
$q_4$ & 8\\
$q_5$ & 20\\
$q_6$ & 48, for $2^{\text{nd}}$ index\\
$q_6$ & 4, for $3^{\text{rd}}$ index\\
$q_7$ & 2, for spreading the message\\
$q_7$ & 33, for keeping the message\\
\bottomrule
\end{tabular}
\caption{Constants used for the simulations.}\label{tbl:cer-leader-constants}
\end{table}

First, we compared the number of rounds for these three algorithms (see Figure
\ref{fig:leader-rounds}). Again, the x-axis is the graph size and the y-axis is
the average number of (parallel) rounds.
We show two plots for our algorithm.  The plot BCER is the running time of
our algorithm, where the time is counted until a single $q_9$-node appears.
We also plotted ``BCER w.h.p.'' where we
stopped the algorithm as soon as one single $q_6^{0, 0, 0}$-node appeared
and all other nodes were in the state $q_3$. This configuration leads to a single leader
with high probability according to our theoretical analysis, and has always led to a single 
leader in our simulations.

The algorithm AAEGR of \cite{DBLP:conf/soda/AlistarhAEGR17}, with expected
parallel time $O\left(\log^{5.3} n\right)$ shows a twofold behavior. In the
first fast case, only one node has at the end the highest $phase$-$level$
combination. In this case, the process converges quickly, since the minions
spread this information efficiently through the graph.  In the second slow case,
there are at least two nodes with the highest $phase$ value and none of them
increased its level during the tournament mode. In this case, these two nodes
have to meet in order to let one of them win the tournament. During our
tests, roughly one third of all runs was in the slow case.

\begin{figure}
\centering
\includegraphics[width=0.8\textwidth]{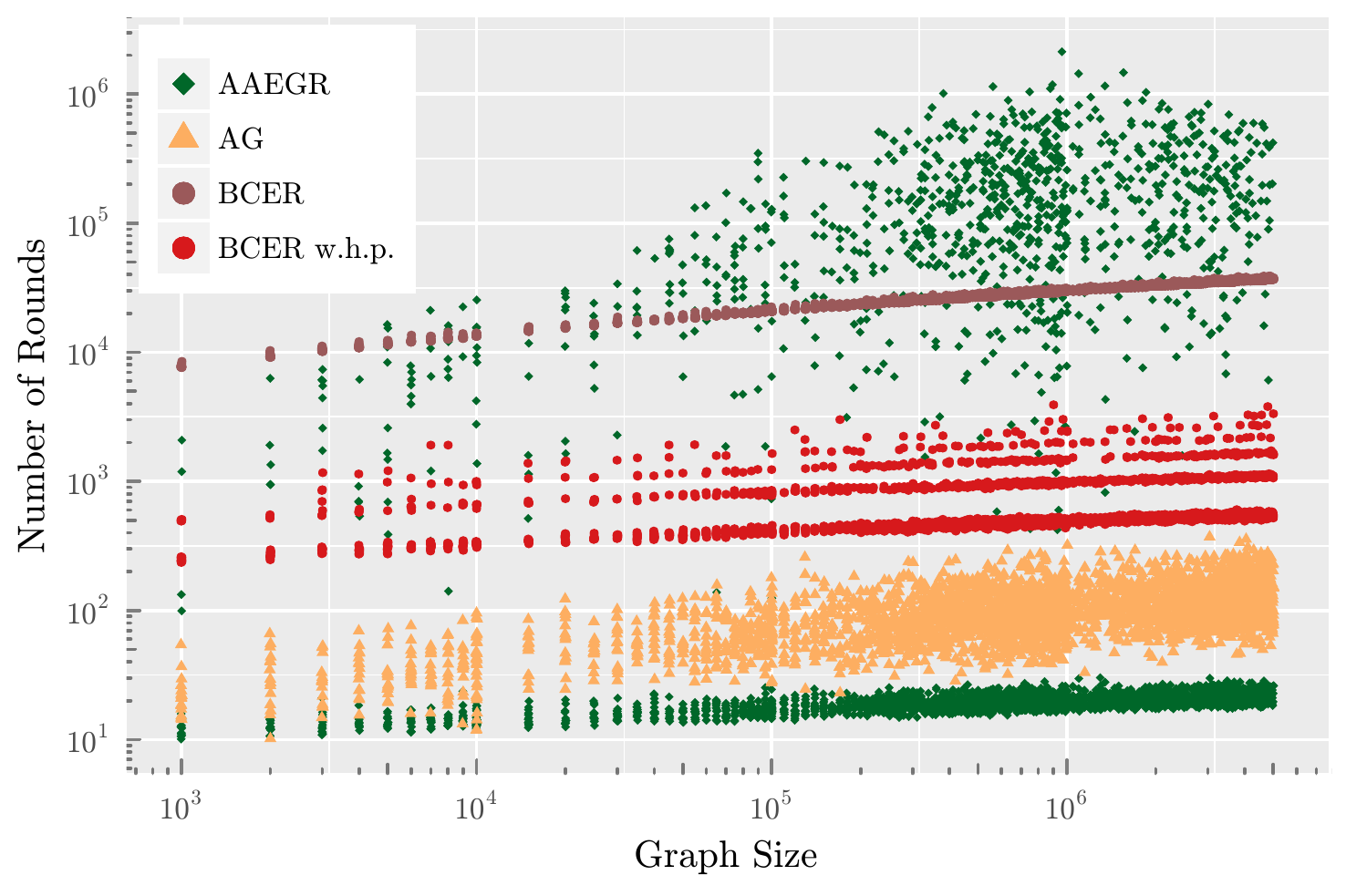}
\caption{Comparison of the number of rounds until a leader was found. Each dot is a individual test run.}
\label{fig:leader-rounds}
\end{figure}

In Figure \ref{fig:leader-rounds-normalized}, we show the normalized number of
rounds for each algorithm. The method to normalize these is the same as
in exact majority case.

\begin{figure}
\centering
\includegraphics[width=0.8\textwidth]{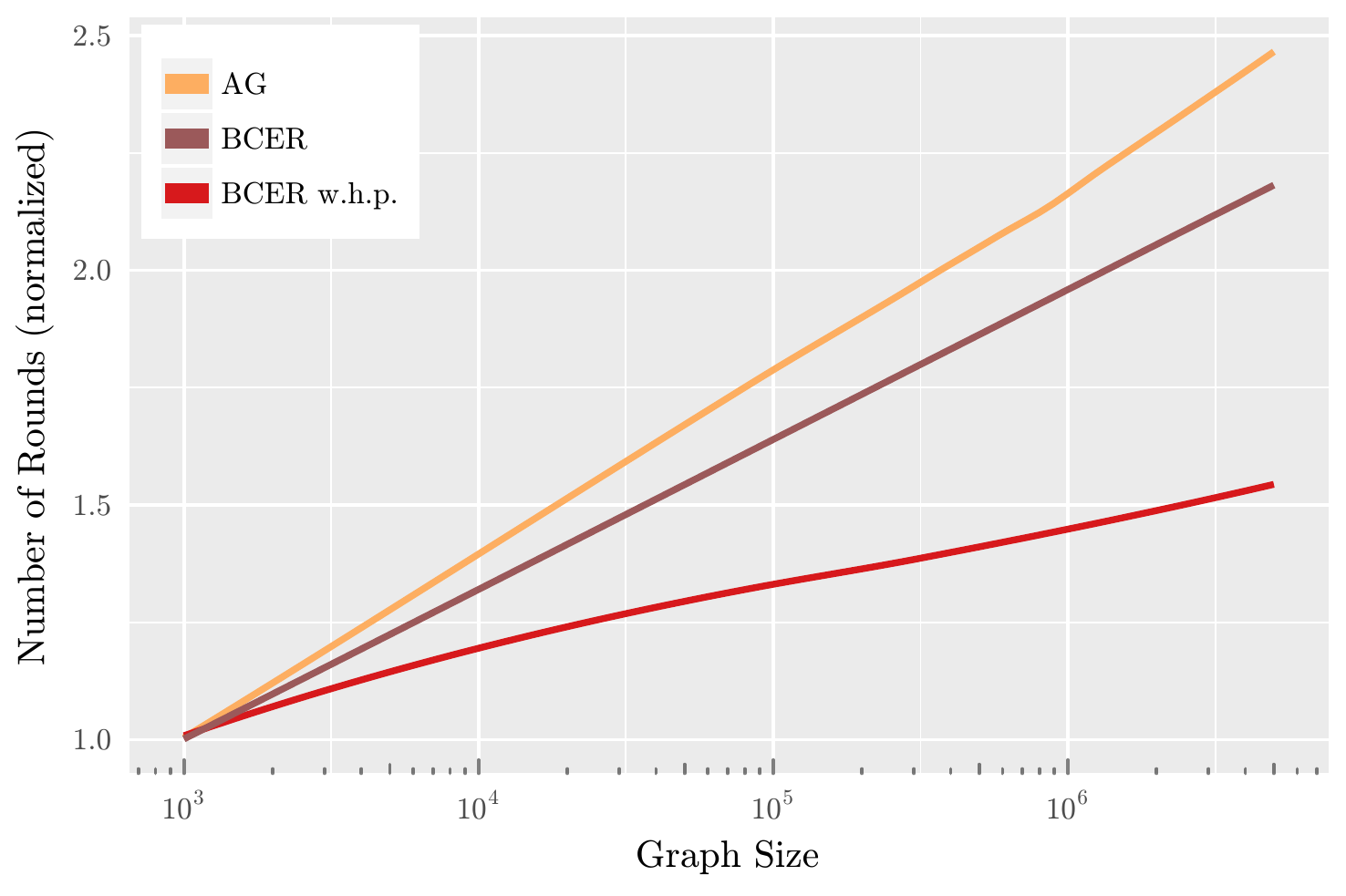}
\caption{Comparison of the normalized number of rounds until a leader was found. We computed for each data point the square root and calculated the average for each $n$}
\label{fig:leader-rounds-normalized}
\end{figure}

\end{document}